\definecolor{darkred}  {rgb}{0.5,0,0}
\definecolor{darkblue} {rgb}{0,0,0.5}
\definecolor{darkgreen}{rgb}{0,0.5,0}
\crefname{lemma}{Lemma}{Lemmas}
\crefname{proposition}{Proposition}{Propositions}
\crefname{definition}{Definition}{Definitions}
\crefname{theorem}{Theorem}{Theorems}
\crefname{conjecture}{Conjecture}{Conjectures}
\crefname{corollary}{Corollary}{Corollaries}
\crefname{section}{Section}{Sections}
\crefname{appendix}{Appendix}{Appendices}
\crefname{figure}{Figure}{Figures}
\crefname{equation}{Eq.}{Eqs.}
\crefname{table}{Table}{Tables}
\crefname{claim}{Claim}{Claims}
\crefname{item}{Item}{Items}
\newtheorem{theorem}{Theorem}
\newtheorem{lemma}[theorem]{Lemma}
\newtheorem{definition}[theorem]{Definition}
\newtheorem{corollary}[theorem]{Corollary}
\newtheorem*{conjecture*}{Conjecture}
\theoremstyle{definition}
\newcommand{\Lal}[1]{{\Lambda^{\!{#1}}}}
\newcommand{\U}[3]{{U({#1},{#2},{#3})}}
\newcommand{\w}[2]{{w_+({#1},{#2})}}
\newcommand{\Aa}[1]{{\tilde{A}^{#1}}}
\newcommand{\PtHx}{\Pi_{x}}
\newcommand{\PHx}{\Pi_{{H}(x)}}
\newcommand{\tp}{\ket{\psi_{x,0}}}
\newcommand{\tm}{\ket{\psi_{x,+}}}
\newcommand{\tpb}{\bra{\psi_{x,0}}}
\newcommand{\tpmbk}{\braket{\psi_{x,0}}{\psi_{x,+}}}
\newcommand{\q}{q_{G(x),s,t}}
\def\abs#1{\left| #1 \right|}
\def\norm#1{\left\| #1 \right\|}
\newcommand{\eps}{\varepsilon}
\newcommand{\ket}[1]{|#1\rangle}
\newcommand{\bra}[1]{\langle#1|}
\newcommand{\proj}[1]{|#1\rangle\!\langle#1|}
\DeclareMathAlphabet{\matheu}{U}{eus}{m}{n}
\newcommand{\sop}[1]{{\mathcal #1}}
\newcommand{\braket}[2]{\langle{#1}|{#2}\rangle}
\newcommand{\ketbra}[2]{|{#1}\rangle\!\langle{#2}|}
\newcommand{\ceil}[1]{\left\lceil{#1}\right\rceil}
\begin{document}

\title{Quantum Algorithm for Path-Edge Sampling}
\author[1]{Stacey Jeffery}
\author[2]{Shelby Kimmel}
\affil[1]{CWI \& QuSoft}
\author[1]{Alvaro Piedrafita}
\affil[2]{Middlebury College, \texttt{skimmel@middlebury.edu}}

\date{}

\maketitle

\begin{abstract} 
We present a quantum algorithm for sampling an edge on a path between two nodes $s$ and $t$ in an undirected graph given as an adjacency matrix, and show that this can be done in query complexity that is asymptotically the same, up to log factors, as the query complexity of detecting a path between $s$ and $t$. We use this path sampling algorithm as a subroutine for $st$-path finding and $st$-cut-set finding algorithms in some specific cases. Our main technical contribution is an algorithm for generating a quantum state that is proportional to the positive witness vector of a span program. \end{abstract}


\section{Introduction} 

Finding and detecting paths between two vertices in a graph are important
related problems, both in and of themselves, and as subroutines in other
applications, but there is still much to understand in this area. While
classically these problems seem to be equivalent, an intriguing question is
whether the same holds for quantum algorithms: there are cases where a
quantum algorithm can \emph{detect} a path between $s$ and $t$ in
significantly less time than any known quantum algorithm takes to 
\emph{find} such a path. In particular, path finding on a glued trees graph is one of
Aaronson's top ten open problems in query complexity 
\cite{childs2003exponential,aaronson2021open}, 
as the best known quantum algorithms that find an $st$-path in such graphs have 
exponentially worse running time than the best quantum algorithms for detecting 
one, and understanding how these problems are related could improve our understanding of why quantum computers achieve dramatic speedups for certain problems. As an example of more immediate practical interest: path finding in supersingular isogeny graphs is one approach to attacking cryptosystems based on supersingular isogenies~\cite{charlesCryptHash2007,defeoSupersingular2014}, but currently the best known attack of this form still takes exponential time~\cite{taniClaw2009} (see also~\cite{galbraithSupersingular2018}). 

In this paper, we consider the quantum query complexity of a somewhat
intermediate problem: finding an edge on an $st$-path in an undirected graph.\footnote{In this paper,
we use \emph{path} to refer to a self-avoiding path, meaning a path with no
repeated vertices.} In the classical case, it seems hard to imagine how one
could find an edge on an $st$-path without first finding an $st$-path, but we
show that in the quantum case, one can sample an $st$-path edge with similar
resources to what is needed to detect the existence of an $st$-path. In some
cases, this can be done with significantly fewer queries than the best
previously known path-finding algorithms. We show this ability to sample an
edge on a path has some useful applications, including to sabotaging
networks (finding $st$-cut sets) and to finding paths in certain graphs faster than
existing path finding algorithms.

Previously, D\"urr, Heiligman, H{\o}yer and Mhalla 
\cite{durr2006quantum} described an algorithm for connectivity in the adjacency
matrix model that uses $O(n^{3/2})$ queries for an $n$-vertex graph. Their algorithm works by keeping track of known connected components, and
then uses a quantum search to look for any edge that connects any two
components previously not known to be connected. While the authors use this
algorithm to decide connectivity, we note that after $O(n^{3/2})$ queries,
the algorithm will produce (with high probability) a list of the connected
components of the graph, as well as a set of edges for each component that is
a witness to that component's connectivity (a spanning tree). This data can
then be used to find a path from $s$ to $t$, if $s$ and $t$ are in the same
component. This algorithm uses $O(\log n)$ qubits and $O(n\log n)$ classical
bits, and applies to both directed and undirected graphs. 

However, the algorithm of D\"urr et~al.~does not take advantage of any
structure in the graph. This is in contrast to an undirected path \textit
{detection} quantum algorithm of Belovs and Reichardt~\cite{belovsSpanProgramsQuantum2012}, further analyzed and refined in~\cite{jarretQuantumAlgorithmsConnectivity2018,anderson2020leverage}, which, for example, can detect a path between vertices $s$
and $t$ with $\widetilde{O}(\sqrt{L}n)$ adjacency matrix queries when there is an 
$st$-path of length $L$, and even better in the case of multiple short paths, or in the case of certain promises when there is no path.
In fact, there are even sufficiently structured promises on the input for which this algorithm performs superpolynomially better than the best possible classical algorithm~\cite{jefferyQuantumAlgorithmsGraph2017}. 
While this path detection algorithm runs faster than $O(n^{3/2})$ in many cases, the algorithm does not output any
information about the $st$-path -- it simply determines whether a
path exists.

Our contribution is an algorithm that reproduces the query complexity of the Belovs-Reichardt undirected
path detection algorithm, even for structured inputs -- for example, our algorithm uses $\widetilde{O}(\sqrt
{L}n)$ queries when there is a path of length $L$ -- but now returns \emph
{some} information about edges on an $st$-path: namely, a path edge.\footnote{As we hinted at with our statement of advantages for the Belovs-Reichardt algorithm in the case of shorter and/or multiple paths, the Belovs-Reichardt algorithm for $st$-path detection actually has a complexity that depends on the structure of the graph in a more subtle way, replacing $L$ with an upper bound on the \emph{effective resistance} between $s$ and $t$, which is \emph{at most} the length of the shortest path between $s$ and $t$. This more subtle analysis also applies to our edge finding algorithm.} Specifically, our algorithm
outputs an $st$-path edge sampled with probability that depends on the
optimal $st$-flow between $s$ and $t$. This is how electrons would flow in an
electrical network if edges in the graph were replaced by wires with
resistors and a battery were connected between $s$ and $t$. For intuition, an
edge is more likely to be sampled if it is on \emph{more} or \emph
{shorter} paths. Thus, in the case of a single path between $s$ and $t$, our
algorithm samples each edge in the path with equal probability (up to some error in total variation distance). When there
are disjoint paths of different lengths, our algorithm is more likely to
sample an edge on a short path than a long path -- the probability of
sampling from a particular path of length $\ell$ is proportional to $1/\ell$.
(This means, unfortunately, that if there are many long paths, we might still
be more likely to sample an edge on some long path than an edge on a short
path). We prove that finding an $st$-path edge classically requires $\Omega
(n^2)$ queries in the worst case, even if promised that there is a path of
length $L$, as long as $L\geq 3$.

With the ability to quickly find edges on short paths, we can create an 
improved algorithm for \textit{finding} $st$-paths in undirected graphs
with a unique, short $st$-path. Given an adjacency matrix for an $n$-vertex graph,
if there is a unique $st$-path, whose (possibly unknown) length is $L$, we can find all of the edges
in the path in $\widetilde{O}(L^{1+o(1)}n)$ expected queries. When $L=o(\sqrt{n})$, this is an
improvement over the D\"urr et~al.~algorithm. In the general case that there
is more than one $st$-path, we prove that we can find all edges
in a single path in $\widetilde{O}(L^{3/2}n)$ queries when $L$ is the (possibly unknown) length of the 
\textit{longest} path (although our approach
in this case does not use the edge sampling algorithm as a subroutine).
 When $L=o(n^{1/3})$, this is an
improvement over the D\"urr et~al.~algorithm. 

We additionally use our sampling algorithm to find $st$-cut sets, in the case
that $s$ and $t$ are each part of
a highly connected component, and there are only a few edges connecting
those components. Because these few connecting edges are bottlenecks in the flow,
there will be a lot of flow over those connecting edges, and so a high probability
of sampling them, and hence finding an $st$-cut set. We describe a particular family
of $n$-vertex graphs were we can find such a cut set in $\widetilde{O}(n)$ queries, where
any classical algorithm would require $\Omega(n^2)$ queries.

Our edge sampling algorithm is a special case of a new span-program-based
algorithm (\cref{sec:witness-gen}) for generating quantum states called \emph{span program witness states} 
(or simply \emph{witness states}).
One of the
key elements of the analysis of span program algorithms for deciding Boolean functions \cite{reichardtReflectionsQuantumQuery2011} is the positive
witness (see \cref{def:posWitness}), which is a vector that witnesses that
the function evaluates a particular input to $1$. While in the usual span
program algorithm, the output on input $x$ is $f(x)$, in our case, we output
a quantum state proportional to the positive witness for input $x$. In the
case of the Belovs-Reichardt span program for $st$-connectivity~\cite
{belovsSpanProgramsQuantum2012}, a positive witness is a linear
combination of edges that are on paths between $s$ and $t$, where the
amplitudes depend on the optimal $st$-flow (see \cref
{def:st-flow}). Generating and then measuring such a state allows us to
sample $st$-path edges.

Our results more generally hold for the case where the input $x$ defines a subgraph $G(x)$ of some arbitrary graph $G$, that is not necessarily a complete graph. Although we do not attempt to analyze time complexity in this work, we suspect that our query algorithms on graphs are also time efficient when there
is an efficient way to perform a quantum walk on the underlying
graph $G$, as in~\cite{jefferyQuantumAlgorithmsGraph2017}. For example, when $G$ is the $n$-vertex complete graph (i.e. the oracle allows you
to query elements of the full adjacency matrix for a $n$-vertex graph, as we have been assuming throughout this introduction), there
is an efficient way to do this walk, and so in this case the time complexity
of our algorithms is likely the same as the query complexity, up to log factors.

\subsection{Future Directions}

A natural future direction is to try to use our edge finding technique for
path finding in more general settings than the ones we consider. One
surprising aspect of our algorithm is that it does not necessarily find edges in 
the order in which they appear in the path, and instead often finds edges 
in the middle of a path with high probability. 
The form of our algorithm thus seems to circumvent a recent lower bound
on path-finding in glued trees graphs that applies to algorithms that 
always maintain a path from the starting node
 to any vertex in the algorithm's state~\cite{childs2022quantum}.
 However, one
 reason to be pessimistic for this particular application is that in the glued trees graph, \emph{all} edges
 connected to the starting vertex
 are in some $st$-path. Still, we are hopeful that for some graphs, 
finding an
edge in the middle of some $st$-path opens up the possibility of new divide-and-conquer approaches for path finding. 

We are only able to take advantage of the fact that we sample edges 
according to the
optimal $st$-flow for very specific graphs, like those with a single path, or 
with bottleneck flows, but we hope that this edge sampling distribution will prove useful
in additional applications. In recent independent
work, Apers and Piddock \cite{apers2022elfs} develop a similar edge sampling
algorithm in the adjacency list model, which they use to analyze connections
between electric flows and quantum walks, and they prove that walks that
proceed via their edge sampling algorithm need only logarithmically many
rounds before they have a high probability of reaching a target vertex, on
trees. We believe that such edge sampling methods will likely find further applications. 

We have only applied our span program witness state generation algorithm to the span program for
path detection. Span program algorithms exist for a wide range of graph
problems, from bipartiteness \cite{cadeTimeSpaceEfficient2018} and cycle
detection \cite
{cadeTimeSpaceEfficient2018,delorenzoApplicationsQuantumAlgorithm2019a}, to
triangle \cite{carette2020extended} and other subgraph detection \cite
{lee2011learning}, to other combinatorial search problems \cite
{belovs2012learning,beigiQuantumSpeedupBased2019}. Perhaps the span program witness states for
these problems would be useful for certain
applications. Beyond span program algorithms, dual adversary algorithms (which are equivalent to span programs for decision problems, but generalize to state conversion problems \cite{leeQuantumQueryComplexity2011}) and
multidimensional quantum walks \cite
{jeffery2022multidimensional,jeffery2022quantum} all have a similar notion of
witnesses in their design and analysis. Similar techniques might yield
witness generation algorithms for these more general algorithm design
paradigms.

We suspect our path finding algorithms are not optimal, as for graphs with
longest paths of length $\Omega(n^{1/3})$, our algorithms do not outperform
D\"urr et~al.'s algorithm. We wonder whether it is possible to find paths
using $o(n^{3/2})$ queries whenever the longest path has length $o(n),$ or to
prove that this is not possible, perhaps by expanding on techniques for lower
bounding path-finding on welded trees \cite{childs2022quantum}.

Finally, all of our algorithms apply only to undirected graphs, while the
algorithm of \cite{durr2006quantum} applies equally well to directed 
or undirected graphs. While there are span program  algorithms
for problems on directed graphs (see e.g. \cite{beigiQuantumSpeedupBased2019}), they do not
exhibit the same speedups with short or many paths that the undirected
span program algorithms possess. It would be interesting to better
understand whether there are ways to obtain similar improvements in
query complexity for directed graphs.

\paragraph{Organization.}  In \cref{sec:witness-gen} we present our main
 technical result: an algorithm for generating a state proportional to a span
 program witness for $x$. In \cref{sec:applications}, we show how to apply
 this to finding a path edge (\cref{sec:edge-finding}), and give an example
 of a particular family of graphs in which the classical complexity of
 finding a path edge is quadratically worse than our quantum algorithm (\cref
 {thm:class_edge_finding}). In \cref{sec:sabotage}, we show how our edge
 finding algorithm can be applied to efficiently find an $st$-cut set in a particular family
 of graphs, and in \cref{sec:path-finding} we show how it can be applied to
 find an $st$-path in $\widetilde{O}(nL^{1+o(1)})$ queries when there is a \emph
 {unique} $st$-path of length $L$ (\cref{thm:singlePath}); and also give an algorithm for finding an $st$-path in general
 graphs in $\widetilde{O}(nL^{3/2})$ queries when $L$ is the length of
 the \emph{longest} $st$-path (\cref{thm:generalPathFinder}).

\section{Preliminaries}

We first introduce some basic notation. We let $\|\cdot\|$ denote the $l_2$
norm, $[m]\coloneqq\{1,2,3,\dots,m\}$, and let $\sop L(H,V)$ denote the set of linear operators from
the vector space $H$ to the vector space $V.$

\subsection{Span Programs}

Span programs are a linear algebraic model of computation, 
introduced in \cite{KW93}, 
that have proven extremely useful for analyzing query 
\cite{reichardtReflectionsQuantumQuery2011,reichardtSpanProgramsAre2014}, 
space \cite{jeffery2022span}, and time complexity 
\cite{belovsSpanProgramsQuantum2012,cornelissen2020span,beigi2022time} in 
quantum algorithms. We follow Ref.~\cite{itoApproximateSpanPrograms2019} closely 
in our definitions.

\begin{definition} [Span Program] For a finite set $R$, a span program on $R^m$ is a tuple 
$\mathcal{P}=(H,\mathcal{V},\ket{\tau},A)$ where
\begin{enumerate}
\item $H$ is a direct sum of finite-dimensional inner product 
spaces: $H=H_1\oplus H_2\cdots H_m\oplus H_\textrm{true}\oplus H_\textrm{false},$
and for $j\in [m]$ and $a\in R$, we have $H_{j,a}\subseteq H_j$, 
such that $\sum_{a\in R}H_{j,a}=H_j$;
\item $\mathcal{V}$ is a vector space;
\item $\ket{\tau}\in \mathcal{V}$ is a target vector; and
\item $A\in\sop L(H,\mathcal{V})$.
\end{enumerate} 
Given a string $x\in R^m$, we use $H(x)$ to denote the
 subspace $H_{1,x_1}\oplus\cdots \oplus H_{m,x_m}\oplus H_\textrm{true}$, and
 we denote by $\PHx$ the orthogonal projector onto the space $H(x)$.

\label{def:SP}
\end{definition}

An important concept in the analysis of span programs and quantum query complexity is that of \emph{witnesses}:
\begin{definition} [Positive Witness] Given a span program 
$\mathcal{P}=(H,\mathcal{V},\ket{\tau},A)$ on
 $ R^m$ and $x\in R^m$, $\ket{w}\in H(x)$ is a \emph{positive witness} for $x$
 in $\mathcal{P}$ if $A\ket{w}=\ket{\tau}$. If a positive witness exists for $x$, 
 we define
 the \emph{witness size} of $x$ in $\mathcal{P}$ as 
\begin{equation}
w_+(x)=\w{\mathcal{P}}{x}\coloneqq\min\left\{\|\ket{w}\|^2:\ket{w}\in H(x) \textrm{ and } A\ket{w}=\ket{\tau} \right\}.
\end{equation}
We say that $\ket{w}\in H(x)$ is the optimal positive witness 
for $x$ if $\|\ket{w}\|^2=\w{\mathcal{P}}{x}$ and $A\ket{w}=\ket{\tau}$.
\label{def:posWitness}
\end{definition}

Our main algorithm produces a normalized version of this unique optimal
positive witness, $\ket{w}/\|\ket{w}\|$. (To see that the optimal positive
witness is unique, for contradiction assume that the optimal positive witness
is not unique -- then a linear combination of two optimal positive witnesses
produces a witness with smaller witness size than either.) 

A span program $\mathcal{P}$ encodes a function $f:X\rightarrow\{0,1\}$ in the
following way. We say $f(x)=1$ if $x$ has a positive witness, and $f(x)=0$ if
$x$ does not have a positive witness. We say such a $\mathcal{P}$ decides the
function $f$.

We will also need the concept of an approximate negative witness.
\begin{definition} [Negative Error, Approximate Negative Witness]
Given a span program $\mathcal{P}=(H,\mathcal{V},\ket{\tau},A)$ on $ R^m$ and 
$x\in R^m$, we define the negative error of $x$ in $\mathcal{P}$ as
\begin{equation}
e_-(x,\mathcal{P})\coloneqq\min\left\{\|\bra{\widetilde{\omega}} A\PHx\|^2:\bra{\widetilde{\omega}}\in{\cal L}({\cal V},\mathbb{R}), \braket{\widetilde{\omega}}{\tau}=1 \right\}.
\end{equation}
 Note that $e_-(x,\mathcal{P})=0$ if and only if $\mathcal{P}$ decides a function
 $f$ with $f(x)=0$. Any $\bra{\widetilde{\omega}}$ such that 
 $\|\bra{\widetilde{\omega}} A\PHx\|^2=e_-(x,\mathcal{P})$ is
 called an approximate negative witness for $x$ in $P$. We define the
 approximate negative witness size of $x$ as:
\begin{equation}
\widetilde{w}_-(x,\mathcal{P})\coloneqq\min\left\{\|\bra{\widetilde{\omega}} A\|^2:\bra{\widetilde{\omega}}\in{\cal L}({\cal V},\mathbb{R}),\braket{\widetilde{\omega}}{\tau}=1,\|\bra{\widetilde{\omega}} A\PHx\|^2=e_-(x,\mathcal{P}) \right\}.
\end{equation}
We call an approximate negative witness $\bra{\widetilde{\omega}}$ that also minimizes 
$\|\bra{\widetilde{\omega}} A\|^2$ an optimal approximate negative witness.
\label{def:app_negWit}
\end{definition}

We use the following notation for maximum positive and approximate negative witness sizes:
\begin{align}
W_+(\mathcal{P},f)=W_+\coloneqq\max_{x\in f^{-1}(1)}\w{\mathcal{P}}{x},\qquad 
\widetilde{W}_-(\mathcal{P},f)=\widetilde{W}_-\coloneqq\max_{x\in f^{-1}(1)}\widetilde{w}_-(x,\mathcal{P}).
\end{align}
Note that we are restricting to 1-inputs of $f$. That is because our witness generation algorithm will assume that $x$ is a 1-input, unlike previous span-program-based algorithms that \emph{decide} $f$.

\subsection{Quantum Query Algorithms}\label{sec:prelim_quantum}

The algorithms we develop are query algorithms, where we can access a
unitary oracle $O_x$ for some $x\in X\subseteq R^m$ such that $O_x$ acts on
the space $\mathbb{C}^m\otimes \mathbb{C}^q$ as 
$O_x\ket{i}\ket{a}=\ket{i}\ket{x_i+a\mod q}$. where $q=|R|$, $x_i$ is the value of the $i^\textrm{th}$ element of $x$
and $\ket{i}\in \mathbb{C}^m$ and $\ket{a}\in \mathbb{C}^q$ are standard
basis states.

The query complexity of an algorithm is the number of times
$O_x$ must be used, in the worst case over $x\in X$. In our case, we will
also consider the expected query complexity on input $x$, 
which is the average number of
times $O_x$ must be used when given a particular input $x$, where the
randomness is due to random events in the course of the algorithm.

\subsection{Graph Theory and Connection to Span Programs} \label{sec:graph-theory}

Let $G=(V,E)$ be an undirected graph.\footnote{Our results easily extend to
multigraphs, see \cite{jarretQuantumAlgorithmsConnectivity2018}, but for
simplicity, we will not consider multigraphs here.}  We will particularly
consider graphs with specially labeled vertices $s,t\in V$, such that there
is a path from $s$ to $t$ in $G.$ Let 
$\overrightarrow{E}=\{(u,v):\{u,v\}\in E\}$; that is $\overrightarrow{E}$ is the set of directed edges
corresponding to the edges of $G.$ Given a graph $G=(V,E)$, for $u\in V$, we
denote by $G^-_u$ the subgraph of $G$ on the vertices 
$V\setminus\{u\}$, and with overloading of of notation for $S\subseteq E$, we denote by
$G^-_S$ the subgraph of $G$ with edges $S$ removed. (It will be clear from
context whether we are removing edges or vertices from the graph.) 

On a graph $G$ with $s$ and $t$ connected we will consider a \emph{unit $st$-flow}, which is a linear combination of cycles and $st$-paths, formally defined as a function on $\overrightarrow{E}$ with the following properties. 
\begin{definition}[Unit $st$-flow]\label{def:st-flow}
Let $G=(V,E)$ be an undirected graph with $s,t\in V(G)$, and $s$ and $t$ connected.
Then a \emph{unit $st$-flow} on $G$ is a function
$\theta:\overrightarrow{E}\rightarrow\mathbb{R}$ such that:
\begin{enumerate}
\item For all $(u,v)\in \overrightarrow{E}$, $\theta(u,v )=-\theta(v,u)$;
\item $\sum_{v:(s,v)\in \overrightarrow{E}}\theta(s,v)=\sum_{v:(v,t)\in \overrightarrow{E}}\theta(v,t)=1$; and 
\item for all $u\in V\setminus\{s,t\}$, $\sum_{v:(u,v)\in \overrightarrow{E}}\theta(u,v)=0$. 
\end{enumerate}
\end{definition}

\begin{definition}[Unit Flow Energy]\label{def:unitFlowEnergy}
Given a graph $G=(V,E)$ and a unit $st$-flow $\theta$ on $G$, the 
\emph{unit flow energy} of $\theta$ is 
$J(\theta)=\frac{1}{2}\sum_{e\in {\overrightarrow{E}}}\theta(e)^2.$

\end{definition}

\begin{definition}[Effective resistance] Let $G=(V,E)$ be a graph with $s,t\in V$.
If $s$ and $t$ are connected in $G$, the \emph{effective resistance} of
$G$ between $s$ and $t$ is $R_{s,t}(G) = \min_{\theta}
J(\theta)$, where $\theta$ runs over all unit $st$-unit flows of
$G$. If $s$ and $t$ are not connected in $G$, $R_{s,t}(G)=\infty.$
\label{def:effRes}
\end{definition}

\paragraph{Interpretation of the optimal flow} The $st$-flow with minimum
 energy is unique, and describes the electric current going through that edge
 if the graph represents a network of unit resistors and we put a potential
 difference between $s$ and $t$. The minimum energy flow has several other
 interpretations and connections to other graph properties. For reference,
 and for those who would like to build their intuition for this object, we
 have collected some of these relationships in \cref{app:Flow}.

\paragraph{Graph access} 
We turn graph problems into oracle problems by
 letting a string $x\in\{0,1\}^m$ specify a subgraph $G(x)$ of $G$. In
 particular, we associate each edge $e\in E$ with a number in $[m]$. Then,
 given a string $x\in\{0,1\}^m,$ let $G(x)=(V,E(x))$ be the subgraph of $G$
 that contains an edge $e\in E$ if $e$ is associated with the integer $i\in
 [m]$ and $x_i=1$, where $x_i$ is the $i$th bit of $x$. In this oracle
 problem, one is given access to an oracle $O_x$ for $x$ (or classically,
 given the ability to query the values of the bits of $x$ one at a time), and a
 description of the parent graph $G$ along with the association between
 bits of $x$ and edges of $G$, and the goal is to determine something
 about the graph $G(x)$ using as few queries as possible. Let $E_i\subset E$
 be the set of edges associated with the $i$th bit of $x.$ When not specified otherwise,
 one should assume that $m=|E|$, and then associate each edge of $G$ uniquely with a bit of the input string. In
 this case, when $G$ is the complete graph, $O_x$ is equivalent to query access to the adjacency
 matrix of a graph. When we consider subgraphs of the original graph
 (like $G^-_u$), we assume that the edges are associated with the same
 indices as in the original graph, unless otherwise specified.

Most of the applications in this paper are related to the problem of detecting a path between $s$ and $t$ -- more commonly called
$st$-connectivity. We define $st\textsc{-conn}_G(x)\coloneqq1$ if $s$ and $t$
are connected in $G(x)$, and $0$ otherwise. The following span program, which
we denote by ${\mathcal P}_{G_{st}}$, first
introduced in Ref.~\cite{KW93} and used in the quantum setting in 
Ref.~\cite{belovsSpanProgramsQuantum2012}, decides $st\textsc{-conn}_G(x)$:
for a graph $G=(V,E)$, where $m=|E|$, define the span program ${\mathcal P}_{G_{st}}$ as:
\begin{equation}
\begin{split}
&\forall i\in [m], H_{i,1} = \mathrm{span}\{\ket{(u,v)}:\{u,v\}\in E_i\}, H_{i,0} =\emptyset\\
&\mathcal{V} = \mathrm{span}\{\ket{v}:v\in V(G)\}\\
&\ket{\tau} = \ket{s} - \ket{t}\\
&\forall (u,v)\in\overrightarrow{E}:\; A\ket{u,v} = \ket{u} - \ket{v}.
\end{split}
\label{eq:st-conn-span-program}
\end{equation}

For $\mathcal{P}_{G_{st}}$, the negative approximate 
witness size is bounded by
$\widetilde{W}_-=O(n^2)$ \cite{itoApproximateSpanPrograms2019}. 
If $s$ and $t$ 
are connected in $G(x)$, the optimal positive 
witness of $x$ in ${\cal P}_{G_{st}}$ is 
\cite{belovsSpanProgramsQuantum2012,jarretQuantumAlgorithmsConnectivity2018} 
\begin{equation}\label{eq:optimal_flow_state}
\ket{\theta^*}=\frac{1}{{2}}\sum_{e\in \overrightarrow{E}}\theta^*(e)\ket{e},
\end{equation}
where $\theta^*$ is the $st$-unit flow with minimal energy, so by 
\cref{def:posWitness,def:effRes}, $w_+({\cal P}_{G_{st}},x)=\frac{1}{2}R_{s,t}(G(x)).$

One of our main applications is to apply our witness state generation 
algorithm to the span program ${\cal P}_{G_{st}}$, in which case, we
produce a quantum state close to $\ket{\theta^*}/\|\ket{\theta^*}\|$ where
$\theta^*$ is the optimal unit $st$-flow on $G(x)$. If we were to create
$\ket{\theta^*}/\|\ket{\theta^*}\|$ exactly, and then measure
in the standard basis, the probability that we obtain the edge $e$
is $\theta^*(e)^2/(2R_{s,t}(G(x)))$. Let $\q$ denote the 
distribution such that for $\forall e\in \overrightarrow{E}$, 
\begin{equation}\label{eq:optimal_distribution}
\q(e)=\theta^*(e)^2/(2R_{s,t}(G(x))).
\end{equation}

Additionally, this optimal flow $\theta^*$ is a convex combination of (self-avoiding) $st$-paths, as we prove in \cref{app:Flow}:
\begin{restatable}{lemma}{flowpath}\label{lem:flow-paths}
An \emph{$st$-path} in $G(x)$ is a sequence of \emph{distinct} vertices $\vec{u}=(u_0,\dots,u_{\ell})$ such that $s=u_0$, $t=u_{\ell}$, and for all $i\in [\ell]$, $(u_{i-1},u_i)\in \overrightarrow{E}(G(x))$. From $\vec{u}$, we define
\begin{equation}
\ket{\rho_{\vec{u}}} = \frac{1}{\sqrt{2}}\sum_{i=0}^{\ell-1}(\ket{u_i,u_{i+1}} - \ket{u_{i+1},u_i})
\end{equation}
and refer to all such states as \emph{$st$-path states of $G(x)$}. 
Then if $\ket{\theta^*}$ is the optimal positive witness for $x$ in ${\cal P}_{G_{s,t}}$,
it is a linear combination of $st$-path states in $G(x)$. 
\end{restatable}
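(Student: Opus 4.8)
The plan is to show that the optimal flow $\theta^*$ can be written as a convex combination of $st$-paths (as flows), and then translate this decomposition of flows into a linear combination of path states. First I would recall the classical flow-decomposition fact: any unit $st$-flow $\theta$ on $G(x)$ can be decomposed as $\theta = \sum_{\vec u} \lambda_{\vec u}\, \theta_{\vec u} + (\text{cycles})$, where the sum is over self-avoiding $st$-paths $\vec u$, each $\theta_{\vec u}$ is the unit flow that sends one unit along $\vec u$, $\lambda_{\vec u} \ge 0$, and $\sum_{\vec u}\lambda_{\vec u} = 1$. The standard proof is by induction on the number of edges with nonzero flow: orient every edge in the direction of positive flow, repeatedly trace a directed walk from $s$; either it reaches $t$, yielding a self-avoiding $st$-path (after deleting any repeated vertices, which only removes cycles) that we peel off with weight equal to the minimum flow along it, or it returns to a previously visited vertex, yielding a directed cycle that we peel off; in either case the number of nonzero-flow edges strictly decreases.

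Next I would invoke the extra property that $\theta^*$ is the \emph{minimum-energy} flow to kill the cycle terms. The key observation is that if $\theta^*$ had a nonzero cyclic component $\theta_C$ (a flow supported on a directed cycle $C$), then $\theta^* - \delta\theta_C$ is still a unit $st$-flow for small $\delta$, and a short computation shows $J(\theta^* - \delta\theta_C) = J(\theta^*) - \delta \sum_{e\in C}\theta^*(e) + O(\delta^2)$ along $C$'s orientation; choosing the sign of the traversal so that $\sum_{e\in C}\theta^*(e)\ge 0$ (and if it is $0$, one can still shift all the way to zero out an edge without increasing energy, since the energy is a strictly convex quadratic in $\delta$ restricted to this line and its minimum is interior or we reduce support), we contradict minimality unless there are no cycles. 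Hence $\theta^* = \sum_{\vec u}\lambda_{\vec u}\,\theta_{\vec u}$ with $\lambda_{\vec u}\ge 0$ and $\sum \lambda_{\vec u}=1$.

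Finally I would pass to the witness states. For a single self-avoiding path $\vec u = (u_0,\dots,u_\ell)$, its unit flow $\theta_{\vec u}$ satisfies $\theta_{\vec u}(u_{i-1},u_i)=1$, $\theta_{\vec u}(u_i,u_{i-1})=-1$, and $0$ on all other directed edges, so by the definition in \eqref{eq:optimal_flow_state}, $\tfrac12\sum_{e\in\overrightarrow E}\theta_{\vec u}(e)\ket e = \tfrac12\sum_{i=0}^{\ell-1}\bigl(\ket{u_i,u_{i+1}} - \ket{u_{i+1},u_i}\bigr) = \tfrac{1}{\sqrt2}\ket{\rho_{\vec u}}$. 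Since the map $\theta\mapsto \tfrac12\sum_e\theta(e)\ket e$ is linear, applying it to $\theta^* = \sum_{\vec u}\lambda_{\vec u}\theta_{\vec u}$ gives $\ket{\theta^*} = \sum_{\vec u}\tfrac{\lambda_{\vec u}}{\sqrt2}\ket{\rho_{\vec u}}$, which exhibits $\ket{\theta^*}$ as a linear combination of $st$-path states of $G(x)$, as claimed.

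I expect the main obstacle to be the cycle-elimination step: one must be careful that the flow decomposition interacts correctly with minimal energy, in particular handling cycles $C$ with $\sum_{e\in C}\theta^*(e)=0$ (where the first-order term vanishes) by using strict convexity of $\delta\mapsto J(\theta^*-\delta\theta_C)$ to argue we can still reduce the support without increasing energy, contradicting that $\theta^*$ is the \emph{minimum-energy} flow among those of minimal support — or, more cleanly, simply observing that the minimum-energy flow is orthogonal (in the energy inner product, which here is the standard one) to every cycle space element, so it has no cyclic component at all. The rest of the argument is routine bookkeeping once the classical flow decomposition is in hand.
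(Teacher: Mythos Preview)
Your approach is correct and takes a genuinely different route from the paper's. The paper argues linear-algebraically: it adds a virtual edge $\{s,t\}$ to form $\overline{G}(x)$, shows that $\ket{\theta^*}-\tfrac{1}{\sqrt 2}\ket{e_{s,t}}$ is a circulation in $\overline{G}(x)$, and then uses that circulations lie in $\ker A$ together with orthogonality of the optimal witness to $\ker A\cap H(x)$ (and to $\ket{e_{s,t}}$) to conclude $\ket{\theta^*}\in\mathcal{P}_{s,t}(x)$. Your route via classical flow decomposition plus acyclicity of the electrical flow is more combinatorial and arguably more direct; as a bonus it yields the convex structure ($\lambda_{\vec u}\ge 0$, $\sum\lambda_{\vec u}=1$), which the paper's argument does not.

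One point to tighten: the ``zero case'' you flag as the main obstacle is a red herring, and your proposed handling of it (strict convexity, ``minimum-energy among flows of minimal support'') does not work as stated --- if $\langle\theta^*,\theta_C\rangle=0$ then perturbing along $\theta_C$ strictly \emph{increases} the energy, and $\theta^*$ is the unique energy minimizer, not merely minimal among small-support flows. The clean fix is to reorder your steps. First, the variational/orthogonality argument shows $\theta^*\perp\theta_C$ for every cycle $C$. This already forces $\theta^*$ to be \emph{acyclic}: if there were a directed cycle $C$ with $\theta^*(e)>0$ on every edge, then $\langle\theta^*,\theta_C\rangle>0$, a contradiction. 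Now run flow decomposition on the DAG of positive-flow edges: starting from $s$ you can never revisit a vertex, so every peel is a self-avoiding $st$-path; peeling a path only deletes edges from the positive support, so the DAG property persists, and the residual (an acyclic circulation) must be zero. Hence $\theta^*=\sum_{\vec u}\lambda_{\vec u}\theta_{\vec u}$ with no cycle terms, and your Step~3 finishes the proof. Note also that your ``cleaner observation'' --- that $\theta^*\perp\mathcal{C}$ directly means ``no cyclic component'' --- is not valid on its own: for subspaces $A,B$ one can have $v\in A+B$ and $v\perp B$ yet $v\notin A$. The acyclicity step is exactly what bridges orthogonality to a cycle-free decomposition.
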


A final pair of tools we use are a quantum algorithm that decides 
$st$-$\textsc{conn}_G(x)$ with fewer queries in the case of small effective resistance, without
knowing the effective resistance ahead of time, and a quantum algorithm for estimating the effective resistance:
\begin{lemma}[\cite{anderson2020leverage}]\label{cor:st-conn}
Fix $\delta>0$ and a family of $n$-vertex graphs $G$ with vertices $s$ and $t$.
Then there is a quantum algorithm \emph{\texttt{PathDetection}}$(O_x,G,s,t,\delta)$ such that, 
\begin{enumerate}
    \item The algorithm returns $st$-$\textsc{conn}_G(x)$ with probability $1-O(\delta)$.
    \item On input $x$, the algorithm uses $O\left(n\sqrt{R_{s,t}(G(x))}\log\left(\frac{n}{R_{s,t}(G(x))\delta}\right)\right)$ expected queries if $st$-$\textsc{conn}_G(x)=1$, and
     $O\left(n^{3/2}\log{1/\delta}\right)$ expected queries if $st$-$\textsc{conn}_G(x)=0$.
\end{enumerate}
\end{lemma}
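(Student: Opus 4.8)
The plan is to recognize this as (a restatement of) the $st$-connectivity algorithm of \cite{jarretQuantumAlgorithmsConnectivity2018,anderson2020leverage}, obtained by wrapping the standard span program decision algorithm \cite{reichardtReflectionsQuantumQuery2011,itoApproximateSpanPrograms2019} around a geometric search over a guess for the (a priori unknown) positive witness size. The relevant facts about $\mathcal{P}_{G_{st}}$ are already in hand: the approximate negative witness size obeys the \emph{universal} bound $\widetilde{W}_-=O(n^2)$ \cite{itoApproximateSpanPrograms2019}, while the positive witness size of a $1$-input is $w_+(\mathcal{P}_{G_{st}},x)=\tfrac12 R_{s,t}(G(x))$ (\cref{def:posWitness,def:effRes}), which is at most $n/2$ since every $n$-vertex graph in which $s,t$ are connected has an $st$-path of length at most $n-1$.

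First I would record the guarantee of the decision algorithm run with a \emph{guessed} parameter $\widehat{W}_+$ together with the true bound $\widetilde{W}_-=O(n^2)$: it uses $O\!\left(n\sqrt{\widehat{W}_+}\,\log(1/\eps)\right)$ queries; it accepts every $1$-input with $w_+(\mathcal{P}_{G_{st}},x)\le\widehat{W}_+$ with probability $\ge1-\eps$; and---crucially---it \emph{rejects} every $0$-input with probability $\ge1-\eps$ \emph{regardless} of $\widehat{W}_+$. This last property legitimizes the search: a too-small guess can only cause a false negative on a $1$-input, never a false positive on a $0$-input. It follows from the effective spectral gap lemma \cite{leeQuantumQueryComplexity2011}: the algorithm does phase estimation with precision $\Theta(1/\sqrt{\widehat{W}_+\widetilde{W}_-})$, and on a $0$-input the weight the initial state places near phase $0$ is bounded in terms of $\widehat{W}_+\,\widetilde{w}_-(x)\le\widehat{W}_+\widetilde{W}_-$, exactly the quantity the precision is tuned against; this is the point at which the \emph{universal} $\widetilde{W}_-$, rather than an instance-dependent negative witness size, is needed.

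Then I would assemble the algorithm: put $K=\lceil\log_2 n\rceil$ and, for $k=0,1,\dots,K$, run the decision algorithm with $\widehat{W}_+=2^k$ and error parameter $\eps_k=\tfrac{\delta}{2}\,2^{-(K-k)}$; output $1$ and halt as soon as an iteration accepts, and output $0$ if all $K+1$ iterations reject. Since $\sum_{k=0}^{K}\eps_k<\delta$, a $0$-input is rejected in every iteration except with probability $O(\delta)$; for a $1$-input, with $R\coloneqq R_{s,t}(G(x))$, every guess $2^k\ge R/2$ is valid, so from $k^*\coloneqq\lceil\log_2(R/2)\rceil$ onward each iteration accepts except with probability $\le\eps_k$, hence some iteration accepts except with probability $O(\delta)$ (and an accept at smaller $k$ is still correct). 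This gives correctness with probability $1-O(\delta)$. Iteration $k$ costs $O\!\left(n\,2^{k/2}\,(K-k+\log(1/\delta))\right)$ queries. On a $0$-input all $K+1$ iterations are run and the geometric factor $2^{k/2}$ makes the total dominated by $k=K$, i.e.\ $O(n^{3/2}\log(1/\delta))$. On a $1$-input the expected number of iterations is $k^*+O(1)$, since each iteration past $k^*$ is reached only if the previous one rejected, which for these valid-guess iterations has probability $\le\delta/2\le\tfrac12$; hence the expected query cost is dominated by iteration $k^*$, giving $O\!\left(n\,2^{k^*/2}\,(K-k^*+\log(1/\delta))\right)=O\!\left(n\sqrt{R}\,\log\!\big(n/(R\delta)\big)\right)$, using $2^{k^*}=\Theta(R)$ and $K-k^*=\Theta(\log(n/R))$; the geometrically-decaying tail of later iterations contributes only another $O(n\sqrt{R}\log(n/(R\delta)))$ and is absorbed.

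The only genuinely non-mechanical step is the robustness claim of the second paragraph---that the decision subroutine rejects $0$-inputs for \emph{every} guessed $\widehat{W}_+$---without which the search could accept a disconnected graph at a small guess; proving it is where one invokes the effective spectral gap lemma with the universal $\widetilde{W}_-=O(n^2)$. The remaining subtlety is bookkeeping: the \emph{geometric} error allocation $\eps_k\propto 2^{-(K-k)}$, which loads almost the entire budget onto the last (most expensive) iteration, is what produces the precise $\log(n/(R_{s,t}(G(x))\delta))$ factor and forces the bound on $1$-inputs to be stated in expectation---a worst-case bound would degrade to $O(n^{3/2}\log(1/\delta))$ even when $s,t$ are connected.
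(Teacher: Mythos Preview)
The paper does not prove this lemma; it is stated as a black-box result cited from \cite{anderson2020leverage}, so there is no in-paper proof to compare your proposal against. Your sketch is correct and is indeed the construction underlying the cited reference: a geometric search over a guessed positive witness size, with soundness on $0$-inputs guaranteed uniformly in the guess via the effective spectral gap lemma together with the instance-independent $O(n^2)$ bound on the negative witness size for $\mathcal{P}_{G_{st}}$, and the geometric error schedule $\eps_k\propto 2^{-(K-k)}$ yielding the stated $\log(n/(R\delta))$ factor.

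One small remark: the quantity that controls rejection of $0$-inputs is the \emph{exact} negative witness size $W_-=\max_{x\in f^{-1}(0)}w_-(x)$, not the approximate $\widetilde{W}_-$ (which this paper defines as a maximum over $1$-inputs only); for $\mathcal{P}_{G_{st}}$ both are $O(n^2)$, so your argument goes through unchanged, but strictly speaking your robustness claim in the second paragraph rests on the former bound.
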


\begin{lemma}[\cite{itoApproximateSpanPrograms2019}]\label{thm:witness-est}
Fix $\delta>0$ and a family of $n$-vertex graphs $G$ with vertices $s$ and $t$.
Then there is a quantum algorithm \emph{\texttt{WitnessSizeEst}}$(O_x,G,s,t,\epsilon,\delta)$ that, on input $x$ such that $st$-$\textsc{conn}_G(x)=1$,
with probability $1-\delta$, outputs an estimate $\hat{R}$ for 
$R_{{s,t}}(G(x))$ such that
\begin{equation}
\left|\hat{R}-R_{{s,t}}(G(x))\right|\leq \epsilon R_{{s,t}}(G(x)),
\end{equation}
using $\widetilde{O}\left(\sqrt{\frac{R_{{s,t}}(G(x))n^2}{\epsilon^3}}\log(1/\delta)\right)$ expected queries; and on input $x$ such that $st$-$\textsc{conn}_G(x)=0$, uses at most $\widetilde{O}\left((n/\epsilon)^{3/2}\log(1/\delta)\right)$.
\end{lemma}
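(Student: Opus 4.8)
The plan is to obtain this statement by instantiating the generic span-program witness-size estimation algorithm of Ito and Jeffery~\cite{itoApproximateSpanPrograms2019} on the $st$-connectivity span program $\mathcal{P}_{G_{st}}$ of \cref{eq:st-conn-span-program}. Recall from the discussion above that for this span program one has $w_+(\mathcal{P}_{G_{st}},x)=\tfrac12 R_{s,t}(G(x))$ whenever $st\textsc{-conn}_G(x)=1$, that $\widetilde{W}_-=O(n^2)$, and that $W_+=\max_x w_+(\mathcal{P}_{G_{st}},x)=O(n)$, since the effective resistance between two nodes of a connected $n$-vertex graph is at most the length of a shortest path, hence at most $n-1$. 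So estimating $R_{s,t}(G(x))$ to multiplicative error $\epsilon$ is exactly estimating $w_+(x)$ to multiplicative error $\epsilon$, and it suffices to state the generic estimation routine and plug in these three parameters.

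Next I would recall the structure of the generic routine. One forms the span-program unitary $U=U(\mathcal{P},x)=(2\Pi_{\ker A}-I)(2\PHx-I)$, a single application of which costs $O(1)$ queries to $O_x$ via the standard implementation of the reflection through $H(x)$. One then prepares the (input-independent) normalized state $\ket{w_0}\propto A^+\ket{\tau}$, i.e.\ the minimal-norm preimage of the target vector, and runs phase estimation of $U$ on $\ket{w_0}$. By the effective spectral gap lemma, the weight that $\ket{w_0}$ places near phase $0$ together with the size of the phase gap of $U$ around $0$ jointly encode $w_+(x)$; resolving phases to precision $\sim\epsilon/\sqrt{w_+(x)\,\widetilde{W}_-}$ and then amplitude-estimating the resulting low-phase weight to relative precision $\sim\epsilon$ yields an estimate $\hat{R}$ with $|\hat{R}-R_{s,t}(G(x))|\le\epsilon R_{s,t}(G(x))$. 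Accounting for the $1/\epsilon$ from the phase resolution, the extra $1/\sqrt{\epsilon}$ from amplitude-estimating an $O(\epsilon)$-scale quantity, and an $O(\log(1/\delta))$ factor from majority amplification gives $\widetilde{O}\!\big(\sqrt{w_+(x)\,\widetilde{W}_-}/\epsilon^{3/2}\big)\log(1/\delta)=\widetilde{O}\!\big(\sqrt{R_{s,t}(G(x))\,n^2/\epsilon^3}\big)\log(1/\delta)$ expected queries, as claimed.

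For the $0$-input case, when $st\textsc{-conn}_G(x)=0$ there is no positive witness at all, so the routine is run with an a priori cutoff on the number of applications of $U$: governed by $\widetilde{W}_-=O(n^2)$ and the a priori bound $W_+=O(n)$, the estimation converges within $\widetilde{O}\!\big(\sqrt{W_+\widetilde{W}_-}/\epsilon^{3/2}\big)=\widetilde{O}((n/\epsilon)^{3/2})$ queries whenever $f(x)=1$, so if the cutoff is reached we declare $st\textsc{-conn}_G(x)=0$ (behavior we need not analyze further, since the lemma only promises a correct estimate on $1$-inputs). Correctness on $1$-inputs follows from the previous paragraph, and the $0$-input query bound is immediate from the cutoff.

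The main obstacle is the middle step: pinning down exactly how a multiplicative-$\epsilon$ error in $w_+(x)$ translates into the required phase-estimation resolution and amplitude-estimation precision, which is where the $\epsilon^{-3}$ inside the square root originates. This needs the effective spectral gap lemma together with a careful second-moment analysis of the phase-estimation/amplitude-estimation composition, and it is easy to land on the wrong $\epsilon$ exponent; by contrast, feeding in the graph-theoretic facts $w_+(x)=\tfrac12 R_{s,t}(G(x))$ (from~\cite{belovsSpanProgramsQuantum2012}), $\widetilde{W}_-=O(n^2)$, and $W_+=O(n)$ is routine.
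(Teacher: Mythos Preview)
Your proposal is correct and matches the paper's own justification: the paper simply notes that \cref{thm:witness-est} is a special case of \cite[Theorem~3.8]{itoApproximateSpanPrograms2019} applied to ${\cal P}_{G_{st}}$, plugging in $w_+(x)=\tfrac12 R_{s,t}(G(x))$ and $\widetilde W_-=O(n^2)$, then adding a hard cutoff (using $R_{s,t}(G(x))\le n$) for $0$-inputs and median-of-$\log(1/\delta)$ repetitions for amplification. Your middle paragraph re-deriving the internals of the Ito--Jeffery routine (phase estimation resolution, amplitude estimation, effective spectral gap) is more than the paper does---it treats that theorem as a black box---so you could safely drop that discussion and just cite the theorem directly.
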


\cref{thm:witness-est} is a special case of \cite[Theorem 3.8]{itoApproximateSpanPrograms2019}, which gives an algorithm for estimating the quantity $w_+(x)$ from \emph{any} span program. If we apply this construction with the span program ${\cal P}_{G_{s,t}}$, we can estimate its positive witness sizes, which are precisely $\frac{1}{2}R_{s,t}(G(x))$. The algorithm described in \cite[Theorem 3.8]{itoApproximateSpanPrograms2019} assumes that the input is a 1-input to $st$-$\textsc{conn}_G(x)$, but can easily be modified to always stop after at most $\widetilde{O}\left((n/\epsilon)^{3/2}\log(1/\delta)\right)$ steps, regardless of the input, since $R_{s,t}(G(x))\leq n$. The algorithm as stated also only works with bounded error, but the success probability can be amplified to $1-\delta$ by repeating $\log(1/\delta)$ times and taking the median estimate.


\section{Witness Generation}\label{sec:witness-gen}

Our main technical result, on generating span program witness states is the following:
\begin{theorem}\label{thm:witness_generation}
Given a span program $\mathcal{P}$ that decides a function $f$, 
and constants $\epsilon,\delta$, there is an algorithm (\cref{alg:state_generation}) that, given as input an oracle $O_x$ such that
$f(x)=1$ with optimal positive witness $\ket{w}$,  outputs a state $\ket{\hat{w}}/\|\ket{\hat{w}}\|$ such that $\norm{\ket{w}/\sqrt{w_+(x)}-\ket{\hat{w}}/\|\ket{\hat{w}}\|}^2\leq O(\epsilon)$ with probability $1-O(\delta)$, and uses $\widetilde{O}\left(\sqrt{\frac{w_+(x)\widetilde{W}_-}{\epsilon}}\log\left(\frac{1}{\delta}\right)\right)$ expected queries to $O_x$.
\end{theorem}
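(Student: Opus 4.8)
The plan is to obtain $\ket w/\sqrt{w_+(x)}$ as the post-selected output of phase estimation on the usual span-program unitary, after ``peeling off'' a target dimension. First I would pass to the extended span program: adjoin a one-dimensional space $\mathrm{span}\{\ket{\widehat 0}\}$, declared always to belong to $H(x)$, and set $A\ket{\widehat 0}=-\tfrac1{\sqrt C}\ket\tau$ for a scaling parameter $C>0$. For the optimal positive witness $\ket w$, the vector $\ket{w_0}:=\sqrt C\,\ket{\widehat 0}+\ket w$ then lies in $\ker A'\cap H'(x)$, hence is a phase-$0$ eigenvector of $U(x):=(2\Pi_{\ker A'}-I)(2\Pi_{H'(x)}-I)$; and because $\ket w$ is the \emph{minimum-norm} positive witness it is orthogonal to $\ker A\cap H(x)$, which makes $\ket{w_0}$ orthogonal to $\ker A\cap H(x)$ as well and forces the projection of $\ket{\widehat 0}$ onto the phase-$0$ eigenspace of $U(x)$ to equal $\tfrac{\sqrt C}{C+w_+(x)}\ket{w_0}$ --- exactly a multiple of $\ket{w_0}$. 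Its weight is $\tfrac{C}{C+w_+(x)}$, which is $\Theta(1)$ once $C=\Theta(w_+(x))$, and $\ket{w_0}/\|w_0\|=\tfrac{\sqrt C}{\|w_0\|}\ket{\widehat 0}+\tfrac1{\|w_0\|}\ket w$.

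This gives the algorithm. (i)~Run a witness-size estimation routine --- the general span-program version of \cref{thm:witness-est} from \cite{itoApproximateSpanPrograms2019} --- to obtain $\widehat w\in[\tfrac12 w_+(x),2w_+(x)]$ with $\widetilde O(\sqrt{w_+(x)\widetilde{W}_-})$ queries, and set $C=\widehat w$. (ii)~Prepare $\ket{\widehat 0}$ and run phase estimation of $U(x)$ to precision $\Theta$ with $1/\Theta=\widetilde O(\sqrt{w_+(x)\widetilde{W}_-/\epsilon})$, each of the $\widetilde O(1/\Theta)$ applications of $U(x)$ costing $O(1)$ queries to $O_x$. (iii)~Post-select on the phase estimate being (essentially) $0$; this leaves a state within squared-distance $O(\epsilon)$ of $\ket{w_0}/\|w_0\|$. (iv)~Measure the flag distinguishing $\mathrm{span}\{\ket{\widehat 0}\}$ from $H$ and post-select on ``in $H$'', which occurs with probability $\tfrac{w_+(x)}{C+w_+(x)}=\Theta(1)$ and collapses the state onto $\ket w/\|w\|=\ket w/\sqrt{w_+(x)}$ (up to $O(\epsilon)$). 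Both post-selections are heralded, so one run succeeds with probability $\Theta(1)$; repeating $O(\log(1/\delta))$ times (or amplifying a constant number of rounds) raises this to $1-O(\delta)$ and multiplies the query count by $O(\log(1/\delta))$, for a total of $\widetilde O(\sqrt{w_+(x)\widetilde{W}_-/\epsilon}\log(1/\delta))$. The global phase is correct because we start from $+\ket{\widehat 0}$, which has positive overlap with $\ket{w_0}$.

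The technical heart --- and the step I expect to be the main obstacle --- is step (iii): showing that phase estimation to precision $\Theta$ followed by post-selection on estimate $0$ really produces a state within squared-distance $O(\epsilon)$ of $\ket{w_0}/\|w_0\|$. Since the phase-$0$ component of $\ket{\widehat 0}$ is already exactly $\propto\ket{w_0}$, this reduces to showing that the weight of $\ket{\widehat 0}$ on eigenphases of $U(x)$ in $(0,\Theta]$ is $O(\epsilon)$ times its phase-$0$ weight. This is precisely the spectral estimate that underlies \cref{thm:witness-est}, proved via the effective spectral gap lemma applied to a negative witness for $x$; the hypothesis $f(x)=1$ is used only through the fact that $\widetilde{W}_-$ is a maximum over $f^{-1}(1)$, so that $x$ still admits an approximate negative witness $\bra{\widetilde\omega}$ with $\braket{\widetilde\omega}{\tau}=1$ and $\|\bra{\widetilde\omega}A\|^2\le\widetilde{W}_-$. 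The subtlety to get right is that for a $1$-input this witness is only \emph{approximate} (nonzero negative error $e_-(x)$) and $A^\dagger\ket{\widetilde\omega}$ has a component outside $H'(x)$ that need not be small, so the effective spectral gap lemma cannot be applied in one shot to $\ket{\widehat 0}$ directly; one must instead run the estimate block-by-block under Jordan's lemma, controlling the contribution of each two-dimensional $U(x)$-invariant block, exactly as in the analysis behind \cref{thm:witness-est}. The remaining bookkeeping --- converting the leakage bound into a fidelity bound, absorbing the constant-factor loss from renormalizing after the flag measurement, disentangling the phase-estimation register, and the geometric search needed to fix $C$ from $\widehat w$ when $w_+(x)$ is initially unknown --- is routine.
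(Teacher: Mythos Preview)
Your plan is correct and structurally matches the paper's proof: extend the span program by $\ket{\widehat 0}$, show $\ket{w_0}$ is a phase-$0$ eigenvector, tune the scaling so both post-selections succeed with constant probability, and repeat $O(\log(1/\delta))$ times. Two points of comparison are worth noting.

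First, for fixing the scale $C$, the paper does not call the full witness-size estimator from \cite{itoApproximateSpanPrograms2019}. Instead it runs a lighter ``Probing Stage'': starting from $\alpha=1/\sqrt{\widetilde W_-}$ and doubling, it uses amplitude estimation on the phase-estimation circuit to estimate the phase-$0$ weight $a_0=(1+w_+(x)/\alpha^2)^{-1}$ directly, stopping once $a_0$ lands in a constant window. This is equivalent in cost but more self-contained, and it avoids importing the analysis of \cite[Theorem 3.8]{itoApproximateSpanPrograms2019} as a black box.

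Second --- and this is where you overcomplicate things --- the spectral leakage bound does \emph{not} require a Jordan block-by-block analysis. The paper applies the effective spectral gap lemma in one shot, and the reason your worry about the component of $A^\dagger\ket{\widetilde\omega}$ outside $H'(x)$ dissolves is \cite[Theorem 2.11]{itoApproximateSpanPrograms2019}: for the \emph{optimal} approximate negative witness, one has exactly $\Pi_{H(x)}(\bra{\widetilde\omega}A)^\dagger=\ket{w}/w_+(x)$. So if you set $\ket{v}=\ket{\widehat 0}-\alpha(\bra{\widetilde\omega}A)^\dagger$, then $\Pi_x\ket{v}$ equals precisely the orthogonal complement $\ket{\psi_{x,+}}=\ket{\widehat 0}-\tfrac{\alpha}{w_+(x)}\ket{w}$ of $\ket{w_0}$ inside $\ket{\widehat 0}$, while a direct check gives $\Lambda^\alpha\ket{v}=0$. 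The effective spectral gap lemma then yields $\|P_\Theta\ket{\psi_{x,+}}\|\le\tfrac{\Theta}{2}\|\ket v\|\le\Theta\alpha\sqrt{\widetilde W_-}$ immediately. The ``large component outside $H'(x)$'' is harmless: it is killed by $\Pi_x$, and what survives is exactly the vector you want to bound.
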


For comparison, a span program algorithm can \emph{decide} $f$ with bounded
error in expected query complexity $\widetilde{O}\left(\sqrt{w_+(x)\widetilde{W}_-}\right)$, so \cref{thm:witness_generation} gives a matching complexity for
generating a witness state. As we will see in \cref{sec:edge-finding}, in the case
of the span program ${\cal P}_{G_{s,t}}$ for $st$-connectivity on subgraphs
of $G$, this implies that we can sample an $st$-path edge in the same
complexity used by the span program algorithm to decide if an $st$-path
exists. 

A key subroutine for our witness state generation algorithm will be quantum phase
estimation. In quantum phase estimation one implements a controlled version of
a unitary $U$ acting on a Hilbert space $\mathcal{H}_A$ on an input state
$\ket{\psi}\in \mathcal{H}_A$. The state $\ket{\psi}$ can be decomposed into
its eigenbasis with respect to $U$ as $\ket{\psi}=\sum_{i}\alpha_i\ket
{\lambda_i}$, where $U\ket{\lambda_i}=e^{i\phi_i \pi}$ and we say $\phi_i$ is
the phase of the state $\ket{\lambda_i}$. Then when phase estimation is
performed with precision $\Theta$ the probability that you measure a phase of
$0$ after the phase estimation procedure is approximately given by 
$\sum_{i:|\phi_i|\leq \Theta }|\alpha_i|^2$, and the non-normalized state that results after
measuring a phase of $0$ is approximately 
$\sum_{i:|\phi_i|\leq \Theta }\alpha_i\ket{\lambda_i}$. In other words, phase
estimation can be used to project into the low phase space (with phase less
than $\Theta$) with probability that depends on the amount of amplitude the
original state had on low-phase eigenstates. For an accuracy parameter
$\epsilon,$ the number of uses of $U$ in phase estimation scales as 
$O\left(\frac{1}{\Theta}\log\frac{1}{\epsilon}\right)$. A more rigorous description of the guarantees of phase estimation is given below in \cref{lem:phase_det}.

The basic idea of the algorithm that we use to prove \cref{thm:witness_generation} 
is to apply phase estimation with a unitary $\U{\mathcal{P}}{x}{\alpha}$, (which can be implemented with access to an oracle $O_x$ and depends on a
span program $\mathcal{P}$, and a positive real parameter $\alpha$), on a state
$\ket{\hat{0}}$. We show that the eigenspectrum of $\ket{\hat{0}}$ relative to $\U{\mathcal{P}}{x}{\alpha}$ decomposes into two states, $\ket{\hat{0}}\oplus\frac{1}{\alpha}\ket{w}$, which is a $0$-phase eigenstate of $\U{\mathcal{P}}{x}{\alpha}$,
and $\tm$, which has small overlap with the low-phase space of $\U{\mathcal{P}}{x}{\alpha}$.

If we do phase estimation with $\U{\mathcal{P}}{x}{\alpha}$ on $\ket{\hat{0}}$ with sufficiently small precision, and then if we measure a phase of $0$, as discussed above,
we will approximately project into the state $\ket{\hat{0}}\oplus\frac{1}{\alpha}\ket{w}$.
From there, if we make the measurement $\{\proj{\hat{0}}, I-\proj{\hat{0}}\}$, and obtain outcome $I-\proj{\hat{0}}$
the state will project into $\ket{w}$, as desired.

Next, there comes a balancing act for our choice of $\alpha$. When $\alpha$ is
too small, $\ket{\hat{0}}$ has small overlap with the span of
$\ket{\hat{0}}\oplus\frac{1}{\alpha}\ket{w}$, so we are not very likely to measure a phase of $0$ when we do phase estimation with $\U{\mathcal{P}}{x}{\alpha}$ on $\ket{\hat{0}}$. However, when
$\alpha$ gets too large, while it becomes very likely to measure a phase of $0$ and thus
obtain the state $\ket{\hat{0}}\oplus\frac{1}{\alpha}\ket{w}$, we will be
unlikely to subsequently measure outcome $I-\proj{\hat{0}}$. 

The sweet spot is when $\alpha\approx \sqrt{w_+(x)}$, in which case both 
measurement outcomes we require have a reasonable probability of occurring. 
Since we don't know $w_+(x)$ ahead of time, we
must first estimate an appropriate value of $\alpha$ to use, which we do by
iteratively testing larger and larger values of $\alpha$.\footnote{There is a similar algorithm in \cite{itoApproximateSpanPrograms2019} that estimates $w_+(x)$, but
it is more precise than we require.} Our test involves
estimating the probability of measuring a phase of $0$ when phase estimation with
$\U{\mathcal{P}}{x}{\alpha}$ is performed on $\ket{\hat{0}}$,
 which we show provides an
estimate of $\alpha/\sqrt{w_+(x)}.$

\subsection{Proof of \cref{thm:witness_generation}}

Before introducing the algorithm we use to prove \cref{thm:witness_generation}, 
we introduce some key concepts, lemmas,
and theorems that will be used in the analysis.

Let
$\tilde{H}=H\oplus \textrm{span}\{\ket{\hat{0}}\},$ 
and $\tilde{H}(x)=H(x)\oplus \textrm{span}\{\ket{\hat{0}}\},$ where 
$\ket{\hat{0}}$ is orthogonal to $H$. Then we define
$\Aa{\alpha} \in \sop L(\tilde{H},{\cal V})$ as
\begin{equation}\label{eq:alphaIntro}
\Aa{\alpha}=\frac{1}{\alpha}\ketbra{\tau}{\hat{0}}-A.
\end{equation}
Let $\Lal{\alpha} \in \sop L(\tilde{H},\tilde{H})$ be the orthogonal
projection onto the kernel of $\Aa{\alpha}$, and let $\PtHx\in \sop
L(\tilde{H},\tilde{H})$ be the orthogonal projector onto $\tilde{H}(x).$
Finally, let $\U{\mathcal{P}}{x}{\alpha}=(2\PtHx-I)(2\Lal{\alpha}-I)$. Note that
$2\PtHx-I$ can be implemented with two applications of $O_x$ \cite[Lemma
3.1]{itoApproximateSpanPrograms2019}, and $2\Lal{\alpha}-I$ can be implemented
without any applications of $O_x$.

We will use parallelized
phase estimation, as described in Ref.
\cite{magniezSearchQuantumWalk2011}, which provides improved error bounds over
 standard phase estimation. In particular, given a unitary $U$ acting on a Hilbert Space
 $\sop H$, a precision $\Theta>0$, and an accuracy $\epsilon>0$, we can
 create a circuit $D(U)$ that implements $O(\log\frac{1}{\epsilon})$ parallel
 copies of the phase estimation circuit on $U$, each to precision $O
 (\Theta)$, that each estimate the phase of a single copy of a state $\ket{\psi}$. That is, $D(U)$ acts on the
 space $\sop H_A\otimes((\mathbb{C}^{2})^{\otimes b})_B$ where $b=O\left
 (\log\frac{1}{\Theta}\log\frac{1}{\epsilon}\right)$, and $A$ 
 labels the input state register, and $B$ labels the
  registers that store the results of the parallel phase estimations.

We use the circuit $D(U)$ to check if an input state has high overlap with the
low-valued eigenphase-space of $U$ \cite{kitaevQuantumMeasurementsAbelian1995,cleveQuantumAlgorithmsRevisited1998,magniezSearchQuantumWalk2011}. 
To characterize the low phase space of a unitary $U$, let
$P_\Theta(U)$ (or just $P_\Theta$ when $U$ is clear from context) be the projection onto $\textrm
{span}\{\ket{u}:U\ket{u}=e^{i\theta}\ket{u}\textrm
{ with }|\theta|\leq\Theta\}$ (the eigenspace of $U$ with eigenphases less than $\Theta$). Then the following lemma provides key properties of parallel phase estimation circuit $D(U)$:
\begin{lemma}[\cite{kitaevQuantumMeasurementsAbelian1995,cleveQuantumAlgorithmsRevisited1998,magniezSearchQuantumWalk2011}]
Let $U$ be a unitary on a Hilbert Space ${\sop H}_A$, and let $\Theta,\epsilon>0$. We call $\Theta$ the precision and $\epsilon$ the accuracy. Then there is a circuit $D(U)$ that acts on the space $\sop H_A\otimes
((\mathbb{C}^{2})^{\otimes b})_B$ for $b=O\left(\log\frac{1}{\Theta}\log\frac{1}{\epsilon}\right)$, and that
uses $O\left(\frac{1}{\Theta}\log\frac{1}{\epsilon}\right)$ controlled calls to
$U$.  Then for any state $\ket{\psi}\in {\sop H}_A$, 
\begin{enumerate}
\item $D(U)(P_0\ket{\psi})_A\ket{0}_B=(P_0\ket{\psi})_A\ket{0}_B$ \label{it:PClower}
\item $\| P_0\ket{\psi}\|^2\leq \|(I_A\otimes \proj{0}_B)D(U)(\ket{\psi}_A\ket{0}_B)\|^2\leq\| P_\Theta\ket{\psi}\|^2+\epsilon$. \label{it:PCupper}
\end{enumerate}
\label{lem:phase_det}
\end{lemma}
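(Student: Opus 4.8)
The plan is to derive this as a standard consequence of (parallelized) phase estimation, essentially repackaging the analysis of \cite{magniezSearchQuantumWalk2011}. First I would describe the circuit $D(U)$ explicitly: it runs $N=O(\log\frac{1}{\epsilon})$ independent copies of ordinary phase estimation of $U$, each using $m=O(\log\frac{1}{\Theta})$ ancilla qubits — hence $O(2^m)=O(\frac{1}{\Theta})$ controlled calls to $U$ per copy — and each writing its $m$-bit phase estimate into a disjoint block of the $B$ register, so that $b=Nm=O(\log\frac{1}{\Theta}\log\frac{1}{\epsilon})$ and the total number of controlled calls to $U$ is $N\cdot O(\frac1\Theta)=O(\frac1\Theta\log\frac1\epsilon)$, as claimed. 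The $A$ register (holding $\ket{\psi}$) is shared by all $N$ copies, which are applied sequentially; crucially, each copy is built only from Hadamards and an inverse QFT on its own ancilla block together with controlled powers of $U$, and controlled-$U^k$ fixes any eigenvector of $U$ in the $A$ register, only imprinting a phase onto the ancilla.

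Next I would reduce everything to the action of $D(U)$ on eigenvectors. Write $\ket{\psi}=\sum_i\alpha_i\ket{\lambda_i}$ with $U\ket{\lambda_i}=e^{i\phi_i}\ket{\lambda_i}$. Since each sub-circuit fixes the $A$ register on an eigenvector, $D(U)\big(\ket{\lambda_i}_A\ket{0}_B\big)=\ket{\lambda_i}_A\ket{g_{\phi_i}}_B$, where the $B$-register state $\ket{g_\phi}$ depends only on $\phi$ and factorizes across the $N$ blocks as $\ket{g_\phi}=\ket{h_\phi}^{\otimes N}$ for the single-copy ancilla state $\ket{h_\phi}$. Two facts about $\ket{g_\phi}$ suffice: \emph{(i)} $\ket{g_0}=\ket{0}_B$, because ordinary phase estimation on an exact eigenvector whose phase $0$ is exactly representable outputs the all-zeros string with certainty in every block; and \emph{(ii)} $|\braket{0}{g_\phi}_B|^2\le\epsilon$ whenever $|\phi|>\Theta$, because the standard tail bound for phase estimation, with the constant in the $O(\frac1\Theta)$ calls per copy tuned appropriately, gives $|\braket{0}{h_\phi}|^2\le\tfrac12$ once $|\phi|>\Theta$, so $|\braket{0}{g_\phi}|^2=|\braket{0}{h_\phi}|^{2N}\le 2^{-N}\le\epsilon$. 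Fact \emph{(ii)} — obtaining the claimed $O(\frac1\Theta\log\frac1\epsilon)$ query count rather than the $O(\frac1{\Theta\epsilon})$ of naive phase estimation — is the one genuinely quantitative point, and it is exactly what the parallelized construction of \cite{magniezSearchQuantumWalk2011} supplies, so I would cite it rather than reprove the tail estimate.

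Finally I would assemble the two claims. By linearity, $(I_A\otimes\proj{0}_B)D(U)\big(\ket{\psi}_A\ket{0}_B\big)=\sum_i\alpha_i\braket{0}{g_{\phi_i}}\ket{\lambda_i}_A\ket{0}_B$, so by orthonormality of the $\ket{\lambda_i}$, $\norm{(I_A\otimes\proj{0}_B)D(U)(\ket{\psi}_A\ket{0}_B)}^2=\sum_i|\alpha_i|^2|\braket{0}{g_{\phi_i}}|^2$. Claim \cref{it:PClower} then follows from fact \emph{(i)} applied to the restriction of the displayed identity to the span of $P_0\ket{\psi}$ (all relevant $\phi_i$ equal $0$), so $D(U)$ acts as the identity there. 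For \cref{it:PCupper}, the lower bound is obtained by discarding all terms with $\phi_i\ne 0$ and using $|\braket{0}{g_0}|^2=1$, giving $\sum_{i:\phi_i=0}|\alpha_i|^2=\norm{P_0\ket{\psi}}^2$; the upper bound is obtained by splitting the sum at $|\phi_i|=\Theta$, bounding the low-phase part by $\sum_{i:|\phi_i|\le\Theta}|\alpha_i|^2=\norm{P_\Theta\ket{\psi}}^2$ (using $|\braket{0}{g_{\phi_i}}|^2\le 1$) and the high-phase part by $\epsilon\sum_i|\alpha_i|^2\le\epsilon$ via fact \emph{(ii)}. Apart from fact \emph{(ii)}, which I would quote, the whole argument is bookkeeping on eigenspaces, so I expect the write-up to be short.
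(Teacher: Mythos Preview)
The paper does not prove this lemma; it is stated as a black-box result with citations to \cite{kitaevQuantumMeasurementsAbelian1995,cleveQuantumAlgorithmsRevisited1998,magniezSearchQuantumWalk2011} and used without further justification. Your sketch is a correct and standard reconstruction of the parallelized phase-estimation argument from \cite{magniezSearchQuantumWalk2011}, so there is nothing to compare against.
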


 Iterative Quantum Amplitude Estimation is 
 a robust version of amplitude estimation, which
uses repeated applications of amplitude estimation to achieve improved error bounds:
\begin{lemma} [Iterative Quantum Amplitude Estimation \cite{Grinko2021}] Let
 $\delta>0$ and $\mathcal{A}$ be a unitary quantum circuit such that on a state $\ket
 {0}$, $\mathcal{A}\ket{\psi}=\alpha_0\ket{0}\ket{\psi_0}+\alpha_1\ket
 {1}\ket{\psi_1}$. Then there is an algorithm that estimates $|\alpha_0|^2$
 to additive error $\delta$ with success probability at least $1-p$ using 
$O\left(\frac{1}{\delta}\log\left(\frac{1}{p}\log\frac{1}{\delta}\right)\right)$calls to $\mathcal{A}$ and $\mathcal{A}^\dagger$.
\label{lem:ampEst}
\end{lemma}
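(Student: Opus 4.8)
The plan is to reproduce the Iterative Quantum Amplitude Estimation (IQAE) algorithm of \cite{Grinko2021}, which replaces the quantum phase estimation of textbook amplitude estimation by an adaptive, classically controlled schedule of amplitude-amplification rounds, so that only \emph{uncontrolled} powers of the Grover operator are ever run. Write $a=|\alpha_0|^2$ and let $\theta_a\in[0,\pi/2]$ satisfy $\sin^2\theta_a=a$; since $|\sin^2\theta-\sin^2\theta'|\le|\theta-\theta'|$, it suffices to estimate $\theta_a$ to additive error $\delta$. Let $Q$ be the standard amplitude-amplification operator built from $\mathcal{A}$, $\mathcal{A}^\dagger$, and two $\mathcal{A}$-free reflections, one about $\ket{0}$ and one about the flag-$\ket{0}$ subspace. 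The key identity to verify is that for every $k\ge 0$, measuring the flag qubit of $Q^k\mathcal{A}\ket{0}$ returns outcome $0$ with probability exactly $\sin^2\!\big((2k+1)\theta_a\big)$, and that $Q^k\mathcal{A}$ uses $O(k)$ calls to $\mathcal{A}$ and to $\mathcal{A}^\dagger$.

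The algorithm keeps a confidence interval $[\theta_{\min},\theta_{\max}]\ni\theta_a$, initialised to $[0,\pi/2]$. In round $j$ it picks the largest integer $k_j$ for which the dilated interval $[(2k_j+1)\theta_{\min},(2k_j+1)\theta_{\max}]$ fits inside a single half-period $[\ell_j\pi/2,(\ell_j+1)\pi/2]$ of $\sin^2$, with $\ell_j$ read off from the current interval, so that $\sin^2$ is monotone there; it runs $Q^{k_j}\mathcal{A}$ and measures the flag $N_j$ times, producing an empirical frequency $\hat a_j$ of outcome $0$. A Chernoff--Hoeffding bound gives $|\hat a_j-\sin^2((2k_j+1)\theta_a)|\le\varepsilon'$ with probability $1-p_j$ once $N_j=O(\log(1/p_j))$, for a suitably small fixed constant $\varepsilon'$. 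Inverting $\sin^2$ on the known monotone branch and dividing by $2k_j+1$ converts $[\hat a_j-\varepsilon',\hat a_j+\varepsilon']$ into a strictly narrower interval for $\theta_a$; the algorithm halts once the width falls below $\delta$ and reports the midpoint (and its squared sine as the estimate of $a$).

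For correctness, split the failure budget as $p_j=p/T$ over the deterministic number of rounds $T$; a union bound shows that with probability $\ge 1-p$ every round's interval genuinely contains $\theta_a$, so the returned value is within $\delta$ of $\theta_a$, hence within $\delta$ of $a$. For the query count, the choice rule for $k_j$ makes the interval width contract by a constant factor each round (with $\varepsilon'$ small enough), so $T=O(\log(1/\delta))$ and the powers $k_j$ grow geometrically up to $k_T=O(1/\delta)$; therefore $\sum_{j\le T}k_j=O(k_T)$, and since $N_j=O(\log(1/p_j))=O(\log(T/p))$, the total number of calls to $\mathcal{A}$ and $\mathcal{A}^\dagger$ is $\sum_{j\le T}O(k_jN_j)=O\!\big(\log(T/p)\big)\sum_{j\le T}k_j=O\!\left(\tfrac1\delta\log\!\big(\tfrac1p\log\tfrac1\delta\big)\right)$.

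I expect the genuinely delicate step to be the adaptive power schedule, which is the heart of the analysis in \cite{Grinko2021}: $k_j$ must be large enough for geometric contraction of the interval, yet small enough that the dilated interval never straddles two monotone branches of $\sin^2$ (which would break invertibility) and never overshoots the target precision $\delta$. One must also confirm that a \emph{constant} additive precision $\varepsilon'$ on $\hat a_j$ propagates to an $O(1/k_j)$-accurate interval for $\theta_a$ even though the derivative of $\arcsin$ diverges at branch endpoints; this is exactly what the ``no-overshooting'' rule controls, by keeping the dilated interval a bounded distance inside its half-period. The remaining ingredients --- the Grover iteration identity, the Hoeffding bound, and the geometric-series accounting --- are routine.
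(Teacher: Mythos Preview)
The paper does not prove this lemma at all: it is stated as a black-box citation of \cite{Grinko2021}, with no accompanying proof or sketch. Your proposal therefore cannot be compared against the paper's own proof, because there is none; you have instead supplied an outline of the IQAE construction from the cited reference. As such an outline, your sketch is broadly faithful to the algorithm of \cite{Grinko2021} --- the adaptive choice of Grover powers $k_j$, the monotone-branch inversion of $\sin^2$, the Hoeffding bound per round, and the geometric accounting are all the right ingredients --- but for the purposes of this paper, the lemma is simply quoted, and no proof is expected or provided.
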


A key mathematical tool in analyzing span program algorithms is the Effective Spectral Gap Lemma:
\begin{lemma}[Effective Spectral Gap Lemma, \cite{leeQuantumQueryComplexity2011}]
Let $\Pi$ and $\Lambda$ be projections, and let $U=(2\Pi-I)(2\Lambda-I)$ be the
unitary that is the product of their associated reflections. If
$\Lambda\ket{w}=0$, then $\|P_\Theta(U) \Pi\ket{w}\|\leq \frac{\Theta}{2}\|\ket{w}\|.$
\label{spec_gap_lemm}
\end{lemma}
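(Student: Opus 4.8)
The plan is to give a direct argument from the spectral decomposition of $U$, without invoking Jordan's lemma. Write $R_\Pi = 2\Pi - I$ and $R_\Lambda = 2\Lambda - I$; each is a Hermitian involution, so $U = R_\Pi R_\Lambda$ is unitary and is diagonalizable in an orthonormal eigenbasis with eigenvalues on the unit circle. The hypothesis $\Lambda\ket{w}=0$ will be used exactly once, in the form $R_\Lambda\ket{w} = -\ket{w}$. Since $P_\Theta(U)$ is by definition the projector onto the span of the eigenvectors of $U$ whose eigenphase has absolute value at most $\Theta$, I fix an orthonormal family $\{\ket{u_j}\}$ of such eigenvectors that spans the range of $P_\Theta(U)$ (possible because that span is a direct sum of eigenspaces, so eigenvalue multiplicities cause no trouble), with $U\ket{u_j}=e^{i\theta_j}\ket{u_j}$ and $\abs{\theta_j}\le\Theta$. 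Then $\norm{P_\Theta(U)\Pi\ket{w}}^2 = \sum_j \abs{\bra{u_j}\Pi\ket{w}}^2$, so it suffices to bound each term.

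The core of the argument is a short computation for a single low-phase eigenvector $\ket{u}$ with $U\ket{u}=e^{i\theta}\ket{u}$. Left-multiplying $R_\Pi R_\Lambda\ket{u}=e^{i\theta}\ket{u}$ by $R_\Pi$ gives $R_\Lambda\ket{u}=e^{i\theta}R_\Pi\ket{u}$, and taking adjoints (both reflections are Hermitian) yields $\bra{u}R_\Pi = e^{i\theta}\bra{u}R_\Lambda$. Hence, using $\Pi = \frac12(I+R_\Pi)$ and then $R_\Lambda\ket{w}=-\ket{w}$,
\[\bra{u}\Pi\ket{w} = \frac12\left(\braket{u}{w} + \bra{u}R_\Pi\ket{w}\right) = \frac12\left(\braket{u}{w} + e^{i\theta}\bra{u}R_\Lambda\ket{w}\right) = \frac12\left(1-e^{i\theta}\right)\braket{u}{w}.\]
Since $\abs{1-e^{i\theta}} = 2\abs{\sin(\theta/2)} \le \abs{\theta}$, whenever $\abs{\theta}\le\Theta$ we get $\abs{\bra{u}\Pi\ket{w}} \le \frac{\Theta}{2}\abs{\braket{u}{w}}$.

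Summing over the orthonormal family and applying Bessel's inequality (the $\ket{u_j}$ are orthonormal, though not necessarily a complete basis),
\[\norm{P_\Theta(U)\Pi\ket{w}}^2 = \sum_j \abs{\bra{u_j}\Pi\ket{w}}^2 \le \frac{\Theta^2}{4}\sum_j\abs{\braket{u_j}{w}}^2 \le \frac{\Theta^2}{4}\norm{\ket{w}}^2,\]
and taking square roots finishes the proof. I do not expect any real obstacle: the whole proof is a few lines of operator algebra plus Bessel's inequality. The only points requiring care are the spectral bookkeeping — that the range of $P_\Theta(U)$ genuinely admits an orthonormal basis of eigenvectors with small eigenphase regardless of multiplicities — and keeping the sign right when converting the kernel condition into the identity $R_\Lambda\ket{w}=-\ket{w}$, since this sign is precisely what produces the factor $1-e^{i\theta}$ rather than $1+e^{i\theta}$. (An alternative proof decomposes the space via Jordan's lemma into one- and two-dimensional subspaces jointly invariant under $\Pi$ and $\Lambda$ and verifies the bound block by block; it reaches the same conclusion but is longer.)
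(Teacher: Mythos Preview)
Your proof is correct. The paper itself does not prove this lemma; it is stated as a citation from \cite{leeQuantumQueryComplexity2011} and used as a black box, so there is no in-paper argument to compare against. Your argument is essentially the standard proof of the Effective Spectral Gap Lemma: the key identity $\bra{u}\Pi\ket{w} = \tfrac{1}{2}(1-e^{i\theta})\braket{u}{w}$ for a low-phase eigenvector $\ket{u}$, followed by Bessel's inequality, is exactly how the result is typically derived (and how it appears in the original reference). The algebraic steps, in particular the adjoint manipulation $\bra{u}R_\Pi = e^{i\theta}\bra{u}R_\Lambda$ and the use of $R_\Lambda\ket{w}=-\ket{w}$, are all clean.
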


We will need the following relationship between optimal positive witnesses and optimal
negative approximate witnesses:
\begin{theorem}{\em{\cite[Theorem 2.11]{itoApproximateSpanPrograms2019}}} Given a
 span program $\mathcal{P}=(H,\mathcal{V},\ket{\tau},A)$ on $ R^m$ and $x\in R^m$, if $\ket{w}$ is the
 optimal positive witness for $x$ and $\bra{\widetilde{\omega}}$ is an optimal negative
 approximate witness for $x$, then
\begin{equation}
\ket{w}=w_+(x)\PHx(\bra{\widetilde{\omega}} A)^\dagger.
\end{equation}
\label{thm:inverseWitness}
\end{theorem}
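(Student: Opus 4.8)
The plan is to reduce the identity to two facts: the \emph{witness-size duality} $w_+(x)\cdot e_-(x,\mathcal{P})=1$, which holds for every $1$-input $x$, and the equality case of the Cauchy--Schwarz inequality. Granting the duality, here is the finish. Let $\ket{w}$ be the optimal positive witness for $x$ and $\bra{\widetilde{\omega}}$ an optimal approximate negative witness; set $\ket{v}:=\PHx(\bra{\widetilde{\omega}}A)^\dagger=\PHx A^\dagger\ket{\widetilde{\omega}}\in H(x)$, so that $\bra{v}=\bra{\widetilde{\omega}}A\PHx$. Since $\PHx\ket{w}=\ket{w}$, we get $\braket{v}{w}=\bra{\widetilde{\omega}}A\PHx\ket{w}=\bra{\widetilde{\omega}}A\ket{w}=\braket{\widetilde{\omega}}{\tau}=1$, and $\|\ket{v}\|^2=\|\bra{\widetilde{\omega}}A\PHx\|^2=e_-(x,\mathcal{P})$ by \cref{def:app_negWit}. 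Using $\|\ket{w}\|^2=w_+(x)$ and the duality, $\|\ket{w}\|\,\|\ket{v}\|=\sqrt{w_+(x)\,e_-(x,\mathcal{P})}=1=\braket{v}{w}$, so Cauchy--Schwarz is tight; hence $\ket{w}=c\,\ket{v}$ for a real scalar $c$, and substituting back into $\braket{v}{w}=1$ gives $c\,e_-(x,\mathcal{P})=1$, i.e.\ $c=1/e_-(x,\mathcal{P})=w_+(x)$. Therefore $\ket{w}=w_+(x)\,\PHx(\bra{\widetilde{\omega}}A)^\dagger$. Note that the second-level minimization of $\|\bra{\widetilde{\omega}}A\|$ in the definition of ``optimal'' approximate negative witness plays no role: the identity holds for \emph{any} approximate negative witness.

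It remains to establish $w_+(x)\,e_-(x,\mathcal{P})=1$ for $1$-inputs. For ``$\geq$'', apply the same pairing: for any positive witness $\ket{w}$ and any $\bra{\widetilde{\omega}}$ with $\braket{\widetilde{\omega}}{\tau}=1$, $1=\braket{\widetilde{\omega}}{\tau}=\bra{\widetilde{\omega}}A\PHx\ket{w}\leq\|\bra{\widetilde{\omega}}A\PHx\|\cdot\|\ket{w}\|$, and minimizing the right side over both vectors yields $1\leq\sqrt{e_-(x,\mathcal{P})\,w_+(x)}$. For ``$\leq$'', use minimality of $\ket{w}$: among all positive witnesses (an affine translate of $H(x)\cap\ker A$), the least-norm one is orthogonal to $H(x)\cap\ker A$, hence lies in its orthocomplement within $H(x)$, which is exactly $\mathrm{range}(\PHx A^\dagger)$. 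Write $\ket{w}=\PHx A^\dagger\ket{z}$; then $\ket{\tau}=A\ket{w}=(A\PHx A^\dagger)\ket{z}$ and $\|\ket{w}\|^2=\bra{z}(A\PHx A^\dagger)\ket{z}=\braket{z}{\tau}$, so $w_+(x)=\braket{z}{\tau}$. The functional $\bra{\widetilde{\omega}_0}$ with $\ket{\widetilde{\omega}_0}=\ket{z}/w_+(x)$ then satisfies $\braket{\widetilde{\omega}_0}{\tau}=1$ and $\|\bra{\widetilde{\omega}_0}A\PHx\|^2=\|\PHx A^\dagger\ket{z}\|^2/w_+(x)^2=\|\ket{w}\|^2/w_+(x)^2=1/w_+(x)$, giving $e_-(x,\mathcal{P})\leq 1/w_+(x)$. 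Equivalently, one may invoke the standard pseudoinverse formulas $w_+(x)=\bra{\tau}(A\PHx A^\dagger)^+\ket{\tau}$ and $e_-(x,\mathcal{P})=1/\bra{\tau}(A\PHx A^\dagger)^+\ket{\tau}$, both valid because $f(x)=1$ puts $\ket{\tau}$ in $\mathrm{range}(A\PHx A^\dagger)$.

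I expect the only real obstacle to be the ``$\leq$'' direction of the duality, which hinges on the fact that the optimal positive witness lies in $\mathrm{range}(\PHx A^\dagger)$ (equivalently, is orthogonal to $H(x)\cap\ker A$), a consequence of its norm-minimality, and then on bookkeeping the identities $\|\ket{w}\|^2=\braket{z}{\tau}$ and $\|\bra{\widetilde{\omega}_0}A\PHx\|^2=1/w_+(x)$ with the projectors and adjoints kept straight. One should also record that $e_-(x,\mathcal{P})\in(0,\infty)$ for a $1$-input (finiteness needs $\ket{\tau}\neq 0$; positivity follows from the ``$\geq$'' step, which uses that a positive witness exists), so dividing by it is legitimate; and that we work over $\mathbb{R}$, as in \cref{def:app_negWit}, so $c$ is real and all adjoints are transposes. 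Everything else is a short computation.
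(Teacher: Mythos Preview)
The paper does not prove this theorem: it is quoted verbatim from \cite[Theorem~2.11]{itoApproximateSpanPrograms2019} and used as a black box, so there is no in-paper argument to compare against. Your proof is correct and self-contained. The reduction to the duality $w_+(x)\,e_-(x,\mathcal{P})=1$ followed by the equality case of Cauchy--Schwarz is exactly the right structure, and your derivation of the duality (the ``$\geq$'' direction by pairing, the ``$\leq$'' direction by observing that the optimal positive witness lies in $\mathrm{range}(\PHx A^\dagger)$ and then exhibiting the candidate $\ket{\widetilde\omega_0}=\ket{z}/w_+(x)$) is clean and complete. Your closing remark that the second-level minimization of $\|\bra{\widetilde\omega}A\|$ is irrelevant is also correct and worth keeping: the identity holds for any $\bra{\widetilde\omega}$ achieving $\|\bra{\widetilde\omega}A\PHx\|^2=e_-(x,\mathcal{P})$.
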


As discussed following \cref{thm:witness_generation}, we decompose the state $\ket{\hat{0}}$ into a linear combination of two orthogonal states. They are
\begin{align}\label{eq:psi_defs}
\tp=&\ket{\hat{0}}+\frac{1}{\alpha}\ket{w},\nonumber\\
\tm=&\ket{\hat{0}}-\frac{\alpha}{w_+(x)}\ket{w},
\end{align}
so we can write $\ket{\hat{0}}$ as 
\begin{equation}\label{eq:a_defs}
\ket{\hat{0}}=a_0\tp+a_+\tm,
\quad
\mbox{where}
\quad
a_0=\frac{1}{1+\frac{w_+(x)}{\alpha^2}}, \qquad a_+=\frac{1}{1+\frac{\alpha^2}{w_+(x)}}.
\end{equation}

We first show that $\tp$ is a 0-phase eigenvector of 
$\U{\mathcal{P}}{x}{\alpha}$. Note that 
$\Aa{\alpha}\tp=\frac{1}{\alpha}(\ket{\tau}-\ket{\tau})=0$
(see \cref{eq:alphaIntro}), so recalling that $\Lal{\alpha}$ is the orthogonal
projector onto the kernel of $\Aa{\alpha}$, we have
$\Lal{\alpha}\tp=\tp$. Furthermore, since $\Pi_x$ is the orthogonal
projector onto $\tilde{H}(x)=H(x)\oplus\mathrm{span}\{\ket{\hat0}\}$, it follows that $\Pi_x\tp=\tp$, where we use that $\ket{w}$ is a
positive witness, so $\ket{w}\in H(x)$. Thus $\U{\mathcal{P}}{x}{\alpha}\tp=\tp.$

On the other hand $\tm$ has low overlap with $P_\Theta(\U{\mathcal{P}}{x}{\alpha})$ for small enough $\Theta$ and $\alpha$, as the following lemma shows.

\begin{lemma}
If $\alpha^2\geq 1/\widetilde{W}_-$, then $\|P_\Theta(\U{\mathcal{P}}{x}{\alpha})\tm\|\leq \Theta \alpha\sqrt{\widetilde{W}_-}$.
\label{lem:tm_low_overlap}
\end{lemma}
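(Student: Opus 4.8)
The plan is to apply the Effective Spectral Gap Lemma (\cref{spec_gap_lemm}) with $\Pi = \PtHx$ and $\Lambda = \Lal{\alpha}$, so that $U = \U{\mathcal{P}}{x}{\alpha}$. To use it, I need a vector $\ket{v}$ with $\Lal{\alpha}\ket{v}=0$ and $\PtHx\ket{v} = \tm$, and then the lemma gives $\|P_\Theta(\U{\mathcal{P}}{x}{\alpha})\tm\| \leq \frac{\Theta}{2}\|\ket{v}\|$. So the two things to establish are: (i) that $\tm$ actually lies in $\tilde H(x)$, so that $\PtHx\tm = \tm$ (this is immediate, since $\ket{\hat 0}$ is in $\tilde H(x)$ by construction and $\ket{w}\in H(x)\subseteq \tilde H(x)$ as $\ket{w}$ is a positive witness); and (ii) the construction of an appropriate $\ket{v}$, which is where the negative approximate witness enters.

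For step (ii), let $\bra{\widetilde\omega}$ be an optimal negative approximate witness for $x$, and consider $\ket{v} := \ket{\hat 0} - \alpha (\bra{\widetilde\omega}\Aa{\alpha})^\dagger$, or some close variant. The point of the definition of $\Aa{\alpha} = \frac{1}{\alpha}\ketbra{\tau}{\hat 0} - A$ is that $\bra{\widetilde\omega}\Aa{\alpha} = \frac{1}{\alpha}\braket{\widetilde\omega}{\tau}\bra{\hat 0} - \bra{\widetilde\omega}A = \frac{1}{\alpha}\bra{\hat 0} - \bra{\widetilde\omega}A$, using $\braket{\widetilde\omega}{\tau}=1$. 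Taking the adjoint and multiplying by $-\alpha$ gives $-\alpha(\bra{\widetilde\omega}\Aa{\alpha})^\dagger = -\ket{\hat 0} + \alpha(\bra{\widetilde\omega}A)^\dagger$, so $\ket{v} = \alpha(\bra{\widetilde\omega}A)^\dagger$. Now $\ket{v}$ is automatically in the kernel of... no — I want $\Lal{\alpha}\ket{v}=0$, i.e. $\ket{v}$ orthogonal to $\ker\Aa{\alpha}$, equivalently $\ket{v}\in \mathrm{rowspace}(\Aa{\alpha})$, which holds since $\ket{v}$ is literally $\Aa{\alpha}^\dagger$ applied to something (namely to $-\alpha\ket{\widetilde\omega}$, the adjoint vector). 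Then I compute $\PtHx\ket{v} = \alpha \PHx(\bra{\widetilde\omega}A)^\dagger$ (the $\ket{\hat 0}$ component is killed since $\ket{v}=\alpha(\bra{\widetilde\omega}A)^\dagger$ has no $\ket{\hat 0}$-part, as $A$ maps into $\mathcal V$ not into $\mathrm{span}\{\ket{\hat 0}\}$, and $\PtHx$ restricted to $H$ is $\PHx$), and by \cref{thm:inverseWitness} this equals $\frac{\alpha}{w_+(x)}\ket{w}$. That is not quite $\tm = \ket{\hat 0} - \frac{\alpha}{w_+(x)}\ket{w}$; I should instead take $\ket{v} := \ket{\hat 0} + \alpha(\bra{\widetilde\omega}\Aa{\alpha})^\dagger$, which using the same computation equals $\ket{\hat 0} - \ket{\hat 0} + \ldots$ — I'll need to chase the signs carefully, but the right choice of $\ket{v}$ (a scalar multiple of $\ket{\hat 0}$ plus a scalar multiple of $\Aa{\alpha}^\dagger\ket{\widetilde\omega}$) will give $\PtHx\ket{v} = \tm$ while keeping $\Lal{\alpha}\ket{v}=0$, since $\ket{\hat 0}\in\ker\Aa{\alpha}$? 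No — $\ket{\hat 0}$ is \emph{not} in $\ker\Aa{\alpha}$ in general. The clean statement is: $\Aa{\alpha}\ket{\hat 0} = \frac{1}{\alpha}\ket{\tau}$, and $\Aa{\alpha}\tp = 0$, so $\tp \in \ker\Aa{\alpha}$; therefore $\tm = \ket{\hat 0} - \frac{\alpha}{w_+(x)}\ket{w}$ can be split as (component in $\ker\Aa{\alpha}$) plus (component in $\mathrm{rowspace}\,\Aa{\alpha}$). Since $\Lal{\alpha}$ projects onto $\ker\Aa{\alpha}$, I want to pick $\ket{v}$ to be the rowspace part of $\tm$, up to adding anything in $\tilde H(x)^\perp$ that does not change $\PtHx\ket{v}$ but I must keep $\Lal{\alpha}\ket{v}=0$. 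Writing $\tm = \Lal{\alpha}\tm + (I-\Lal{\alpha})\tm$ and setting $\ket{v} := (I-\Lal{\alpha})\tm$ gives $\Lal{\alpha}\ket{v}=0$ directly, but then $\PtHx\ket{v}$ need not equal $\tm$. So the correct route really is the explicit construction via $\bra{\widetilde\omega}$: define $\ket{v} = \alpha(\bra{\widetilde\omega}\Aa{\alpha})^\dagger + c\,\tp$ for the constant $c$ that makes $\PtHx\ket{v} = \tm$; since $\tp\in\ker\Aa{\alpha}$ and $\alpha(\bra{\widetilde\omega}\Aa{\alpha})^\dagger \perp \ker\Aa{\alpha}$... wait, adding $c\tp$ breaks $\Lal{\alpha}\ket{v}=0$. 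Hence the genuinely correct object is $\ket{v} = \alpha(\bra{\widetilde\omega}\Aa{\alpha})^\dagger$ alone, and one checks $\PtHx\ket{v} = \frac{\alpha}{w_+(x)}\ket{w} - \ket{\hat 0} = -\tm$, up to sign; since $P_\Theta$ norms are insensitive to sign, $\|P_\Theta \tm\| = \|P_\Theta \PtHx\ket{v}\| \leq \frac{\Theta}{2}\|\ket{v}\|$.

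It then remains to bound $\|\ket{v}\|^2 = \alpha^2 \|(\bra{\widetilde\omega}\Aa{\alpha})^\dagger\|^2 = \alpha^2\|\bra{\widetilde\omega}\Aa{\alpha}\|^2$. Expanding, $\bra{\widetilde\omega}\Aa{\alpha} = \frac{1}{\alpha}\bra{\hat 0} - \bra{\widetilde\omega}A$, and since $\bra{\hat 0}$ is orthogonal to everything in the range of $A$ (which lives in $H$, not touching $\ket{\hat 0}$), these two terms are orthogonal, giving $\|\bra{\widetilde\omega}\Aa{\alpha}\|^2 = \frac{1}{\alpha^2} + \|\bra{\widetilde\omega}A\|^2$. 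Here I should be careful about whether the relevant quantity is $\|\bra{\widetilde\omega}A\|^2$ or $\|\bra{\widetilde\omega}A\PHx\|^2$: the optimal negative approximate witness minimizes $\|\bra{\widetilde\omega}A\|^2 = \widetilde w_-(x,\mathcal P) \leq \widetilde W_-$ among those achieving $e_-(x,\mathcal P)$; actually for a span program that \emph{decides} $f$ with $f(x)=1$... hmm, wait, $x$ is a $1$-input, so $e_-(x,\mathcal P) > 0$ in general, and $\widetilde w_-$ is defined relative to that. So $\|\ket{v}\|^2 = \alpha^2(\frac{1}{\alpha^2} + \widetilde w_-(x,\mathcal P)) = 1 + \alpha^2 \widetilde w_-(x,\mathcal P) \leq 1 + \alpha^2 \widetilde W_-$. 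Using the hypothesis $\alpha^2 \geq 1/\widetilde W_-$, we get $1 \leq \alpha^2\widetilde W_-$, so $\|\ket{v}\|^2 \leq 2\alpha^2\widetilde W_-$, hence $\|\ket{v}\| \leq \sqrt{2}\,\alpha\sqrt{\widetilde W_-}$ and $\|P_\Theta\tm\| \leq \frac{\Theta}{2}\sqrt{2}\,\alpha\sqrt{\widetilde W_-} = \frac{\Theta\alpha\sqrt{\widetilde W_-}}{\sqrt 2} \leq \Theta\alpha\sqrt{\widetilde W_-}$, which is the claimed bound.

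The main obstacle I anticipate is not any single estimate but rather getting the bookkeeping of \cref{thm:inverseWitness} and the sign/scalar conventions exactly right: identifying the correct witness vector $\ket{v}$ in $\tilde H$ (built from $\bra{\widetilde\omega}$ via $\Aa{\alpha}^\dagger$) whose $\PtHx$-image is $\pm\tm$ and which lies in $(\ker\Aa{\alpha})^\perp$ so that $\Lal{\alpha}\ket{v}=0$, and making sure the factors of $\alpha$ and $w_+(x)$ line up so that the final constant is $\leq 1$ rather than something larger. The input $x$ being a $1$-input (rather than the usual $0$-input for negative witnesses in decision algorithms) is the reason we must use the \emph{approximate} negative witness and \cref{thm:inverseWitness} in place of an exact one, and this is the subtle point to flag.
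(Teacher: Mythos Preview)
Your proposal is correct and follows essentially the same route as the paper: your final choice $\ket{v}=\alpha(\bra{\widetilde\omega}\Aa{\alpha})^\dagger$ simplifies to $\ket{\hat 0}-\alpha(\bra{\widetilde\omega}A)^\dagger$, which is exactly the paper's $\ket{v}$, and the norm bound and application of the Effective Spectral Gap Lemma are identical. Two minor notes: your rowspace argument for $\Lal{\alpha}\ket{v}=0$ is a clean alternative to the paper's direct orthogonality check, and in fact $\PtHx\ket{v}=+\tm$ (not $-\tm$), though as you observe the sign is irrelevant.
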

\begin{proof}
Let $\bra{\widetilde{\omega}}$ be an optimal negative approximate witness for $x$ (see \cref{def:app_negWit}), and let
\begin{equation}\label{eq:vdef}
\ket{v}=\ket{\hat{0}}-\alpha(\bra{\widetilde{\omega}} A)^\dagger.
\end{equation}
Using \cref{thm:inverseWitness} and the fact that $\Pi_x\ket{\hat{0}}=\ket{\hat{0}}$, we have that 
\begin{equation}\label{eq:vprojection}
\Pi_x\ket{v} = \ket{\hat 0} - \alpha \Pi_{H(x)}(\bra{\widetilde\omega} A )^\dagger = \ket{\hat 0} - \alpha\frac{\ket{w}}{w_+(x)} 
=\tm.
\end{equation}
Now we will show $\Lal{\alpha}\ket{v}=0$. Let $\ket{k}$ be in the kernel of $\Aa{\alpha}$, so $\Aa{\alpha}\ket{k}=0$. Using \cref{eq:alphaIntro} and rearranging,
\begin{equation}
A\ket{k}=\frac{1}{\alpha}\ket{\tau}\braket{\hat{0}}{k}.
\label{eq:kernalProp}
\end{equation} 
Then
\begin{equation}
\begin{split}
\braket{v}{k}&=\braket{\hat{0}}{k}-\alpha\bra{\widetilde{\omega}} A\ket{k}\\
&=\braket{\hat{0}}{k}-\braket{\hat{0}}{k}\braket{\widetilde{\omega}}{\tau}\\
&=0
\end{split}
\end{equation}
where we have used \cref{eq:vdef,eq:kernalProp} and the properties of optimal negative approximate witnesses. 
Thus $\ket{v}$ is orthogonal to any element of the kernel of $\Aa{\alpha}$, so 
$\Lal{\alpha}\ket{v}=0$.

Now we can apply \cref{spec_gap_lemm} to $\ket{v}$ to get:
\begin{equation}
\begin{split}
\|P_\Theta(\U{\mathcal{P}}{x}{\alpha})\tm\|&=\|P_\Theta(\U{\mathcal{P}}{x}{\alpha})\Pi_x\ket{v}\|\\
 &\leq \frac{\Theta}{2}\norm{\ket{v}}\\
&=\frac{\Theta}{2}\sqrt{1+\alpha^2\widetilde{w}_-(x,\mathcal{P})}\\
&\leq \Theta \alpha\sqrt{\widetilde{W}_-},
\end{split}
\end{equation}
where in the first line we have used \cref{eq:vprojection}, and in the last,
our assumption that $\alpha^2\widetilde{W}_-\geq 1$.
\end{proof}

\begin{corollary}
\label{cor:tm_no_0}
$\|P_0(U(\mathcal{P},x,\alpha))\tm\|=0$.
\end{corollary}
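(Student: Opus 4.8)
The plan is to derive Corollary~\ref{cor:tm_no_0} as the limiting case $\Theta \to 0$ of Lemma~\ref{lem:tm_low_overlap}. Concretely, observe that $P_0(U(\mathcal{P},x,\alpha))$ is the projection onto the exact $0$-eigenphase space, which is contained in $P_\Theta(U(\mathcal{P},x,\alpha))$ for every $\Theta > 0$; hence $\|P_0(U(\mathcal{P},x,\alpha))\tm\| \leq \|P_\Theta(U(\mathcal{P},x,\alpha))\tm\|$. Now I would invoke Lemma~\ref{lem:tm_low_overlap}: its hypothesis $\alpha^2 \geq 1/\widetilde{W}_-$ is in force here (it is a standing assumption for the states $\tp,\tm$ as defined), so $\|P_\Theta(U(\mathcal{P},x,\alpha))\tm\| \leq \Theta\,\alpha\sqrt{\widetilde{W}_-}$. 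The right-hand side tends to $0$ as $\Theta \to 0$, while the left-hand side is independent of $\Theta$, forcing $\|P_0(U(\mathcal{P},x,\alpha))\tm\| = 0$.

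An alternative, slightly more self-contained route is to rerun the proof of Lemma~\ref{lem:tm_low_overlap} with $\Theta = 0$ directly: the vector $\ket{v}$ of Eq.~\eqref{eq:vdef} satisfies $\Lal{\alpha}\ket{v}=0$ and $\Pi_x\ket{v}=\tm$, so the Effective Spectral Gap Lemma (Lemma~\ref{spec_gap_lemm}) with $\Theta=0$ gives $\|P_0(U(\mathcal{P},x,\alpha))\tm\| = \|P_0(U(\mathcal{P},x,\alpha))\Pi_x\ket{v}\| \leq 0$. Either way the key input is that $\tm$, restricted via $\Pi_x$, comes from a vector killed by $\Lal{\alpha}$, so the Effective Spectral Gap Lemma applies.

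I do not expect any real obstacle here — the statement is essentially a corollary by continuity/monotonicity of $\Theta \mapsto P_\Theta$. The only point to be a little careful about is making explicit that the hypothesis $\alpha^2 \widetilde{W}_- \geq 1$ needed for Lemma~\ref{lem:tm_low_overlap} is available (it is the running assumption under which $\tm$ is considered, and it will be enforced by the choice of $\alpha$ in the algorithm), and noting that $P_0 \preceq P_\Theta$ as projections so that norms are monotone. Given those, the one-line argument above suffices.
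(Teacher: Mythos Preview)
Your proposal is correct and matches the paper's approach: the paper's proof is the single line ``Apply \cref{lem:tm_low_overlap} with $\Theta$ set to $0$,'' which is exactly your second route (and your first route is just a mild rephrasing via monotonicity of $\Theta\mapsto P_\Theta$). One small remark: the hypothesis $\alpha^2\geq 1/\widetilde{W}_-$ in \cref{lem:tm_low_overlap} is only used in the final simplification $\tfrac{\Theta}{2}\sqrt{1+\alpha^2\widetilde{w}_-}\leq \Theta\alpha\sqrt{\widetilde{W}_-}$, which is vacuous at $\Theta=0$, so the corollary in fact holds for all $\alpha$ without needing that standing assumption.
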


\begin{proof}
Apply \cref{lem:tm_low_overlap} with $\Theta$ set to $0$.
\end{proof}

\noindent To prove \cref{thm:witness_generation}, we analyze the following algorithm:

\begin{algorithm}[H]
    \DontPrintSemicolon
    \SetKwInOut{Input}{Input}
    \SetKwInOut{Output}{Output}
    \SetKwRepeat{Do}{do}{while}
    \Input{Error tolerance $\delta$, accuracy $\epsilon$, span program $\mathcal{P}$ that decides a function $f$, oracle $O_x$}
    \Output{A quantum state $\ket{\hat{w}}/\|\ket{\hat{w}}\|$ such that for the optimal positive witness $\ket{w}$ for $x$, $\|\ket{w}/\sqrt{w_+(x)}-\ket{\hat{w}}/\|\ket{\hat{w}}\|\|^2\leq O(\epsilon)$ with probability $1-O(\delta)$}
    $\epsilon'\gets \min\{\epsilon,1/96\}; \quad T\gets\left\lceil\log\sqrt{W_+\widetilde{W}_-}\right\rceil$;\quad $p\gets \min\left\{\delta/\log(W_+\widetilde{W}_-),1/\sqrt{W_+\widetilde{W}_-}\right\}$ \label{line:defs}\;
    \tcp{Probing Stage}
    \For{$i= 0$ \KwTo $T$ }{
     $\alpha\gets2^i/\sqrt{\widetilde{W}_-}$\;
    $\hat{a}\gets$ Iterative Amplitude Estimation (\cref{lem:ampEst}) estimate (with probability of failure $p$ and additive error $1/48$) of the probability of outcome $\ket{0}_B$ in register $B$ when  $D(\U{\mathcal{P}}{x}{\alpha})$ (see \cref{lem:phase_det}) acts on $\ket{\hat{0}}_A\ket{0}_B$ with error $\epsilon'$, precision $\sqrt{\frac{\epsilon'}{\alpha^2\widetilde{W}_-}}$\label{line:iter_est}\;
      \lIf{$\frac{15}{48}\leq\hat{a}\leq\frac{35}{48}$}
      {
        Break \label{line:break}
      }
    }
    \tcp{State Generation Stage}
    \For{$j=1$ \KwTo $\log(1/\delta)$}
    {
    Apply $D(\U{\mathcal{P}}{x}{\alpha})$ to $\ket{\hat{0}}_A\ket{0}_B$ with error $\epsilon'$, precision $\sqrt{\frac{\epsilon'}{\alpha^2\widetilde{W}_-}}$\;
    Make a measurement with outcome $M=\{(I-\proj{\hat{0}})_A\otimes \proj{0}_B\}$ on the resultant state \;
    \If{Measure outcome $M$}{
      Return the resultant state\;
    }
    }
    Return ``failure''\;

    \caption{\texttt{WitnessGeneration}$(\mathcal{P},O_x,\delta,\epsilon)$}
    \label{alg:state_generation}
\end{algorithm}

To analyze \cref{alg:state_generation}, will need the following lemma and corollary. In \cref{alg:state_generation}, we estimate the probability of measuring
the outcome $\ket{0}$ in the $B$ register after doing phase estimation. In the following lemma, 
we prove this probability is closely related to $a_0$ from \cref{eq:a_defs}.
\begin{lemma}\label{lemm:aplusProb} 
Applying $D(\U{\mathcal{P}}{x}{\alpha}))$ with error $\epsilon$ and precision 
 $\sqrt{\frac{\epsilon}{\alpha^2\widetilde{W}_-}}$ 
 (see \cref{lem:phase_det}) to input state 
 $\ket{\hat{0}}_A\ket{0}_B$ for 
 $\alpha\geq 1/\sqrt{\widetilde{W}_-}$ results in  the outcome 
 $\ket{0}$ in the $B$ register with
 probability in the range $[a_0,a_0+2\epsilon].$
\end{lemma}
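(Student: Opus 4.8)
The plan is to use the decomposition of $\ket{\hat 0}$ from \cref{eq:a_defs} together with \cref{lem:phase_det} applied to $U=\U{\mathcal P}{x}{\alpha}$, with $\Theta = \sqrt{\epsilon/(\alpha^2\widetilde W_-)}$. First I would recall that $\ket{\hat 0} = a_0\tp + a_+\tm$, where $\tp$ is a $0$-phase eigenvector of $U$ (shown in the paragraph before \cref{lem:tm_low_overlap}) and $\tm$ has small overlap with the low-phase space (by \cref{lem:tm_low_overlap} and \cref{cor:tm_no_0}). Crucially I would first check that $\tp$ and $\tm$ are orthogonal — this is exactly why they were defined with those coefficients — so that $\|P_0\ket{\hat 0}\|^2$ and $\|P_\Theta\ket{\hat 0}\|^2$ split cleanly along the $\tp,\tm$ directions.

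For the lower bound: since $P_0\tp = \tp$ and $P_0\tm = 0$ (\cref{cor:tm_no_0}), we get $P_0\ket{\hat 0} = a_0\tp$, hence $\|P_0\ket{\hat 0}\|^2 = a_0^2\|\tp\|^2$. Here one must be slightly careful: $\tp$ is not a unit vector. From \cref{eq:psi_defs}, $\|\tp\|^2 = 1 + \|\ket w\|^2/\alpha^2 = 1 + w_+(x)/\alpha^2 = 1/a_0$ (using $\|\ket w\|^2 = w_+(x)$ since $\ket w$ is the optimal positive witness, and the definition of $a_0$). So $\|P_0\ket{\hat 0}\|^2 = a_0^2 \cdot (1/a_0) = a_0$. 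Then item~\ref{it:PCupper} of \cref{lem:phase_det} gives the probability of outcome $\ket 0_B$ is at least $\|P_0\ket{\hat 0}\|^2 = a_0$, which is the claimed lower end.

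For the upper bound: again by item~\ref{it:PCupper}, the probability is at most $\|P_\Theta\ket{\hat 0}\|^2 + \epsilon$. Decomposing along $\tp \oplus \tm$ and using that $P_\Theta\tp = \tp$ while $\|P_\Theta\tm\| \le \Theta\alpha\sqrt{\widetilde W_-}$ by \cref{lem:tm_low_overlap} (whose hypothesis $\alpha^2 \ge 1/\widetilde W_-$ is exactly the assumed $\alpha \ge 1/\sqrt{\widetilde W_-}$), I would bound
\[
\|P_\Theta\ket{\hat 0}\|^2 \le a_0^2\|\tp\|^2 + a_+^2\|P_\Theta\tm\|^2 \le a_0 + a_+^2\,\Theta^2\alpha^2\widetilde W_-.
\]
Plugging in $\Theta^2 = \epsilon/(\alpha^2\widetilde W_-)$ collapses the second term to $a_+^2\epsilon \le \epsilon$ (since $a_+\le 1$). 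Hence the probability is at most $a_0 + \epsilon + \epsilon = a_0 + 2\epsilon$, completing the range $[a_0, a_0+2\epsilon]$.

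The only mildly delicate point — and the one I would be most careful about — is the normalization bookkeeping: $\tp$ and $\tm$ are deliberately \emph{not} unit vectors, so the cross term in $\|P_\Theta\ket{\hat 0}\|^2$ vanishes only because of orthogonality, and the factor $\|\tp\|^2 = 1/a_0$ is what turns $a_0^2\|\tp\|^2$ into the clean value $a_0$. I would verify $\tpmbk = 0$ and $\|\tp\|^2 = 1/a_0$ explicitly at the start; everything else is a direct substitution into \cref{lem:phase_det,lem:tm_low_overlap}.
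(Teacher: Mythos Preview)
Your proposal is correct and follows essentially the same route as the paper: decompose $\ket{\hat 0}=a_0\tp+a_+\tm$, use that $\tp$ is a $0$-phase eigenvector so $P_\Theta\tp=\tp$ and $\|\tp\|^2=1/a_0$, use \cref{cor:tm_no_0} for the lower bound and \cref{lem:tm_low_overlap} with the chosen $\Theta$ for the upper bound, then apply the two-sided estimate in \cref{lem:phase_det}\,\cref{it:PCupper}. The paper makes one point slightly more explicit than you do: the vanishing cross term in $\|P_\Theta\ket{\hat 0}\|^2$ is the orthogonality of $P_\Theta\tp$ and $P_\Theta\tm$ (not just of $\tp$ and $\tm$), which follows because $P_\Theta\tp=\tp$ together with $\tpmbk=0$; you allude to this but it is worth stating directly.
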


\begin{proof}
Throughout the proof, let $U=U(\mathcal{P},x,\alpha)$. 
The probability that we measure $\ket{0}$ in register $B$ after we apply  
$D(U)$ with error $\epsilon$ 
and precision $\Theta$ to $\ket{\hat{0}}_A\ket{0}_B$ is, by \cref{lem:phase_det} \cref{it:PCupper}, at most
\begin{equation}\label{eq:bound1}
\|P_\Theta(U)\ket{\hat{0}}\|^2+\epsilon= \|a_0P_\Theta(U)\tp+a_+P_\Theta(U)\tm\|^2+\epsilon,
\end{equation}
by \cref{eq:a_defs}. Now $P_\Theta(U)\tp$ and $P_\Theta(U)\tm$ are orthogonal, since
\begin{equation}
\tpb P_\Theta(U) P_\Theta(U) \tm=\tpmbk=0,
\end{equation}
where we've used that $P_\Theta(U)\tp=\tp$ and that $\tp$ and $\tm$ are orthogonal. 
Continuing from \cref{eq:bound1} and using the orthogonality condition, we have, 
using $\Theta=\sqrt{\frac{\epsilon}{\alpha^2\widetilde{W}_-}}$,
\begin{align}
\|P_\Theta(U)\ket{\hat{0}}\|^2+\epsilon&=a_0^2\|P_\Theta(U)\tp\|^2+a_+^2\|P_\Theta(U)\tm\|^2+\epsilon\nonumber\\
&\leq a_0^2\|\tp\|^2+a_+^2\Theta^2\alpha^2\widetilde{W}_-+\epsilon & \mbox{by \cref{lem:tm_low_overlap}, since $\alpha^2\widetilde{W}_-\geq 1$}\nonumber\\
&\leq a_0+a_+^2\epsilon+\epsilon\nonumber\\
&\leq a_0+2\epsilon,
\label{eq:up_bound}
\end{align}
where we have used that $\|\tp\|^2=1/{a_0}$, and $a_+\leq 1$ (see \cref{eq:a_defs}) .

By \cref{lem:phase_det} \cref{it:PCupper}, the probability that we measure
$\ket{0}$ in register $B$ after applying $D(\U{\mathcal{P}}{x}{\alpha})$ on 
$\ket{\hat{0}}_A\ket{0}_B$ with error ${\epsilon}$ and any precision is at least
\begin{align}
\|P_0(U)\ket{\hat{0}}\|^2=\|a_0P_0(U)\tp+a_+P_0(U)\tm\|^2=a_0^2\|\tp\|^2=a_0,
\label{eq:low_bound}
\end{align}
where we have used \cref{cor:tm_no_0}.
\end{proof}

\begin{corollary} In \cref{alg:state_generation}, if in an iteration of the
 Probing Stage, Iterative Amplitude Estimation does not fail at \cref{line:iter_est} and
 subsequently causes a break at \cref{line:break}, then
\begin{equation}\label{eq:confined}
a_0\in\left[\frac{1}{4},\frac{3}{4}\right],
\qquad \frac{a_0^2w_+(x)}{\alpha^2}\in\left[\frac{3}{16},\frac{1}{4}\right].
\end{equation}
\label{cor:confined}
\end{corollary}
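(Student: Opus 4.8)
The plan is to track the two sources of error that separate the estimate $\hat a$ computed at \cref{line:iter_est} from the quantity $a_0$ of \cref{eq:a_defs}: the additive error $1/48$ of Iterative Amplitude Estimation (\cref{lem:ampEst}), and the slack $2\epsilon'$ between the true outcome probability and $a_0$ guaranteed by \cref{lemm:aplusProb}. Since $\epsilon'\le 1/96$ by its definition at \cref{line:defs}, we have $2\epsilon'\le 1/48$, so the break test $\tfrac{15}{48}\le\hat a\le\tfrac{35}{48}$ will confine $a_0$ to $[\tfrac14,\tfrac34]$, and the second containment will then follow by an elementary calculation.

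First I would note that whenever the Probing Stage reaches \cref{line:iter_est} the current parameter is $\alpha=2^i/\sqrt{\widetilde{W}_-}$, so $\alpha^2\widetilde{W}_-=4^i\ge 1$, and in particular $\alpha\ge 1/\sqrt{\widetilde{W}_-}$; hence \cref{lemm:aplusProb} applies with error $\epsilon'$ and precision $\sqrt{\epsilon'/(\alpha^2\widetilde{W}_-)}$ (exactly the parameters used in \cref{alg:state_generation}), giving that the true probability $a$ of measuring $\ket{0}$ in register $B$ satisfies $a_0\le a\le a_0+2\epsilon'\le a_0+\tfrac1{48}$. Under the hypothesis that the Iterative Amplitude Estimation call does not fail, $|\hat a-a|\le\tfrac1{48}$, and under the hypothesis that the break at \cref{line:break} occurs, $\tfrac{15}{48}\le\hat a\le\tfrac{35}{48}$; together these give $a\in[\tfrac{14}{48},\tfrac{36}{48}]$. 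Combining with $a_0\le a$ yields $a_0\le\tfrac{36}{48}=\tfrac34$, and combining with $a_0\ge a-\tfrac1{48}$ yields $a_0\ge\tfrac{14}{48}-\tfrac1{48}=\tfrac{13}{48}>\tfrac14$, which establishes $a_0\in[\tfrac14,\tfrac34]$.

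For the second containment I would rewrite the target quantity using the closed form of $a_0$ in \cref{eq:a_defs}: from $a_0=1/(1+w_+(x)/\alpha^2)$ we get $w_+(x)/\alpha^2=(1-a_0)/a_0$, hence $\tfrac{a_0^2\,w_+(x)}{\alpha^2}=a_0(1-a_0)$ (equivalently, this equals $a_0a_+$ since $a_0+a_+=1$). The function $t\mapsto t(1-t)$ is concave on $[0,1]$, with maximum value $\tfrac14$ at $t=\tfrac12$ and value $\tfrac{3}{16}$ at $t=\tfrac14$ and $t=\tfrac34$, so for $a_0\in[\tfrac14,\tfrac34]$ it lies in $[\tfrac{3}{16},\tfrac14]$, as claimed.

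I do not anticipate a real obstacle; the only care needed is the bookkeeping of inequalities in the second paragraph---keeping the one-sided bound of \cref{lemm:aplusProb} ($a\ge a_0$ on one side, $a\le a_0+2\epsilon'$ on the other) oriented correctly when passing between $a$ and $a_0$, and confirming that the combined deviation $|\hat a-a_0|\le 2/48$ is small enough that the break window $[\tfrac{15}{48},\tfrac{35}{48}]$ forces $a_0$ into $[\tfrac14,\tfrac34]$.
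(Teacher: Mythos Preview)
Your proposal is correct and follows essentially the same approach as the paper: you combine the $1/48$ additive error of Iterative Amplitude Estimation with the $2\epsilon'\le 1/48$ slack from \cref{lemm:aplusProb} to confine $a_0$ to $[13/48,36/48]\subset[1/4,3/4]$, and then observe that $a_0^2 w_+(x)/\alpha^2=a_0(1-a_0)$ to obtain the second containment. Your treatment is in fact slightly more explicit than the paper's, in that you verify the hypothesis $\alpha\ge 1/\sqrt{\widetilde{W}_-}$ of \cref{lemm:aplusProb} and spell out the range of $t\mapsto t(1-t)$ on $[1/4,3/4]$.
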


\begin{proof}
If Iterative Amplitude Estimation does not fail at \cref{line:iter_est} and causes a break
at \cref{line:break}, then we have an estimate $\hat{a}$ that is in the range 
$[\frac{15}{48},\frac{35}{48}]$. Thus, because of the additive error of $1/48$ in
Iterative Amplitude Estimation, the probability of measuring outcome $\ket
{0}_B$ is in the range $[\frac{14}{48},\frac{36}{48}]$. By 
\cref{lemm:aplusProb}, this same probability is in the range $[a_0,a_0+2\epsilon']$, 
so in particular these two ranges overlap. Thus, since we choose $2\epsilon'$ to be at most
$1/48$, we have that 
\begin{equation}\label{eq:a+confined}
a_0\in\left[\frac{13}{48},\frac{36}{48}\right]\subset \left[\frac{1}{4},\frac{3}{4}\right].
\end{equation}
Using $a_0=(1+\frac{w_+(x)}{\alpha^2})^{-1}$
 (see \cref{eq:a_defs}), this implies the stated ranges for 
 $\frac{a_0^2w_+(x)}{\alpha^2}=a_0(1-a_0)$.
\end{proof}

Now we prove the main performance guarantees of \cref
{alg:state_generation}, bounding the success probability and the expected
query complexity, thus proving \cref{thm:witness_generation}.

\begin{proof} [Proof of \cref{thm:witness_generation}]
Letting $U=U(\mathcal{P},x,\alpha)$, we analyze \cref{alg:state_generation}. 
We first show that the algorithm will produce the desired state if both the
Probing Stage and the State Generation stage are successful. Then we will
analyze the probability of this occurring, in order to bound the success
probability of the algorithm.

We say the Probing Stage is successful if in some iteration, Iterative Amplitude
estimation, having not failed thus far, does not fail and then triggers a
break at Line 6, in which case we can apply \cref{cor:confined}. Under these
assumptions, we consider the outcome of a successful State Generation stage,
when we achieve the measurement outcome 
$M=(I-\proj{\hat{0}})_A\otimes \proj{0}_B$. The non-normalized state $\ket{\hat{w}}$ that is produced upon
measurement outcome $M$  is
\begin{align}\label{eq:wAnanlysis}
\ket{\hat{w}}&=(I-\proj{\hat{0}})_A\otimes \proj{0}_B D(U)\ket{\hat{0}}_A\ket{0}_B\nonumber\\
&=a_0(I-\proj{\hat{0}})_AD(U)\tp_A\ket{0}_B+a_+(I-\proj{\hat{0}})_A\otimes \proj{0}_B D(U)\tm_A\ket{0}_B\nonumber\\
&=a_0(I-\proj{\hat{0}})_A\tp_A\ket{0}_B+a_+(I-\proj{\hat{0}})_A\otimes \proj{0}_B D(U)\tm_A\ket{0}_B\nonumber\\
&=\frac{a_0}{\alpha}\ket{w}_A\ket{0}_B+\underbrace{a_+{(I-\proj{\hat{0}})_A\otimes \proj{0}_B D(U)\tm_A\ket{0}_B}}_{\eqqcolon \ket{\xi}},
\end{align}
where in the final equality, 
we used \cref{lem:phase_det} \cref{it:PClower}, since $P_0(U)\tp=\tp$.

We would like to bound $\Delta$, where
\begin{align}
\Delta\coloneqq \norm{\frac{\ket{\hat w}}{\norm{\ket{\hat w}}}-\frac{\ket{w}_A\ket{0}_B}{\sqrt{w_+(x)}}} 
&= \norm{\frac{\frac{a_0}{\alpha}\ket{w}_A\ket{0}_B+\ket{\xi}}{\norm{\ket{\hat w}}}-\frac{\ket{w}_A\ket{0}_B}{\sqrt{w_+(x)}}}\nonumber\\
&\leq \abs{\frac{a_0}{\alpha \norm{\ket{\hat w}}}-\frac{1}{\sqrt{w_+(x)}}}\norm{\ket{w}}+\frac{\norm{\ket{\xi}}}{\norm{\ket{\hat w}}}&\mbox{by triangle ineq.}\nonumber\\
&\leq \abs{\frac{a_0\sqrt{w_+(x)}}{\alpha \norm{\ket{\hat w}}}-1}+\frac{\norm{\ket{\xi}}}{\norm{\ket{\hat w}}}. \label{eq:Delta1}
\end{align}

To bound $\|\ket{\xi}\|$, we have
\begin{equation}\label{eq:etaAnalysis}
\begin{split}
\norm{\ket{\xi}}^2=a_+^2\left\|(I-\proj{\hat{0}})_A\otimes \proj{0}_B D(U)\tm_A\ket{0}_B\right\|^2
\leq&\|I_A\otimes \proj{0}_B D(U)\tm_A\ket{0}_B\|^2\\
\leq&\|P_\Theta\tm\|^2+\epsilon'\\
\leq& \Theta^2 \alpha^2{\widetilde{W}_-}+\epsilon'
\leq 2\epsilon',
\end{split}
\end{equation} 
where the first inequality is because a projection can only
 decrease the norm of a vector, and $a_+\leq 1$; the second inequality is
 from by \cref{lem:phase_det} \cref{it:PCupper}, and the third inequality
 comes from \cref{lem:tm_low_overlap} and our choice of $\Theta$. 

Next, to bound $\|\ket{\hat{w}}\|$, 
we use the triangle inequality on the final line of \cref{eq:wAnanlysis}, and \cref{eq:etaAnalysis} to get
\begin{equation}\label{eq:norm-hat-w}
\frac{a_0\sqrt{w_+(x)}}{\alpha}-\sqrt{2\epsilon'}\leq \norm{\ket{\hat w}} \leq \frac{a_0\sqrt{w_+(x)}}{\alpha}+\sqrt{2\epsilon'}.
\end{equation}
By our choice of $\epsilon'$, we have $2\epsilon'\leq 1/48$, and also 
applying \cref{cor:confined} to \cref{eq:norm-hat-w}, we have
\begin{equation}\label{eq:norm-hat-w-numbers}
\frac{1}{4}< \sqrt{3/16}-\sqrt{1/48}\leq \norm{\ket{\hat w}} \leq\sqrt{1/4}+\sqrt{1/48}< \frac{3}{4}.
\end{equation}
Rearranging \cref{eq:norm-hat-w} and applying \cref{eq:norm-hat-w-numbers}, we have
\begin{align}\label{eq:abs-bound}
\left|\frac{a_0\sqrt{w_+(x)}}{\alpha\norm{\ket{\hat w}}}-1\right|\leq 
\frac{\sqrt{2\epsilon'}}{\norm{\ket{\hat w}}}.
\end{align}
Then plugging \cref{eq:etaAnalysis,eq:norm-hat-w-numbers,eq:abs-bound} 
into \cref{eq:Delta1} we have:
\begin{align}
\Delta\leq \frac{2\sqrt{2\epsilon'}}{\norm{\ket{\hat w}}} < 8\sqrt{2\epsilon'} = O(\epsilon). 
\end{align}

Now we analyze the probability that both the Probing Stage and State Generation Stage are successful, resulting in the state 
$\ket{\hat w}/\norm{\ket{\hat w}}$ as in \cref{eq:Delta1}. First note that
there is a value of $\alpha$ (if we iterate in the Probing Stage long
enough), that will cause us to break out of the Probing Stage if Iterative
Amplitude Estimation does not fail. In particular, when $w_+
(x)/\alpha^2\in\left[1/2,2\right]$, then
from \cref{eq:a_defs} $a_0\in \left[1/3,2/3\right]$. 
 Thus by \cref
{lemm:aplusProb} and since $2\epsilon'\leq 1/48$, the probability of of
outcome $\ket{0}_B$ is in $[16/48,33/48]$, which in Line 5 causes us to leave
the Probing Stage if Iterative Amplitude Estimation does not fail.
This occurs for some value of $\alpha$,
as we are doubling $\alpha$ at each iteration of the Probing Stage, causing $w_+(x)/\alpha^2$ to decrease, and initially we 
have $w_+(x)/\alpha^2 = w_+(x)\widetilde{W}_-\geq 1$.\footnote{To see that $w_+(x)\widetilde{W}_-\geq 1$, 
let $N_+=\min\{\norm{\ket{w}}^2:A\ket{w}=\ket{\tau}\}$, and 
$N_-=\min\{\norm{\bra{\omega}A}^2:\braket{\omega}{\tau}=1\}$. Then $w_+(x)\geq N_+$, 
and $\widetilde{W}_-\geq N_-$, and by \cite[Section 2.4]{itoApproximateSpanPrograms2019}, $N_+N_-=1$.}

Thus if no error occurs, the condition of Line 5 will be satisfied after some
number $L$ of rounds such that $L\in O(\log({w_+(x)\widetilde{W}_-}))=O(\log (W_+\widetilde{W}_-))$. As the
probability of failing a single Iterative Amplitude Estimation round is
$p\leq \delta/\log(W_+\widetilde{W}_-)$ (see \cref{line:defs}), the
probability of leaving the Probing Stage when Line 5 is satisfied (rather than
before or after) is at least
\begin{equation}\label{eq:successPS}
(1-p)^{L}=1-O(\delta).
\end{equation}

Assuming that we have successfully left the Probing Stage without failure, we
next calculate the probability of getting a measurement outcome $M$ during
the at most $\log(1/\delta)$ iterations of the State Generation Stage. The
probability of getting outcome $M$ is lower bounded by (from \cref{eq:norm-hat-w-numbers})
\begin{equation}
\|\ket{\hat{w}}\|^2\geq 1/16.
\end{equation}
Thus the probability of success in the State Generation Stage is
\begin{equation}\label{eq:successSG}
1-(15/16)^{\log(1/\delta)}=1-O(\delta).
\end{equation}

Combining \cref{eq:successPS,eq:successSG}, our probability of successfully producing a state $\ket{\hat w}/\norm{\ket{\hat w}}$ as in \cref{eq:Delta1} is
\begin{equation}
(1-O(\delta))(1-O(\delta))=1-O(\delta).
\end{equation}

To calculate the expected query complexity, we first note that if we terminate in round
$t\in \{0,\dots,\ceil{\log \sqrt{W_+}}\}$ of the Probing Stage, we use
\begin{equation}\label{eq:tcost}
\begin{split}
&\sum_{i=0}^tO\left(\frac{2^{i}}{\sqrt{\epsilon}}\log\left(\frac{1}{\epsilon}\right)\log\left(\frac{1}{p}\right)\right)
+O\left(\log\left(\frac{1}{\delta}\right)\frac{2^{t}}{\sqrt{\epsilon}}\log\left(\frac{1}{\epsilon}\right)\right)\\
=&O\left(\frac{2^t}{\sqrt{\epsilon}}\log\left(\frac{1}{\epsilon}\right)\log\left(\frac{1}{p\delta}\right)\right)
\end{split}
\end{equation}
queries, which comes from the cost of Iterative Amplitude Estimation
 (\cref{lem:ampEst}) applied to phase estimation (\cref{lem:phase_det}) in
 each round of the Probing Stage up to the $t^{\textrm{th}}$ round, plus the cost
 of phase estimation in the State Conversion Stage.

The probability that we terminate in any round $t$ when we have an estimate
$\hat{a}$ that is not in the range $[\frac{15}{48},\frac{35}{48}]$ is at most
$p$. Using \cref{eq:tcost} the the total contribution to the average query complexity
from all such rounds is at most
\begin{equation}\label{eq:BadRounds}
\begin{split}
\sum_{t=0}^{\lceil\log\sqrt{W_+\widetilde{W}_-}\rceil} O\left(p\frac{2^t}{\sqrt{\epsilon}}\log\left(\frac{1}{\epsilon}\right)\log\left(\frac{1}{p\delta}\right)\right)
= O\left(p\sqrt{\frac{W_+\widetilde{W}_-}{\epsilon}}\log\left(\frac{1}{\epsilon}\right)\log\left(\frac{1}{p\delta}\right)\right).
\end{split}
\end{equation}
where in the sum we have actually included all rounds, not just
 those that satisfy when $\hat{a}$ is not in the range $[\frac{15}{48},\frac
 {35}{48}]$, which is acceptable since we are deriving an upper bound on the expected query complexity.

If we terminate at a round $t^*$ when $\hat{a}$ is in the range 
$[\frac{15}{48},\frac{35}{48}]$, which happens when Iterative Amplitude Estimation 
does not fail at Line 4 and then causes a break at Line 5, from \cref{eq:confined} we have 
$\frac{w_+(x)}{\alpha^2}\in\left[\frac{1}{3},4\right]$, and
$2^{t^*}=\alpha\sqrt{\widetilde{W}_-}$ so 
$\sqrt{w_+(x)\widetilde{W}_-}/2\leq 2^{t^*}\leq \sqrt{3w_+(x)\widetilde{W}_-}$. Because 
we double $\alpha$ at each iteration, there are only a
 constant number of rounds where we will find $\hat{a}$ in the appropriate
 range, and we trivially upper bound the probability of terminating at any such
 round by $1$. Using \cref{eq:tcost}, these rounds add
\begin{equation}\label{eq:GoodRounds}
O\left(\sqrt{\frac{w_+(x)\widetilde{W}_-}{\epsilon}}\log\left(\frac{1}{\epsilon}\right)\log\left(\frac{1}{p\delta}\right)\right)
\end{equation}
to the total expected query complexity.

Combining \cref{eq:BadRounds,eq:GoodRounds}, and using that we set $p$ to be 
$O\left(1/\sqrt{W_+\widetilde{W}_-})\right)$ (\cref{line:defs}), we find the 
expected query complexity is
\begin{equation}\label{eq:av_quer_complexity}
O\left(\sqrt{\frac{w_+(x)\widetilde{W}_-}{\epsilon}}\log\left(\frac{1}{\epsilon}\right)\log\left(\frac{1}{p\delta}\right)\right)=\widetilde{O}\left(\sqrt{\frac{w_+(x)\widetilde{W}_-}{\epsilon}}\log\left(\frac{1}{\delta}\right)\right).\qedhere
\end{equation}
\end{proof}

\section{Graph Applications}\label{sec:applications}

\subsection{Finding an Edge on a Path}\label{sec:edge-finding}
In this section, we consider the problem of finding an edge on an $st$-path in 
$G(x)$, which we denote
$st\mbox{-}\textsc{edge}_G(x)$. That is, given query access to a string $x$ that determines a subgraph $G(x)=(V,E(x))$ of an $n$-vertex graph $G$, as described in \cref{sec:graph-theory} (if $G$ is a complete graph, $x$ is just the adjacency matrix of $G(x)$),  with $s,t\in V$ such that there is at least one path from $s$ to $t$ in $G(x)$, output an edge
$e\in E(x)$ that is on a (self-avoiding) path from $s$ to $t$.

Classically, it is hard to imagine that this problem is much easier than
finding a path, and indeed, in our classical lower bound in \cref
{thm:class_edge_finding} we force the algorithm to learn a complete
path before it can find any edge on the path. However, we find that quantumly,
when there are short or multiple paths, this problem is easier than any path finding algorithms known.
This opens up the possibility of improved quantum algorithms for cases where it is not necessary to know the complete path,
like the $st$-cut set algorithm of \cref{sec:sabotage}.

\begin{restatable}{theorem}{pathEdgeSample}
Fix $p>0$, and a family of $n$-vertex graphs $G$ with vertices $s$ and $t$. There is a quantum algorithm (\cref{alg:edge_finder}) that solves $st\mbox{-}\textsc{edge}_G(x)$ with probability $1-O(p)$ and 
uses $\widetilde{O}\left(\frac{n\sqrt{R_{s,t}(G(x))}}{p}\right)$ expected queries on input $x$. 
More precisely, with probability $1-O(p)$, the algorithm samples from a distribution $\hat{q}$ such the total variation distance between $\hat{q}$ and $\q$ is $O(\sqrt{p})$, where $\q(u,v)$ (defined in \cref{eq:optimal_distribution}) is proportional to $\theta^*(u,v)^2$, where $\theta^*$ is the optimal unit $st$-flow on $G(x)$.
 \label{thm:edge_finder}
\end{restatable}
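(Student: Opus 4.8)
The plan is to instantiate the span-program witness-state generator of \cref{thm:witness_generation} with the $st$-connectivity span program $\mathcal{P}_{G_{st}}$ from \cref{eq:st-conn-span-program}, and then measure the output state in the standard edge basis. Since $st\textsc{-conn}_G(x)=1$ by hypothesis, \cref{thm:witness_generation} applies, and its output $\ket{\hat w}/\norm{\ket{\hat w}}$ lives on the register $\mathrm{span}\{\ket{(u,v)}:(u,v)\in\overrightarrow{E}\}$ and satisfies $\norm{\ket{\theta^*}/\norm{\ket{\theta^*}}-\ket{\hat w}/\norm{\ket{\hat w}}}^2\leq O(\epsilon)$, where $\ket{\theta^*}=\tfrac12\sum_{e\in\overrightarrow{E}}\theta^*(e)\ket{e}$ is the optimal positive witness of $x$ in $\mathcal{P}_{G_{st}}$; here we use that $\ket{w}/\sqrt{w_+(x)}=\ket{\theta^*}/\norm{\ket{\theta^*}}$, since $w_+(x)=\tfrac12 R_{s,t}(G(x))=\norm{\ket{\theta^*}}^2$. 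So \cref{alg:edge_finder} will run $\texttt{WitnessGeneration}(\mathcal{P}_{G_{st}},O_x,\delta,\epsilon)$ with $\delta=\Theta(p)$ and $\epsilon$ a suitable power of $p$, measure the resulting state in the basis $\{\ket{(u,v)}\}_{(u,v)\in\overrightarrow{E}}$, and output the edge $\{u,v\}$ of the outcome (reporting ``failure'' if \texttt{WitnessGeneration} does).

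For the query complexity, I plug the parameters of $\mathcal{P}_{G_{st}}$ into \cref{thm:witness_generation}: $w_+(x)=\tfrac12 R_{s,t}(G(x))$, and we may use the bound $\widetilde{W}_-=O(n^2)$ from \cite{itoApproximateSpanPrograms2019}. This gives an expected query complexity of $\widetilde{O}\!\left(\sqrt{R_{s,t}(G(x))\,n^2/\epsilon}\,\log(1/\delta)\right)=\widetilde{O}\!\left(n\sqrt{R_{s,t}(G(x))/\epsilon}\right)$; taking $\epsilon=\Theta(p^2)$ yields exactly $\widetilde{O}\!\left(n\sqrt{R_{s,t}(G(x))}/p\right)$, and $\epsilon=\Theta(p)$ already falls within this bound. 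Since \texttt{WitnessGeneration} produces a state with the claimed closeness with probability $1-O(\delta)=1-O(p)$, it remains only to analyze the measurement.

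For the sampled distribution, I will use the standard fact that if two unit vectors obey $\norm{\ket\psi-\ket\phi}^2\leq\eta$, then measuring them in a fixed orthonormal basis gives distributions at total variation distance at most $\sqrt{\eta}$ (write $|\langle e|\psi\rangle|^2-|\langle e|\phi\rangle|^2=(|\langle e|\psi\rangle|-|\langle e|\phi\rangle|)(|\langle e|\psi\rangle|+|\langle e|\phi\rangle|)$, then use $|\,|a|-|b|\,|\leq|a-b|$ and Cauchy--Schwarz). Applying this with $\ket\psi=\ket{\theta^*}/\norm{\ket{\theta^*}}$ and $\ket\phi=\ket{\hat w}/\norm{\ket{\hat w}}$, and recalling from \cref{eq:optimal_distribution} that measuring $\ket{\theta^*}/\norm{\ket{\theta^*}}$ in the edge basis produces exactly $\q$, I conclude that, conditioned on \texttt{WitnessGeneration} succeeding, \cref{alg:edge_finder} samples from a distribution $\hat q$ with $\|\hat q-\q\|_{\mathrm{TV}}\leq O(\sqrt\epsilon)=O(\sqrt p)$ -- this is the ``more precisely'' claim. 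Finally, by \cref{lem:flow-paths}, $\ket{\theta^*}$ is a linear combination of $st$-path states $\ket{\rho_{\vec u}}$, each supported on directed edges of a self-avoiding $st$-path, so $\q$ is supported entirely on edges lying on self-avoiding $st$-paths; hence the probability that $\hat q$ returns an edge not on such a path is at most $2\|\hat q-\q\|_{\mathrm{TV}}=O(p)$ when $\epsilon=\Theta(p^2)$. Combining with the $1-O(p)$ success probability of \texttt{WitnessGeneration}, \cref{alg:edge_finder} outputs a valid $st$-path edge with probability $1-O(p)$.

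Given \cref{thm:witness_generation}, this is essentially bookkeeping, so there is no substantial obstacle. The only points requiring care are: identifying $w_+(x)$ and the bound on $\widetilde{W}_-$ for $\mathcal{P}_{G_{st}}$ (already done in \cref{sec:graph-theory}); the routine translation of $\ell_2$-closeness of states into $\ell_1$-closeness of measurement statistics; and picking the accuracy parameter $\epsilon$ as the right power of $p$ so that the success probability, the sampling error, and the query complexity simultaneously match the statement.
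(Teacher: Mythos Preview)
Your proposal is correct and follows essentially the same approach as the paper: instantiate \texttt{WitnessGeneration} with $\mathcal{P}_{G_{st}}$ using $\epsilon=\Theta(p^2)$ and $\delta=\Theta(p)$, plug in $w_+(x)=\tfrac12 R_{s,t}(G(x))$ and $\widetilde{W}_-=O(n^2)$ for the query bound, then measure and invoke \cref{lem:flow-paths} for correctness. The only notable difference is that your Cauchy--Schwarz argument yields a total variation bound of $O(\sqrt{\epsilon})$, whereas the paper goes through the trace-norm inequality and obtains the looser $O(\epsilon^{1/4})$; both suffice for the stated $O(\sqrt{p})$ claim once $\epsilon=\Theta(p^2)$.
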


To obtain this result, we run our
witness state generation algorithm (\cref{alg:state_generation}) using the 
span program for $st$-connectivity, ${\cal P}_{G_{st}}$ and an oracle $O_x$
that defines a graph
$G(x)$
with a path between $s$ and $t$. When successful,
the output will be a quantum state that is approximately proportional
to the optimal flow state, \cref{eq:optimal_flow_state}, which itself is a superposition of edges 
on paths by \cref{lem:flow-paths}. Then from \cref{eq:optimal_distribution}, when we then measure in the standard basis, the probability of obtaining  an edge $e$ should be close to $\q(e)$,
and with high probability, we will measure some edge on a path.

\begin{proof}[Proof of \cref{thm:edge_finder}:]

We analyze \cref{alg:edge_finder}.

\begin{algorithm}[h]
    \DontPrintSemicolon
    \SetKwInOut{Input}{Input}
    \SetKwInOut{Output}{Output}
    \SetKwRepeat{Do}{do}{while}
    \Input{Failure tolerance $p>0$, oracle $O_x$ for the graph $G(x)=(V,E(x))$, $s,t\in V$ such that there is a path from $s$ to $t$. }
    \Output{An output $e$, or ``Failure'', such that with probability $1-O(p)$, $e$ is an edge on a path from $s$ to $t$.}
    $\epsilon\gets p^2$;\quad $\delta\gets p$\;
    $\ket{\hat{\theta}}\gets\texttt{WitnessGeneration}({\cal P}_{G_{st}},O_x,\epsilon,\delta)$ (\cref{alg:state_generation})\;
    \If{$\ket{\hat{\theta}}\neq$ ``Failure''}
    {
     $ e\gets $result of Measuring $\ket{\hat{\theta}}$ in the standard basis\;
    }
    Return ``Failure''

    \caption{\texttt{EdgeFinder}$(O_x,p,G,s,t)$}
    \label{alg:edge_finder}
\end{algorithm}

If \texttt{WitnessGeneration}$({\cal P}_{G_{st}},O_x,\epsilon,\delta)$ (see \cref{alg:state_generation}) does not fail, which
happens with probability $1-O(\delta)=1-O(p)$, then by \cref
{thm:witness_generation},
\begin{equation}\label{eq:created_state}
\ket{\hat{\theta}}=\ket{\theta^*}/\|\ket{\theta^*}\|+\ket{\eta}
\end{equation}
for some $\ket{\eta}$ such that $\|\ket{\eta}\|^2=O(\epsilon)$ and from \cref{eq:optimal_flow_state}, 
$\ket{\theta^*}=\frac{1}{{2}}\sum_{e\in \overrightarrow{E}}\theta^*(e)\ket{e}$
where $\theta^*$ is the optimal unit $st$-flow in $G(x)$, so 
$\|\ket{\theta^*}\|=\sqrt{\mathcal{R}_{s,t}(G(x))}$. 

Let $P_{E(x),s,t}$ be the projection onto the set
of edges in $\overrightarrow{E}(x)$ that are on (self-avoiding) paths from $s$ to $t$. The probability that we measure such 
an edge when we measure $\ket{\hat{\theta}}$ in the standard basis is the square of
\begin{align}
\norm{P_{E(x),s,t}\ket{\hat{\theta}}}
&\geq \norm{P_{E(x),s,t}\ket{\theta^*}/\norm{\ket{\theta^*}}}-\norm{P_{E(x),s,t}\ket{\eta}}=1-O(\sqrt{\epsilon}),
\end{align}
where we have used the triangle inequality, and the fact that $P_{E(x),s,t}\ket{\theta^*}=\ket{\theta^*}$, by \cref{lem:flow-paths}.
Continuing, we have probability
\begin{align}
\norm{P_{E(x),s,t}\ket{\hat{\theta}}}^2 &\geq \left(1-O(\sqrt{\epsilon})\right)^2 = 1 - O(\sqrt{\epsilon}).
\end{align}

Thus our total probability of success of measuring an edge on a path is $(1-O(\delta))(1-O(\sqrt{\epsilon})$. Since we are setting
$\epsilon$ to $p^2$ and $\delta$ to $p$, our total probability of success is $1-O(p)$. 

Let $\hat{q}$ be the output distribution of \cref{alg:edge_finder}. By the relationship between
total variation distance and trace norm, we have that $d(\hat{q},\q)$, the total variation
distance between $\hat{q}$ and $\q$, is at most the trace norm of $\ket{\hat{\theta}}$ and
$\ket{\theta^*}/\|\ket{\theta^*}\|$ (see e.g. \cite{nielsen2010quantum}) so
\begin{align}
d(\hat{q},\q)&\leq \sqrt{1-\abs{\braket{\hat\theta}{\theta^*}/\norm{\ket{\theta^*}}}^2}\nonumber\\
&=\sqrt{1-\abs{\braket{\hat\theta}{\hat{\theta}}-\braket{\hat\theta}{\eta}}^2}\nonumber\\
&\leq \sqrt{1- \left( 1 - \norm{\ket{\eta}} \right)^2}\nonumber\\
&\leq \sqrt{2\norm{\ket{\eta}}}
= O(\epsilon^{1/4}) = O(\sqrt{p}).
\end{align} 
By \cref{thm:witness_generation}, the expected query complexity of \texttt{WitnessGeneration}, and thus \cref{alg:edge_finder} is
\begin{equation}
\widetilde{O}\left(\sqrt{\frac{w_+(x)\widetilde{W}_-}{\epsilon}}\log\left(\frac{1}{\delta}\right)\right)=\widetilde{O}\left(
 \frac{\sqrt{R_{s,t}(G(x))}n}{p}\right)
\end{equation}
where we have used the fact that, for ${\cal P}_{G_{st}}$, $w_+(x)=R_{s,t}(G(x))$ and $\widetilde{W}_-=O(n^2)$ \cite{belovsSpanProgramsQuantum2012,itoApproximateSpanPrograms2019}; and set $\epsilon$ to $p^2$ and $\delta$ to $p$, as in \cref{alg:edge_finder}.
\end{proof}

We can use \cref{thm:edge_finder} to prove the following separation between
the quantum and classical query complexity of finding an edge on a path:
\begin{restatable}{theorem}{ClassEdgeFind}\label{thm:class_edge_finding} 
Let $G=(V,E)$ with $s,t\in V$ be an $n$-vertex complete graph, and suppose we are 
promised that $G(x)$ has a path of length $L$ for $L\in[3,n/4]$ between $s$
 and $t$ ($L$ may depend on $x$ and need not be known ahead of time). Then $st\mbox{-}\textsc{edge}_G(x)$
 can be solved in  $\widetilde{O}(n\sqrt{L})$ expected quantum queries on input $x$, while any classical algorithm has
 query complexity $\Omega(n^2)$.
\end{restatable}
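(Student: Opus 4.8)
The statement splits into a quantum upper bound and a classical lower bound, which I would handle independently.

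\textbf{Quantum upper bound.} This follows directly from \cref{thm:edge_finder}. Since $G(x)$ contains an $st$-path of length $L$, its effective resistance satisfies $R_{s,t}(G(x))\le L$ --- the unit $st$-flow supported on a single length-$L$ path has energy exactly $L$, and $R_{s,t}$ is the minimum energy over all unit $st$-flows. Thus I would run \cref{alg:edge_finder} with failure parameter $p$ set to a sufficiently small constant; by \cref{thm:edge_finder} it solves $st\mbox{-}\textsc{edge}_G(x)$ with probability $\ge 2/3$ using $\widetilde O\big(n\sqrt{R_{s,t}(G(x))}/p\big)=\widetilde O(n\sqrt L)$ expected queries, and it needs no advance knowledge of $L$ since \texttt{WitnessGeneration} (\cref{alg:state_generation}) estimates the relevant scale internally.

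\textbf{Classical lower bound.} I would use Yao's minimax principle: it suffices to exhibit a distribution over inputs satisfying the promise on which every deterministic algorithm making $o(n^2)$ queries errs with constant probability. Fix a partition of $V\setminus\{s,t\}$ into sets $A$ and $B$ with $|A|=|B|=\lfloor(n-2)/2\rfloor$, known to the algorithm, and let the input $G(x)$ be the ``bottleneck'' graph with edge set $\{\{s,a\}:a\in A\}\cup\{\{t,b\}:b\in B\}\cup\{\{a^*,b^*\}\}$, where the single $A$--$B$ edge $\{a^*,b^*\}$ is drawn uniformly from $A\times B$. One checks that the \emph{unique} $st$-path in $G(x)$ is $(s,a^*,b^*,t)$, which has length $3\in[3,n/4]$, so the promise holds; incidentally $R_{s,t}(G(x))=3$, so the quantum algorithm runs in $\widetilde O(n)$ on this family, which is where the near-quadratic separation comes from. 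The point is that queries reveal essentially nothing about $(a^*,b^*)$: every edge $\{s,a\}$ with $a\in A$ and every $\{t,b\}$ with $b\in B$ is present, while every other edge of the complete graph (edges inside $A$, inside $B$, the edge $\{s,t\}$, and edges $\{s,b\}$ or $\{t,a\}$) is absent --- all of this independent of $(a^*,b^*)$. Only the $|A||B|=\Theta(n^2)$ candidate $A$--$B$ edges are informative, and exactly one of them is present.

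Hence after $q$ queries the algorithm has learned nothing about $(a^*,b^*)$ unless it has queried $\{a^*,b^*\}$ itself, which happens with probability at most $q/(|A||B|)=O(q/n^2)$; conditioned on the complementary event, $(a^*,b^*)$ is uniform over the $\ge|A||B|-q$ unqueried pairs. So whatever edge $e$ the algorithm outputs, for $e$ to lie on the path $(s,a^*,b^*,t)$ it must be one of $\{s,a^*\},\{a^*,b^*\},\{b^*,t\}$, and each such guess is correct with conditional probability at most $|A|/(|A||B|-q)=O(1/n)$ (any other output is simply wrong, as these are the only three path edges). Therefore a $q$-query algorithm succeeds with probability $O(q/n^2)+O(1/n)$, which is $<2/3$ once $q=o(n^2)$, giving the $\Omega(n^2)$ bound. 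I do not anticipate a real technical obstacle here; the only points requiring care are (i) confirming $(s,a^*,b^*,t)$ is the unique $st$-path and that its three edges are the only ones ever lying on an $st$-path, so a correct output genuinely forces the algorithm to localize $(a^*,b^*)$, and (ii) noticing that this is precisely where the hypothesis $L\ge 3$ is used: with only a length-$2$ promise the analogous construction (a single hidden common neighbour of $s$ and $t$, with $s,t$ otherwise joined to large disjoint independent sets) hides a single \emph{vertex} among $\Theta(n)$ candidates rather than a single \emph{edge} among $\Theta(n^2)$, and so yields only an $\Omega(n)$ lower bound.
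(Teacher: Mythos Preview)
Your proof is correct. The quantum upper bound is handled exactly as in the paper.

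For the classical lower bound, you take a different and more elementary route than the paper. The paper reduces from the problem ``given $N=2^\ell$ bits with a unique $1$ at index $\sigma^*$, output $\sigma^*_1$'' (which needs $\Omega(N)$ queries), embedding it into a graph gadget that has a unique $st$-path of \emph{any} prescribed odd length $L\in[3,n/4]$: two length-$(L-3)/2$ arms out of each of $s$ and $t$, terminating in sets $S_s^{(0)},S_s^{(1)},S_t^{(0)},S_t^{(1)}$ of size $\Theta(\sqrt N)$, with a single hidden cross edge. Learning any path edge reveals $\sigma_1^*$. Your argument instead fixes $L=3$ and runs a direct Yao argument on the bottleneck family $\{s\}\!-\!A\!-\!B\!-\!\{t\}$ with one hidden $A$--$B$ edge; the analysis that a $q$-query deterministic algorithm succeeds with probability $O(q/n^2)+O(1/n)$ is clean and correct (note that until $\{a^*,b^*\}$ is hit, all answers are determined, so the queried $A$--$B$ pairs form a fixed set of size $\le q$, which is what you are implicitly using). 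Since the theorem allows $L$ to be input-dependent and unknown, a hard distribution concentrated at $L=3$ suffices. What the paper's construction buys in addition is that the $\Omega(n^2)$ bound persists even if the algorithm is told the exact value of $L$ in advance, for every odd $L$ in the range --- a slightly stronger statement than what the theorem literally asserts. Your remark about why $L\ge 3$ is needed is apt and matches the spirit of both constructions.
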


\begin{proof} For the quantum algorithm, we apply \cref{thm:edge_finder} with
 bounded probability of error $p=\Omega(1)$, and use the fact that $R_{s,t}(G)=O
 (L).$

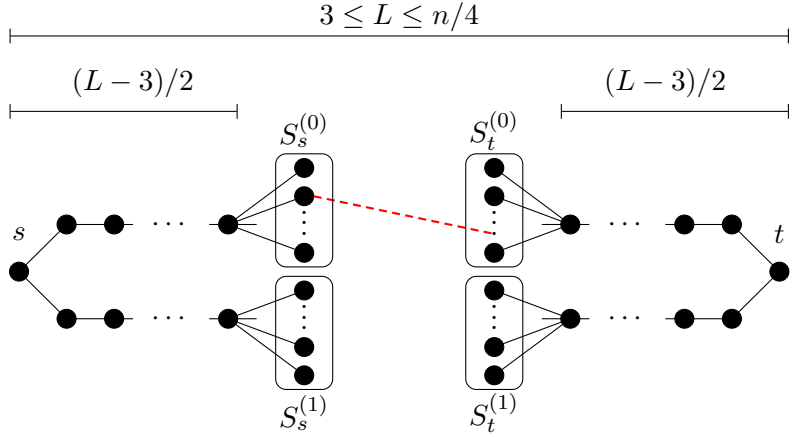
\begin{figure}[ht]
\centering
\begin{tikzpicture}[scale = 1.25]

\filldraw (-4,0) circle (.1);
\filldraw (4,0) circle (.1);

\filldraw (-3.5,.5) circle (.1);
\filldraw (3.5,.5) circle (.1);
\filldraw (-3,.5) circle (.1);
\filldraw (3,.5) circle (.1);
\node at (-2.4,.5) {$\cdots$};
\node at (2.4,.5) {$\cdots$};
\filldraw (-1.8,.5) circle (.1);
\filldraw (1.8,.5) circle (.1);

\filldraw (-3.5,-.5) circle (.1);
\filldraw (3.5,-.5) circle (.1);
\filldraw (-3,-.5) circle (.1);
\filldraw (3,-.5) circle (.1);
\node at (-2.4,-.5) {$\cdots$};
\node at (2.4,-.5) {$\cdots$};
\filldraw (-1.8,-.5) circle (.1);
\filldraw (1.8,-.5) circle (.1);


\filldraw (-1,1.1) circle (.1);
\filldraw (1,1.1) circle (.1);
\filldraw (-1,.8) circle (.1);
\filldraw (1,.8) circle (.1);
\node at (-1,.6) {$\vdots$};
\node at (1,.6) {$\vdots$};
\filldraw (-1,.2) circle (.1);
\filldraw (1,.2) circle (.1);

\filldraw (-1,-1.1) circle (.1);
\filldraw (1,-1.1) circle (.1);
\filldraw (-1,-.8) circle (.1);
\filldraw (1,-.8) circle (.1);
\node at (-1,-.4) {$\vdots$};
\node at (1,-.4) {$\vdots$};
\filldraw (-1,-.2) circle (.1);
\filldraw (1,-.2) circle (.1);

\draw (-3.5,.5)--(-4,0);
\draw (-3.5,-.5)--(-4,0);
\draw (3.5,.5)--(4,0);
\draw (3.5,-.5)--(4,0);

\draw (-3.5,.5)--(-2.8,.5);
\draw (3.5,.5)--(2.8,.5);
\draw (-3.5,-.5)--(-2.8,-.5);
\draw (3.5,-.5)--(2.8,-.5);
\draw (-2,.5)--(-1.5,.5);
\draw (2,.5)--(1.5,.5);
\draw (-2,-.5)--(-1.5,-.5);
\draw (2,-.5)--(1.5,-.5);

\draw (-1.8,.5)--(-1,1.1);
\draw (-1.8,.5)--(-1,.8);
\draw (-1.8,.5)--(-1,.2);

\draw (1.8,.5)--(1,1.1);
\draw (1.8,.5)--(1,.8);
\draw (1.8,.5)--(1,.2);

\draw (-1.8,-.5)--(-1,-1.1);
\draw (-1.8,-.5)--(-1,-.8);
\draw (-1.8,-.5)--(-1,-.2);

\draw (1.8,-.5)--(1,-1.1);
\draw (1.8,-.5)--(1,-.8);
\draw (1.8,-.5)--(1,-.2);

\draw [color=red,densely dashed,thick] (-.9,.8)--(1,.4);

\node at (-4,.4) {$s$};
\node at (4,.4) {$t$};
\node at (-1,1.5) {$S_s^{(0)}$};
\node at (1,1.5) {$S_t^{(0)}$};
\node at (-1,-1.5) {$S_s^{(1)}$};
\node at (1,-1.5) {$S_t^{(1)}$};

\draw[rounded corners] (-.7, 1.25) rectangle (-1.3,.05);  \draw[rounded corners] (.7, 1.25) rectangle (1.3,.05);
\draw[rounded corners] (-.7, -1.25) rectangle (-1.3,-.05);  \draw[rounded corners] (.7, -1.25) rectangle (1.3,-.05);

\draw [|-|](-4.1,2.5) -- (4.1,2.5);
\node at (0,2.7) {$3\leq L\leq n/4$};

\draw [|-|](-4.1,1.7) -- (-1.7,1.7);
\node at (-2.8,2) {$(L-3)/2$};
\draw [|-|](4.1,1.7) -- (1.7,1.7);
\node at (2.8,2) {$(L-3)/2$};

\end{tikzpicture}
\caption{The solid black lines show the edges that are present in $G(x)$ for 
any $x$. In addition, $G(x)$ contains a single edge between a vertex in $S_s^{(b)}$ 
and $S_t^{(b)}$, where $b=\sigma^*_1$, as in the dashed red edge, resulting in a 
single path of length $L$.}\label{fig:findEdge}
\end{figure}

 For the classical lower bound, we reduce the following
problem to path edge finding: Given a string $x$ of $N=2^{\ell}$ bits, 
$(x_{\sigma})_{\sigma\in \{0,1\}^{\ell}}$ such that there is a 
unique $\sigma^*$ with $x_{\sigma^*}=1$, output $\sigma^*_1$. That is,
we would like to
output the first bit of the index of the unique $1$-valued bit of $x$. By an adversary argument similar to a
standard OR lower bound, the bounded error randomized query complexity of
this problem is $\Omega(N)$. We will show how to solve this problem with
an algorithm for finding a path edge 
on a graph like the one depicted in \cref{fig:findEdge}. 

For $x\in \{0,1\}^N$, let $G(x)$ be a graph on $n=\Theta(2^{\ell/2})$ vertices
in which there is a unique $st$-path of length $L$, for some odd $L$, as shown
in \cref{fig:findEdge}. The vertex $s$ is connected by a path of 
length $(L-3)/2$ to a vertex that is additionally connected to a set of 
$2^{(\ell-1)/2}$ vertices, 
$S_s^{(0)}=\{u_{0,\sigma}:\sigma\in\{0,1\}^{(\ell-1)/2}\}$. 
In a symmetric manner, $s$ is also connected by another
disjoint path of length $(L-3)/2$ to a vertex that is additionally connected
to a set of $2^{(\ell-1)/2}$ vertices, 
$S_s^{(1)}=\{u_{1,\sigma}:\sigma\in\{0,1\}^{(\ell-1)/2}\}$. In the same way, $t$ is connected by a pair of
disjoint paths of length $(L-3)/2$ to a pair of vertices, additionally
connected to $S_t^{(0)}=\{v_{0,\sigma}:\sigma\in\{0,1\}^{(\ell-1)/2}\}$ and
$S_t^{(1)}=\{v_{1,\sigma}:\sigma\in\{0,1\}^{(\ell-1)/2}\}$ respectively. All
edges described so far (the black edges in \cref{fig:findEdge}) are always
present in $G(x)$ (we simulate querying the associated input bits by just outputting 1). We now describe edges whose presence in $G(x)$ is
determined by $x$. For $b\in\{0,1\}$, there is a potential edge between every
pair of vertices $u_{b,\sigma}\in {S}_s^{(b)}$ and 
$v_{b,\sigma'}\in S_t^{(b)}$, with the label $x_{b\sigma\sigma'}$, meaning exactly one of these is
present in $G(x)$ -- the one with $\sigma^*=b\sigma\sigma'$. All remaining possible edges are never present in $G(x)$ (we simulate querying their associated input bits by just outputting 0).  

We can find the first bit of $\sigma^*$ by running the edge finding algorithm on $G(x)$. 
Assuming the output is correct, there are the following possibilities:
\begin{enumerate}
\item If the algorithm outputs an edge from the middle part of the graph, then it must be the one labelled by $x_{\sigma^*}$, so $\sigma^*$ is learned entirely.
\item If the algorithm outputs an edge from the left-hand side of the graph, it is on a path between $s$ and $S_s^{(b)}$ for some $b\in\{0,1\}$, and we know that $\sigma^*_1=b$.
\item If the algorithm outputs an edge from the right-hand side of the graph, it is on a path between $t$ and $S_t^{(b)}$ for some $b\in\{0,1\}$, and we know that $\sigma^*_1=b$.
\end{enumerate}
In all cases, we have learned $\sigma^*_1$. This gives a lower bound on path-edge 
finding of $\Omega(N)=\Omega(2^{\ell})=\Omega(n^2)$.
\end{proof}

\subsection{Finding an \texorpdfstring{$st$}{st}-cut set}\label{sec:sabotage}

Given a graph $G(x)$ containing a path from $s$ to $t$, an $st$-cut set is a set
of edges in $G(x)$ such that when those edges are removed from $G(x)$, there
is no longer a path from $s$ to $t$. The $st$-cut set problem is that of finding
an $st$-cut set. This problem has applications to detecting weak points in
networks in order to figure out how to strengthen a network, or conversely,
for sabotaging networks.

We first note that for graphs with a single $st$-path, \cref{thm:edge_finder} can immediately be
used to find an $st$-cut set, since any edge on the path is an $st$-cut set.
However, we can also analyze more complex situations, as the following, in which
we have an upper bound on the effective resistance of the graph, and a lower
bound on the optimal unit $st$-flow  going through any edge in the $st$-cut set:
\begin{theorem}\label{thm:small_cut}
For functions $R,g: \mathbb{N}\rightarrow \mathbb{R}_{> 0}$, let
$G=(V,E)$ with $s,t\in V$ be a family of $n$-vertex graphs, and suppose 
we are additionally promised that $R_{s,t}(G
(x))\leq R(n)$, and there exists an $st$-cut set $C\subseteq E(x)$ such that for each $\{u,v\}\in C$,
 $\theta^*(u,v)^2\geq g(n)$ where
$\theta^*$ is the optimal unit $st$-flow in $G(x)$. Then there is a quantum algorithm that outputs a set $C'$ such that $C\subseteq C'$ with bounded error, and has worst-case
query complexity $\widetilde{O}\left(\frac{R(n)^{2}n}{g(n)^{3/2}}\right)$. 
\end{theorem}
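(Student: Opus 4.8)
The plan is to obtain $C'$ by repeatedly running the witness state generation algorithm \cref{alg:state_generation} on the $st$-connectivity span program $\mathcal{P}_{G_{st}}$, measuring each resulting state in the standard basis, and collecting all sampled edges. Since the optimal positive witness of $\mathcal{P}_{G_{st}}$ is the optimal-flow state $\ket{\theta^*}=\frac12\sum_{e\in\overrightarrow E}\theta^*(e)\ket e$ of \cref{eq:optimal_flow_state}, a successful run outputs a directed edge $e$ with probability close to $\q(e)=\theta^*(e)^2/(2R_{s,t}(G(x)))$ (\cref{eq:optimal_distribution}). For $\{u,v\}\in C$ the promise gives $\theta^*(u,v)^2\geq g(n)$, and with $R_{s,t}(G(x))\leq R(n)$ this means $\q(u,v)\geq g(n)/(2R(n))$, so — once we have generated a sufficiently accurate witness state — each cut edge is sampled (in one of its two orientations) with probability $\Omega(g(n)/R(n))$ on every successful run. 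Note that the output is allowed to be a superset of $C$, so no verification of the collected edges is needed.

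The crucial parameter is the accuracy $\epsilon$ passed to \texttt{WitnessGeneration}. By \cref{thm:witness_generation} a run returns a normalized state $\ket{\hat\theta}$ with $\norm{\ket{\theta^*}/\norm{\ket{\theta^*}}-\ket{\hat\theta}}^2=O(\epsilon)$, so for each basis state $\ket e$ the sampling probability $\abs{\braket{e}{\hat\theta}}^2$ differs from $\q(e)$ by $O(\sqrt{\q(e)\epsilon}+\epsilon)$; hence taking $\epsilon=\Theta(g(n)/R(n))$ — which uses only the promised quantities, not the unknown $R_{s,t}(G(x))$ — keeps the sampling probability of every cut edge within a constant factor of $\q$. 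With this choice, and $\delta$ set to a small constant, one run of \texttt{WitnessGeneration} on $\mathcal{P}_{G_{st}}$ costs, in expectation, $\widetilde O\!\left(\sqrt{w_+(x)\widetilde W_-/\epsilon}\right)=\widetilde O\!\left(n\sqrt{R_{s,t}(G(x))/\epsilon}\right)=\widetilde O\!\left(nR(n)/\sqrt{g(n)}\right)$ queries, using $w_+(x)=\tfrac12 R_{s,t}(G(x))\leq \tfrac12 R(n)$ and $\widetilde W_-=O(n^2)$ for $\mathcal{P}_{G_{st}}$.

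To count runs, I would first bound $|C|$: the unit-flow energy identity gives $R_{s,t}(G(x))=\tfrac12\sum_{e\in\overrightarrow E}\theta^*(e)^2\geq\sum_{\{u,v\}\in C}\theta^*(u,v)^2\geq |C|\,g(n)$, so $|C|\leq R(n)/g(n)$. A coupon-collector argument then shows $N=O\!\left(\tfrac{R(n)}{g(n)}\log\tfrac{R(n)}{g(n)}\right)$ \emph{successful} runs suffice to place every edge of $C$ into $C'$ with high probability; runs that output ``Failure'' are discarded, and since each run succeeds with constant probability, a Chernoff bound gives $O(N)$ total runs. Multiplying the per-run cost by $N$ yields total expected query complexity $\widetilde O\!\left(R(n)^2 n/g(n)^{3/2}\right)$.

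Finally, to convert this into the claimed \emph{worst-case} bound, I would run the whole procedure under a global query budget equal to a large constant times the above expectation, halting and outputting the current $C'$ if it is exhausted; by Markov's inequality on the total number of queries this truncation fails only with small constant probability, and a union bound over ``a run fails'', ``coupon collector incomplete'', and ``budget exceeded'' keeps the overall error bounded. The step I expect to need the most care is the choice of $\epsilon$: invoking \cref{thm:edge_finder} as a black box is wasteful because its built-in coupling $\epsilon=p^2$ between accuracy and failure probability costs a polynomial factor, so one must return to \cref{thm:witness_generation} and track how the $\ell_2$-error of the witness state propagates, per edge, into the sampling distribution; relatedly, the expected-to-worst-case conversion must be applied to the total budget rather than truncating each run, or the bound degrades.
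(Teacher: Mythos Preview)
Your proposal is correct and follows essentially the same route as the paper: call \texttt{WitnessGeneration} directly with $\epsilon=\Theta(g(n)/R(n))$ and constant $\delta$, show each cut edge is sampled with probability $\Omega(g(n)/R(n))$, apply a coupon-collector bound, and convert expected to worst-case via a global query budget and Markov's inequality. The only notable difference is that you bound $|C|\le R(n)/g(n)$ via the flow-energy identity whereas the paper simply uses $|C|\le n$; both give $\log|C|$ factors that are absorbed into the $\widetilde O$.
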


We can assume without loss of generality that the $C$ in \cref{thm:small_cut} is a minimal $st$-cut. While we are not guaranteed that the set $C'$ output by the algorithm referred to in \cref{thm:small_cut} is minimal, it is still an $st$-cut as long as it contains $C$, since its removal will disconnect $s$ and $t$.

To prove \cref{thm:small_cut}, we will use the following variation of the well-known ``coupon collector'' problem.
\begin{lemma}\label{lem:coupon}
Consider repeatedly sampling a random variable $Z$ on a finite set ${\cal S}$. Let $C\subseteq {\cal S}$ be such that for each $e\in C$, $\Pr[Z=e]\geq B$. 
Let $T$ be the number of samples to $Z$ before we have sampled each element of $C$ at least once. Then $\mathbb{E}[T]=O\left(\frac{\log |C|}{B}\right)$.
\end{lemma}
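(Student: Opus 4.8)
The plan is to prove \cref{lem:coupon} by a union bound over the elements of $C$ combined with the tail-sum identity $\mathbb{E}[T]=\sum_{m\ge 0}\Pr[T>m]$. Throughout I take the samples of $Z$ to be independent, as ``repeatedly sampling'' suggests. If $C=\emptyset$, or if $B>1$ (which forces $C=\emptyset$ since $B$ lower-bounds a probability), the claim is trivial, so assume $0<B\le 1$ and $C\neq\emptyset$, and write $p_e\coloneqq\Pr[Z=e]\ge B$ for $e\in C$. For a fixed $e\in C$, the probability that $e$ does not occur among the first $m$ samples is $(1-p_e)^m\le(1-B)^m\le e^{-Bm}$. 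Since $T>m$ holds iff at least one element of $C$ is still missing after $m$ samples (equivalently, writing $T_e$ for the first time $e$ appears, $T=\max_{e\in C}T_e$), a union bound yields
\[
\Pr[T>m]\;\le\;\min\bigl\{1,\;|C|\,e^{-Bm}\bigr\}.
\]

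Next I would split $\mathbb{E}[T]=\sum_{m\ge 0}\Pr[T>m]$ at $m_0\coloneqq\lceil(\ln|C|)/B\rceil$, chosen so that $|C|\,e^{-Bm_0}\le 1$. The first $m_0$ terms contribute at most $m_0$; the remaining terms contribute at most $\sum_{m\ge m_0}|C|\,e^{-Bm}=|C|\,e^{-Bm_0}\cdot\frac{1}{1-e^{-B}}\le\frac{1}{1-e^{-B}}$. Using $1-e^{-B}\ge B-B^2/2\ge B/2$ for $B\le 1$, this gives
\[
\mathbb{E}[T]\;\le\;m_0+\frac{1}{1-e^{-B}}\;\le\;\frac{\ln|C|}{B}+1+\frac{2}{B}\;=\;O\!\left(\frac{\log|C|}{B}\right),
\]
where in the degenerate case $|C|=1$ one reads $\log|C|$ as $\max\{1,\log|C|\}$ (in that case $T$ is simply geometric with mean $1/p_e\le 1/B$, so the bound holds directly).

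There is no real obstacle; this is a routine coupon-collector estimate. The only points requiring a little care are the edge cases $|C|\le 1$ and $C=\emptyset$, and choosing the split point $m_0=\lceil(\ln|C|)/B\rceil$ so that the geometric tail is bounded by an absolute constant times $1/B$. The independence of the samples is exactly what makes $(1-p_e)^m$ the correct probability that $e$ is still missing after $m$ draws, which is the one probabilistic input the argument needs.
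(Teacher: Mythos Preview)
Your proof is correct. It differs from the paper's argument: the paper uses the classical coupon-collector decomposition $T=\sum_{i=1}^{|C|}T_i$, where $T_i$ is the waiting time between the $(i-1)$\textsuperscript{st} and $i$\textsuperscript{th} new element of $C$, bounds $\mathbb{E}[T_i]\le 1/((|C|-i+1)B)$, and sums to obtain $\mathbb{E}[T]\le \frac{1}{B}\sum_{j=1}^{|C|}\frac{1}{j}=O\!\left(\frac{\log|C|}{B}\right)$. You instead use the tail-sum identity $\mathbb{E}[T]=\sum_{m\ge 0}\Pr[T>m]$ together with the union bound $\Pr[T>m]\le |C|\,e^{-Bm}$, splitting the sum at $m_0=\lceil(\ln|C|)/B\rceil$. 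Both arguments are elementary and yield the same asymptotic bound; the paper's gives the sharper constant (the harmonic number $H_{|C|}/B$), while yours has the advantage of immediately giving a tail bound $\Pr[T>m]\le |C|\,e^{-Bm}$, which is sometimes useful in its own right.
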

\begin{proof}
For $i\in\{1,\dots,|C|\}$, the probability that $Z$ is a new element of $C$, after $i-1$ elements have already been collected, is 
$p_i \geq (|C|-(i-1))B$. Let $T_i$ be the number of samples to $Z$ after sampling $(i-1)$ elements of $C$, until we sample $i$ elements of $C$, so $T_i$ is a geometric random variable with 
\begin{equation}
\mathbb{E}[T_i] = 1/p_i \leq ((|C|-(i-1))B)^{-1}.
\end{equation}
From this we can compute
\begin{equation}
\mathbb{E}[T] = \sum_{i=1}^{|C|}\mathbb{E}[T_i] \leq \sum_{i=1}^{|C|}\frac{1}{(|C|-(i-1))B} = \frac{1}{B}\sum_{j=1}^{|C|}\frac{1}{j} = \Theta\left(\frac{\log |C|}{B}\right).\qedhere
\end{equation}
\end{proof}

\begin{proof}[Proof of \cref{thm:small_cut}]
We use parameters $T'$ and $\epsilon$, to be defined shortly, and $\delta=1/4$. 
Our strategy is to repeatedly run \texttt{WitnessGeneration}$(\mathcal{P}_{G_{st}},O_x,\epsilon,\delta)$ (\cref{alg:state_generation}) to produce
an approximate witness state, and then measure the 
resultant state in the standard basis to get an edge $e$, which we add to $C'$. We repeat this $T'$ times, before outputting $C'$.

Let $Z$ be the random variable on ${E}\cup\{\text{Failure}\}$ representing the measured output of one call to \cref{alg:state_generation}.
We set ${\epsilon}=\Theta\left(\frac{g(n)}{R(n)}\right)$ small enough so that if the algorithm does not fail,
we produce a state
$\ket{\theta^*}/\|\ket{\theta^*}\|+\ket{\eta}$ 
where $\|\ket{\eta}\|^2\leq g(n)/R(n)$ (see \cref{eq:created_state} and following discussion). Then the probability that
we sample an edge $e'\in C$ when we measure in the standard basis is
\begin{align}
\left\|\bra{e'}\left(\ket{\theta^*}/\|\ket{\theta^*}\|+\ket{\eta}\right)\right\|^2 
&=\|2\theta^*(e')/\sqrt{R_{s,t}(G(x))}-\braket{e'}{\eta}\|^2\nonumber\\
&\geq\|2\sqrt{g(n)/R(n)}- \sqrt{g(n)/R(n)}\|^2\nonumber\\
&=\Omega(g(n)/R(n)).
\end{align}
Since the probability of one call to \cref{alg:state_generation} not failing is $1-\delta=\Omega(1)$, for every $e'\in C$, we have $\Pr[Z=e']\geq B$ for some $B=\Omega(g(n)/R(n))$. Thus, by \cref{lem:coupon}, the expected number of calls to \cref{alg:state_generation} before $C\subseteq C'$ is at most:
\begin{equation}
\mathbb{E}[T] = O\left(\frac{R(n)}{g(n)}\log |C|\right) = O\left(\frac{R(n)}{g(n)}\log n\right).
\end{equation}
By Markov's inequality, if we set $T' = 100\mathbb{E}[T]$, the algorithm will succeed with bounded error.

By \cref{thm:witness_generation}, each call to \cref{alg:state_generation} has 
expected query complexity
\begin{equation}
\widetilde{O}\left(\sqrt{\frac{R_{s,t}(G(x))n^2}{\epsilon}}\right)
=\widetilde{O}\left(n\sqrt{\frac{R(n)}{g(n)/R(n)}}\right)=\widetilde{O}\left(\frac{n R(n)}{\sqrt{g(n)}}\right),
\end{equation}
so the total expected query complexity is
\begin{equation}
\widetilde{O}\left(T'\frac{n R(n)}{\sqrt{g(n)}}\right) = \widetilde{O}\left(\frac{R(n)}{g(n)} \frac{n R(n)}{\sqrt{g(n)}}\right)= \widetilde{O}\left( \frac{n R(n)^2}{g(n)^{3/2}}\right).
\end{equation}
We can get a worst case algorithm by stopping after 100 times the expected number of steps, if the algorithm is still running, and outputting the current $C'$. We have no guarantee on the correctness of $C'$ in that case, but by Markov's inequality, this only happens with probability $1/100$.
\end{proof}

We can use \cref{thm:small_cut} to prove the following result for finding
an $st$-cut set in a particular family of graphs with expander subgraphs and
a single $st$-cut edge.
\begin{restatable}{corollary}{bridge}\label{thm:bridge}
Let $G=(V,E)$ with $s,t\in V$ be a family of $n$-vertex graphs, and suppose we are additionally
promised that $G(x)$ consists of two
disjoint, $d$-regular (for $d\geq 3$), constant expansion subgraphs, each on
$n/2$ vertices, where $s$ and $t$ are always put in separate subgraphs, plus a
single additional edge connecting the two subgraphs. Then there is a quantum algorithm that finds the $st$-cut edge with bounded error in worst-case
$\widetilde{O}(n)$ queries, while any classical algorithm has query complexity $\Omega(n^2)$.
\end{restatable}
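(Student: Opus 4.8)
The plan is to obtain \cref{thm:bridge} by instantiating \cref{thm:small_cut} with constant parameters and then extracting the single cut edge from the set it returns. Write $e^*=\{a,b\}$ for the unique edge joining the two subgraphs, with $a$ on the side of $s$ and $b$ on the side of $t$, and set $C=\{e^*\}$. Since every $st$-path must traverse $e^*$, every unit $st$-flow routes its entire unit of flow across $e^*$, hence $\theta^*(a,b)^2=1$, so the cut-set hypothesis of \cref{thm:small_cut} holds with $g(n)=1$.

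Next I would bound $R_{s,t}(G(x))$. Within a $d$-regular graph of constant expansion the combinatorial Laplacian has spectral gap bounded below by a positive constant, and the standard eigenvalue expression $R_{u,v}=\sum_{k\ge 2}\mu_k^{-1}(\psi_k(u)-\psi_k(v))^2$ together with $\sum_{k\ge 2}(\psi_k(u)-\psi_k(v))^2\le 2$ (the $u$-th and $v$-th rows of the orthonormal eigenvector matrix each have norm $1$ and are orthogonal, and agree on the constant eigenvector) gives $R_{u,v}=O(1)$ for any two vertices $u,v$. Since $e^*$ is a bridge, effective resistance composes in series across it, so writing $X,Y$ for the two subgraphs, $R_{s,t}(G(x))=R_{s,a}(X)+1+R_{b,t}(Y)=O(1)$; thus $R(n)=O(1)$. \cref{thm:small_cut} then outputs, with bounded error, a set $C'\supseteq\{e^*\}$ in worst-case $\widetilde{O}(R(n)^2 n/g(n)^{3/2})=\widetilde{O}(n)$ queries.

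To pin down $e^*$, note that since $|C|=1$, collecting the single element of $C$ takes $O(R(n)/g(n))=O(1)$ samples in expectation (\cref{lem:coupon}), so the algorithm of \cref{thm:small_cut} uses only $O(1)$ edge-sampling rounds and $|C'|=O(1)$. Then $e^*$ is the unique $e\in C'$ whose deletion disconnects $s$ from $t$: we test each $e\in C'$ by running \texttt{PathDetection} of \cref{cor:st-conn} on the subgraph with $e$ removed, truncated after $\widetilde{O}(n)$ steps (correct because deleting any single non-bridge edge leaves $R_{s,t}=O(1)$, so that run halts within $\widetilde{O}(n)$ expected queries and reports ``connected'', whereas for $e=e^*$ the truncation reports ``disconnected''). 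These $O(1)$ tests cost $\widetilde{O}(n)$ further queries, keeping the total quantum complexity at $\widetilde{O}(n)$. (Alternatively, if the bipartition of $G$ into the two subgraphs is part of the given structure, $e^*$ is simply the unique edge of $C'$ crossing it, and no further queries are needed.)

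For the classical lower bound, let the parent graph $G$ consist of two fixed $d$-regular constant-expansion graphs on vertex sets $V_X\ni s$ and $V_Y\ni t$ of size $n/2$ each, together with all $(n/2)^2=\Theta(n^2)$ possible cross edges, and consider only inputs $x$ in which the two fixed subgraphs are fully present and exactly one cross edge is present. Every such $x$ meets the promise, and finding the $st$-cut edge is precisely locating the unique present cross edge among $\Theta(n^2)$ candidates; the usual adversary argument (answer ``absent'' to the first $\Theta(n^2)-1$ cross-edge queries) forces any bounded-error classical algorithm to make $\Omega(n^2)$ queries. The step requiring the most care is the effective-resistance estimate -- confirming that ``constant expansion'' delivers a constant Laplacian gap and that series additivity across the bridge edge is legitimate; the reduction to \cref{thm:small_cut}, the $O(1)$-size post-processing, and the classical bound are then routine.
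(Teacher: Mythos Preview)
Your proof is correct and follows the paper's approach: instantiate \cref{thm:small_cut} with $g(n)=\Omega(1)$ (since the bridge carries all the flow) and $R(n)=O(1)$ (effective resistance in constant-expansion regular graphs), then read off $\widetilde{O}(n)$; the classical $\Omega(n^2)$ bound is the same search argument. The paper proves the effective-resistance bound by citing \cite{chandra1996electrical} rather than your spectral computation, and it stops after invoking \cref{thm:small_cut} without addressing how to extract the single edge $e^*$ from the returned superset $C'$---your post-processing via \texttt{PathDetection} on each of the $O(1)$ candidates is a legitimate and careful addition that the paper simply glosses over.
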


\begin{proof}
For a classical algorithm, even if the algorithm had complete knowledge
of the two subgraphs, there would be $\Omega(n^2)$ possible locations for the 
connecting edge, reducing the problem to search, requiring $\Omega(n^2)$ queries.

For the quantum algorithm, note that the maximum effective
resistance between any two points in a $d$-regular (for $d\geq 3$), constant
expansion graph on $n$-vertices is $O(1/d)$ \cite
{chandra1996electrical}. Thus $R_{s,t}(G(x))=\Omega(1)$. Additionally, since there is
only one edge $e'$ connecting the two subgraphs, the optimal unit $st$-flow on $e'$, 
$\theta^*(e')$, must be equal to $1$. 

Applying \cref{thm:small_cut} with $R(n)=O(1)$ and $g(n)=\Omega(1)$, we get a worst-case bounded error quantum query complexity $\widetilde{O}(n).$
\end{proof}


\subsection{Path Finding}\label{sec:path-finding}
In this section, we consider the problem of finding an $st$-path in 
$G(x)$, which we denote
$st\mbox{-}\textsc{path}_G(x)$. That is, given query access to a string $x$ that determines a subgraph $G(x)=(V,E(x))$ of an $n$-vertex graph $G$, as described in \cref{sec:graph-theory} (if $G$ is a complete graph, $x$ is just the adjacency matrix of $G(x)$),  with $s,t\in V$ such that there is at least one path from $s$ to $t$ in $G(x)$, output a path from $s$ to $t$. A
path is a sequence of \emph{distinct} vertices $\vec{u}=(u_0,\dots,u_{\ell})$ such that $s=u_0$, $t=u_{\ell}$, and for all $i\in [\ell]$, $(u_{i-1},u_i)\in \overrightarrow{E}(G(x))$. 

To solve $st\mbox{-}\textsc{path}_G$, one might expect that we could simply apply
\cref{alg:edge_finder} multiple times, storing each edge's endpoints and identifying vertices
 of the endpoints of found edges to reduce the size of the graph, until a path is found. However, such an
 algorithm could run into challenges that could produce slow running times. For example, 
 in a graph where there are many $st$-paths, the algorithm could
 spend too much time sampling edges from different paths, rather than
 focusing on completing a single path. In the case of a single $st$-path,
 such a strategy would not take advantage of the fact that once one edge on
 the path is found, the problem reduces to two connectivity subproblems
 (from $s$ to the found edge, and from $t$ to the found edge) that each typically have significantly
 smaller query complexities than the original problem. 

Thus we develop two algorithms that allow us to prove tighter expected query complexity
bounds than Ref.~\cite{durr2006quantum} for the case of short longest
$st$-paths, one in the case of a single $st$-path, and one for generic graphs.

Before getting into quantum algorithms for path detection, we note the
following corollary of \cref{thm:class_edge_finding}, via a reduction to
path finding from path-edge finding, that characterizes the classical query
complexity of path finding in the case of short longest $st$-paths:
\begin{corollary}
Let $G=(V,E)$ with $s,t\in V$ be an $n$-vertex complete graph and suppose we are promised that $G(x)$ has a path of length $L$ for $L\in[3,n/4]$ between $s$
 and $t$. Then $st\mbox{-}\textsc{path}_G(x)$
has randomized
 query complexity $\Omega(n^2)$.
\end{corollary}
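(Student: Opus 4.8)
The plan is to derive this classical lower bound for $st\textsc{-path}_G$ as a near-immediate consequence of \cref{thm:class_edge_finding}. The key observation is that any algorithm that finds a complete $st$-path trivially also finds an edge on an $st$-path: simply output any edge of the path it produces (for instance the edge $\{u_0,u_1\}$). So an algorithm solving $st\textsc{-path}_G(x)$ with query complexity $q$ yields an algorithm solving $st\textsc{-edge}_G(x)$ with query complexity $q$, under the same promise that $G(x)$ contains an $st$-path of length $L\in[3,n/4]$.

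The one thing to check is that the classical lower bound of $\Omega(n^2)$ from \cref{thm:class_edge_finding} is stated under exactly this promise, which it is: that theorem promises a path of length $L\in[3,n/4]$ and concludes $\Omega(n^2)$ randomized queries for $st\textsc{-edge}_G$. Hence a randomized algorithm for $st\textsc{-path}_G$ under the same promise would, via the reduction above, give a randomized algorithm for $st\textsc{-edge}_G$ with the same (bounded-error) query complexity, contradicting the $\Omega(n^2)$ bound unless the path-finding algorithm itself uses $\Omega(n^2)$ queries. I would phrase the proof as a two-sentence reduction argument, being careful to note that the reduction preserves bounded error (outputting a fixed edge of a correct path is deterministic given the path, so if the path is correct with probability $1-p$, the returned edge lies on an $st$-path with probability $1-p$).

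Concretely, the write-up is:

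\begin{proof}
Any algorithm solving $st\mbox{-}\textsc{path}_G(x)$ can be turned into an algorithm solving $st\mbox{-}\textsc{edge}_G(x)$ with the same query complexity and the same success probability: run the path-finding algorithm to obtain a path $\vec{u}=(u_0,\dots,u_\ell)$ and output the edge $\{u_0,u_1\}$, which is on the $st$-path $\vec u$ whenever $\vec u$ is a valid $st$-path. The promise in \cref{thm:class_edge_finding} -- that $G(x)$ is a subgraph of an $n$-vertex complete graph containing an $st$-path of length $L\in[3,n/4]$ -- is identical to the promise here, and under that promise \cref{thm:class_edge_finding} shows that any bounded-error randomized algorithm for $st\mbox{-}\textsc{edge}_G$ has query complexity $\Omega(n^2)$. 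Hence the same lower bound holds for $st\mbox{-}\textsc{path}_G(x)$.
\end{proof}

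There is essentially no obstacle here; the only thing to be slightly careful about is that the reduction is black-box and does not increase queries, so the $\Omega(n^2)$ bound transfers verbatim. One could also remark that a matching upper bound is not claimed, since finding an entire path may be harder than finding one edge, but that is orthogonal to this corollary.
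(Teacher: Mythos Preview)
The proposal is correct and matches the paper's own (implicit) argument: the paper simply states this corollary follows from \cref{thm:class_edge_finding} ``via a reduction to path finding from path-edge finding,'' which is exactly the reduction you spell out. Your write-up is a clean and complete rendering of that one-line observation.
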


\subsubsection{Graph with a Single Path}

When the graph  $G(x)$ is known to have a single $st$-path, we will we
use a divide-and-conquer algorithm to find the path. To show that
the divide-and-conquer approach is useful, we first consider the simpler algorithm 
(as described above) that
 uses \cref{thm:edge_finder} to find an edge 
$\{u,v\}$ on the path, and then once that edge is found, the algorithm is run on a new
graph where vertices $u$ and $v$ are identified. This process is continued
until the edge $\{s,t\}$ is found. Thus if the length of the path is
initially $L$, after an edge is found, the path length will be $L-1$, and
then $L-2$ in the next iteration, etc. Ignoring error, and assuming the
algorithm finds an edge in each round, by \cref{thm:edge_finder}, the query
complexity at the $i$th round will be $\widetilde{O}(n\sqrt{L-i}).$ Over the
course of the $L$ rounds, the total query complexity will be
\begin{equation}
\sum_{i=0}^{L-1}\widetilde{O}(n\sqrt{L-i})=\widetilde{O}\left(nL^{3/2}\right).\label{eq:bad_approach}
\end{equation}
For $L\geq n^{2/3}$, this algorithm does not even outperform the best classical 
algorithm, and for $L\geq n^{1/3}$ it does not outperform the quantum algorithm of 
Ref.~\cite{durr2006quantum}.

We instead consider the following divide-and-conquer approach, described in detail
in \cref{alg:SinglePathFind}. We
use \cref{alg:edge_finder} to find a set of edges, some of which are
very likely to be on the path. Then we use \cref{cor:st-conn} to verify which
of those edges is actually on the path, and \cref{thm:witness-est} to ensure we choose an edge near the center of the path, so we are left with two subproblems of approximately half the size. Finally we make two recursive calls to
find the unique path from $s$ to the found edge, and the unique
path from $t$ to the found edge.

\begin{algorithm}
    \DontPrintSemicolon
    \SetKwInOut{Input}{Input}
    \SetKwInOut{Output}{Output}
    \SetKwRepeat{Do}{do}{while}
    \Input{Failure tolerance $p>0$, oracle $O_x$ for the graph $G(x)=(V,E(x))$, $s,t\in V$ such that there is a unique path from $s$ to $t$. }
    \Output{With probability $1-O(p)$, a set of edges whose vertices form a path from $s$ to $t$ in $G(x)$.}
    \tcp{Base Cases}
    \lIf{$s=t$}{
      Return $\emptyset$
    }\label{line:bc1}
    \lIf{$\{s,t\}\in E(x)$}{
      Return $\{s,t\}$
    }\label{line:bc2}
    \tcp{Finding Possible Edges on Path}
   $\eps_1\gets \frac{1}{\log n}$ \tcp{Any $\eps_1=o(1)$ that is inverse polylog$(n)$ would suffice}
   $ S\gets \emptyset$\;
   $\ell\gets\frac{2\log(n^5/p)}{\eps_1}$
    \For{$i=1 $ \KwTo $\ell$}
    {
       $e\gets\texttt{EdgeFinder}(O_x,\eps_1,G,s,t)$ (\cref{alg:edge_finder})\label{line:edgeset} \;
      \lIf{$e\neq$ ``Failure'' and $e=(u,v)\in \overrightarrow{E}(x)$}
      {
        $S=S\cup \{(u,v),(v,u)\}$
      }
    }
    \tcp{Finding a possible edge that is actually on a path}
    $\delta\gets p/(\ell n^5)$\;
    \For{$(u,v)\in S$}{
      Initialize \texttt{PathDetection}$(O_x,G^-_{\{u,v\}},s,u,\delta)$ (\cref{cor:st-conn})\;
      Initialize \texttt{PathDetection}$(O_x,G^-_{\{u,v\}},v,t,\delta)$ 
    }
    $flag\gets$ True\;
    \While{flag}
    {
      Run in parallel each \texttt{PathDetection} algorithm initialized in the prior \textbf{for} loop, until each algorithm applies $O_x$ once or terminates (or do nothing for those algorithms that have terminated previously)\;
      \For{$(u,v)\in S$}{
         \If{{\em{\texttt{PathDetection}}}$(O_x,G^-_{\{u,v\}},s,u,\delta)$ and {\em{\texttt{PathDetection}}}$(O_x,G^-_{\{u,v\}},v,t,\delta)$ have both terminated in this iteration of the \textbf{while} loop and both detected paths \label{line:twoPaths}} 
         {
	$\eps_2\gets \sqrt{\eps_1}$, $\eps_3\gets 2\sqrt{\eps_1}$\;
         $\tilde{k}\gets$\texttt{WitnessSizeEst}$(O_x,G,s,u,\eps_2,\delta)$ (\cref{thm:witness-est}) \tcp{estimate of dist.~$s$ to $u$}\;
         \If{$|\tilde{k}-L/2|\leq \eps_3 L$ \label{line:condWitEst}}{ 
	          $(u^*,v^*)\gets (u,v)$\;
        	  $flag\gets$ False\;
		  }
          }
      
    }
    }
    \tcp{Recursive call}
    Return $\{(u^*,v^*)\} \cup \texttt{SinglePathFinder}(O_x,p,G,s,u^*)
    \cup \texttt{SinglePathFinder}(O_x,p,G,v^*,t)$ \label{line:recursive}\;

    \caption{\texttt{SinglePathFinder}$(O_x,p,G,s,t)$}
    \label{alg:SinglePathFind}
\end{algorithm}

\begin{restatable}{theorem}{singlepath}\label{thm:singlePath} Let $p\geq 0$,
 and $G=(V,E)$ with $s,t\in V$ be a family of $n$-vertex graphs, and suppose we are promised that $G(x)$ contains a single $st$-path
 of some length $L$ ($L$ may depend on $x$ and need not be known ahead of time). Then there is a quantum algorithm (\cref{alg:SinglePathFind}) that with probability $1-O(p)$ solves $st\mbox{-}\textsc{path}_G(x)$ and
 uses $\widetilde{O}(nL^{1+o(1)}\log^2(1/p))$ expected queries on input~$x$.
\end{restatable}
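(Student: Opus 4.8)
The plan is to analyze \cref{alg:SinglePathFind} as a divide-and-conquer recursion whose recursion tree has depth $O(\log L)$ and $O(L\log L)$ nodes. I would first record two structural facts about a graph $G(x)$ with a \emph{unique} $st$-path $P=(p_0,\dots,p_L)$: (i) every edge $\{p_{i-1},p_i\}$ of $P$ is an $st$-bridge (deleting it must disconnect $s$ from $t$, else there would be a second $st$-path), while removing an edge not on $P$ cannot disconnect $s$ from $t$; and (ii) for each vertex $p_i$ on $P$, the graph $G(x)$ has a \emph{unique} $sp_i$-path, namely the prefix $(p_0,\dots,p_i)$, and a unique $p_it$-path, the suffix. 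Fact (ii) is what makes the recursive calls on \cref{line:recursive} meet the promise of the theorem, and it also gives $R_{s,p_i}(G(x))=i$ (the unique $sp_i$-flow is the prefix path), so that \texttt{WitnessSizeEst}$(O_x,G,s,u,\cdot,\cdot)$ on a candidate $u=p_i$ really estimates the position of $u$ along $P$. Since $L$ is not given, I would also insert a preliminary call to \texttt{WitnessSizeEst}$(O_x,G,s,t,\cdot,\delta)$ to get a $(1\pm o(1))$-estimate of $R_{s,t}(G(x))=L$ and use it in place of $L$ in the test on \cref{line:condWitEst}; this costs only $\widetilde O(n\sqrt L)$ and is absorbed.

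Next comes correctness at a single recursion node. For a single path $|\theta^*(e)|=1$ on every directed edge of $P$ and $\theta^*$ vanishes elsewhere, so $\q$ is, up to the $O(\sqrt{\eps_1})$ total-variation error promised by \cref{thm:edge_finder}, uniform on the $L$ edges of $P$; hence one call to \texttt{EdgeFinder}$(O_x,\eps_1,G,s,t)$ returns, with probability $1-O(\eps_1)$, an edge of $P$, and it falls in the central band $\{p_i:|i-L/2|\le\tfrac12\eps_3 L\}$ with probability $\Omega(\eps_3)=\Omega(1/\sqrt{\log n})$. With $\ell=\Theta(\log(n^5/p)\log n)$ independent calls, the probability that no central-band edge ever enters $S$ is at most $\bigl(1-\Omega(1/\sqrt{\log n})\bigr)^{\ell}\le p/n^5$. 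Conditioned on a central-band edge $\{p_{i-1},p_i\}$ being in $S$, the verification stage selects a good edge: for the correctly oriented pair $(p_{i-1},p_i)$ the prefix and suffix of $P$ survive in $G^-_{\{p_{i-1},p_i\}}$, so both \texttt{PathDetection} runs report ``connected'' and \texttt{WitnessSizeEst} returns $(1\pm\eps_2)(i-1)$, passing \cref{line:condWitEst}; for the reversed pair $(p_i,p_{i-1})$, fact (ii) says the only $sp_i$-path uses the removed edge, so \texttt{PathDetection} correctly reports ``not connected'' and this orientation is rejected. Setting all of \texttt{PathDetection}/\texttt{WitnessSizeEst} to error $\delta=p/(\ell n^5)$ and union-bounding over the $\le\ell$ candidates and the $O(L\log L)$ recursion nodes keeps the total failure probability $O(p)$. \textbf{The step I expect to be the main obstacle} is ruling out that an edge which entered $S$ only through an \texttt{EdgeFinder} error (hence is not on $P$) passes \emph{both} the $G^-_{\{u,v\}}$-connectivity test \emph{and} the $|\tilde k-L/2|\le\eps_3 L$ test \emph{and} is the first candidate to finish in the lock-step parallel loop; here I would either prove that no such edge can simultaneously satisfy ``$s\leadsto u$ and $v\leadsto t$ in $G^-_{\{u,v\}}$'' and $R_{s,u}(G(x))\approx L/2$ when the $st$-path is unique, or, lacking a clean deterministic claim, bound this event's probability (each \texttt{EdgeFinder} error occurs with probability only $O(\eps_1)$, so $S$ contains $\widetilde O(1)$ stray edges in expectation) and fold the residual into the $O(p)$ failure budget, terminating the algorithm after a fixed query cap.

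Finally, the query complexity. Let $L_d$ be the path length at a node of recursion depth $d$; the central-band test forces $L_d\le(\tfrac12+\eps_3)L_{d-1}+1$, so $L_d=O\bigl(L\,2^{-d}c^{\,d}\bigr)$ with $c:=1+2/\sqrt{\log n}$, and the recursion bottoms out (base cases on \cref{line:bc1,line:bc2}) after $d^\star=O(\log L)$ levels. At one node of length $\lambda$: the $\ell$ \texttt{EdgeFinder} calls cost $\ell\cdot\widetilde O(n\sqrt\lambda/\eps_1)=\widetilde O(n\sqrt\lambda)\,\mathrm{polylog}(n)\log(1/p)$ by \cref{thm:edge_finder}; running the $O(\ell)$ \texttt{PathDetection} pairs in lock-step only until the good candidate terminates costs $\widetilde O(n\sqrt\lambda)$ by \cref{cor:st-conn} (the point being that we stop at this bound, so the $\widetilde O(n^{3/2})$ worst case of a disconnected \texttt{PathDetection} is never paid in full); and \texttt{WitnessSizeEst} with $\eps_2=\sqrt{\eps_1}$ costs $\widetilde O(\sqrt{\lambda n^2/\eps_2^{3}})=\widetilde O(n\sqrt\lambda)\,\mathrm{polylog}(n)$ by \cref{thm:witness-est}. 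Summing $\widetilde O(n\sqrt{L_d})$ over the $2^d$ nodes of level $d$ gives $\widetilde O\bigl(n\sqrt L\,(2c)^{d/2}\bigr)$, a geometric series dominated by $d=d^\star$; using $2^{d^\star}\asymp L\,c^{\,d^\star}$ (from $L_{d^\star}\asymp 1$) this evaluates to $\widetilde O\bigl(nL\,c^{\,d^\star}\bigr)=\widetilde O\bigl(nL\cdot L^{O(1/\sqrt{\log n})}\bigr)=\widetilde O(nL^{1+o(1)})$, the $L^{o(1)}$ coming precisely from the imperfect $(\tfrac12+\eps_3)$-splitting. The two explicit $\log(1/p)$ factors come from $\log(n^5/p)$ in $\ell$ and from the $\log(1/\delta)=\widetilde O(\log(1/p))$ dependence of the subroutines, giving the stated $\widetilde O(nL^{1+o(1)}\log^2(1/p))$ \emph{expected} query complexity; the expectation (rather than worst case) is needed only because some \texttt{PathDetection} sub-instance may occasionally run for its full expected time, an event already accounted for at probability $O(p)$.
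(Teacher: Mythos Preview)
Your overall strategy—divide-and-conquer via \texttt{EdgeFinder}, verify candidates with \texttt{PathDetection}, use \texttt{WitnessSizeEst} to keep the split near the middle, then solve the recurrence to get $L^{1+o(1)}$—is exactly the paper's approach, and your structural observations (i),(ii), the recurrence $L_d\le(\tfrac12+O(\sqrt{\eps_1}))L_{d-1}$, and the resulting $L^{o(1)}$ blow-up all match the paper's analysis. Your remark that $L$ must first be estimated (so that the test on \cref{line:condWitEst} is meaningful) is a fair point the paper leaves implicit.

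The obstacle you single out is genuine—and the paper's own proof glosses over it as well: it simply asserts that if no failure occurs one exits the \textbf{while} loop ``with an edge $(u^*,v^*)$ on the path,'' without explaining why a non-path edge in $S$ cannot pass both the \texttt{PathDetection} and \texttt{WitnessSizeEst} tests. Such edges can exist: take the path $s\!-\!a\!-\!b\!-\!t$ with a triangle $a\!-\!c\!-\!d\!-\!a$ hanging off $a$; then for $(u,v)=(c,d)$, both $s\leadsto c$ and $d\leadsto t$ hold in $G^-_{\{c,d\}}$, yet $\{c,d\}$ is not on any $st$-path and recursing on it violates the single-path promise.

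Your fallback fix, however, does not close the gap. \cref{thm:edge_finder} only guarantees that, conditioned on \texttt{EdgeFinder} not failing, the output distribution is within $O(\sqrt{\eps_1})$ in total variation of $q$; so the per-call probability of a stray (non-path) edge is $O(\sqrt{\eps_1})$, not $O(\eps_1)$. With $\ell=\Theta(\log(n^5/p)/\eps_1)$ calls, the expected number of stray edges in $S$ is $\Theta(\ell\sqrt{\eps_1})=\Theta(\log(n^5/p)\sqrt{\log n})$, which is polylogarithmic and cannot simply be ``folded into the $O(p)$ failure budget.'' A clean deterministic patch is to add, for each candidate $(u,v)$, a third test \texttt{PathDetection}$(O_x,G^-_{\{u,v\}},s,t,\delta)$ and reject if it reports ``connected'': an edge of the unique $st$-path is an $st$-bridge (your fact (i)), so this test is failed by path edges and passed by every non-path edge. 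This costs only another $\widetilde O(n\sqrt{L})$ per candidate and makes the assertion ``$(u^*,v^*)$ is on the path'' actually follow, after which your (and the paper's) correctness and complexity analyses go through unchanged.
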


\begin{proof}
We first analyze the probability of error, then we prove the correctness of \cref{alg:SinglePathFind}, assuming that no
errors are made, and finally, we analyze the query complexity.

We will stop the algorithm after $O(n)$ recursive calls. Since each recursive call returns an
edge, and any path has length at most $n$, this termination will not affect the success
probability. We then bound our probability of error by $O(p/n^4)=O(p)$, by showing that the
failure probability in each recursive call is $O(p/n^5)$.

\begin{samepage}
We say a failure occurs (in some recursive call) if any of the following happens:
\begin{enumerate}
\item Any one of the at most $4\ell$ \texttt{PathDetection}
algorithms errs. This has probability $O(\ell\delta)=O(p/n^5)$, by our choice of $\delta=p/(\ell n^5)$. 
\item One of the at most $O(\ell)$ calls to \texttt{WitnessSizeEst} produces an estimate that is not within the desired relative error. This has probability $O(\ell\delta)=O(p/n^5)$. 
\item None of the $\ell$ iterations of \texttt{EdgeFinder} produces an edge that is on the $st$-path, and moreover, that is within $(\eps_3-\eps_2)L=\sqrt{\eps_1}L$ of the middle of the path. The absence of this type of failure is sufficient to guarantee that the condition on \cref{line:condWitEst} will be satisfied, as long as \texttt{WitnessSizeEst} is also successful.  
\label{item:failure3}
\end{enumerate}
\end{samepage}
We analyze the probability of the last event, assuming the first two do not occur. 
Let $e_0,\dots,e_{L-1}$  denote the path edges, in order, in the unique $st$-path in $G(x)$.
For one of the $\ell$ runs of \texttt{EdgeFinder}, the probability that 
it does not output ``Failure'' is $\eps_1$. Conditioned on the output of \texttt{EdgeFinder} not being ``Failure,'' by \cref{thm:edge_finder}, we sample from a distribution $\hat q$ that is $\sqrt{\eps_1}$-close in total variation distance to the uniform distribution over edges on the $st$-path. Thus, the probability that we sample an edge in the set 
\begin{equation}
R=\{e_k:k\in [L/2-(\eps_3-\eps_2)L,L/2+(\eps_3-\eps_2)L]\},\label{eq:R-path-edges}
\end{equation}
where $e_k$ is the $k^{\textrm{th}}$ path edge, is:
\begin{equation}
\hat{q}(R) \geq \frac{|R|}{L} - \sqrt{\eps_1} = 2(\eps_3-\eps_2) - \sqrt{\eps_1} = 2(2\sqrt{\eps_1}-\sqrt{\eps_1})-\sqrt{\eps_1}=\sqrt{\eps_1}.
\end{equation}
Thus, using $\eps_1\leq 1/2$, each of the $\ell$ samples has probability at least $(1-\eps_1)\sqrt{\eps_1}\geq \sqrt{\eps_1}/2$ of being a path edge in the correct range, $R$. 
Using
Hoeffding's bound, the probability that none of them is a path edge in the correct range is thus at most:
\begin{equation}
e^{-2\ell(\sqrt{\eps_1}/2)^2}=e^{-\ell\eps_1/2}=e^{-\log(n^5/p)}=O(n^{-5}p) 
\end{equation}
by our choice of $\ell=2\log(n^5/p)/\eps_1$. 
The total probability of failure in one round is thus at most $O(p/n^5)$.

We prove correctness using induction on $L$, the length of the path, assuming no failure occurs.
For the base case, if $L=0$ or $L=1$, we will correctly return the path in \cref{line:bc1,line:bc2}. 

For the inductive case, let $L'\geq 1$. We assume \texttt{SinglePathFinder} works 
correctly for all  lengths $L$ such that  $0\leq L\leq L'$. Now consider a graph 
with $L=L'+1.$ Then assuming no failure, we will sample at least one edge $(u,v)$ in the set $R=\{e_k:k\in [L/2-(\eps_3-\eps_2)L,L/2+(\eps_3-\eps_2)L]\}$ (not doing so is a failure of the type specified by \cref{item:failure3} in the list above). Then if
there are no errors in the \texttt{PathDetection} algorithms, \cref
{line:twoPaths} will be satisfied when $(u,v)$ corresponds to an edge in
the path where $u$ is closer to $s$ and $v$ is closer to $t$. This is
because we have removed $\{u,v\}$ from the graph when we are
running \texttt{PathDetection}, and since there is a unique $st$-path, there
will only be a path from $s$ to $u$ and not from $s$ to $v$, and likewise
for $t$.

Then for every edge $(u,v)$ that we have correctly found using \texttt{PathDetection}
to be on a path, we apply \texttt{WitnessSizeEst} (see \cref{thm:witness-est}) to estimate $R_{s,u}(G(x))$.
If $(u,v)=e_k$, then $e_0,\dots,e_{k-1}$ is the unique $su$-path in $G$, and it
has length $k$, and so $R_{s,u}(G(x))=k$, and thus
\texttt{WitnessSizeEst} is actually estimating $k$. Assuming $(u,v)\in R$, (and
we know this holds for at least one such edge), we have $|k-L/2|\leq (\eps_3-\eps_2)L$. Then 
since we assume \texttt{WitnessSizeEst} does not fail, it outputs an estimate $\tilde{k}$
of $k$, such that $|\tilde{k}-k|\leq \eps_2 k \leq \eps_2 L$. Together, these conditioned imply $|\tilde{k}-L/2|\leq \eps_3 L$, 
which will trigger the \textbf{while} loop to halt. It is possible that we will
break out of the loop for an edge not in $R$, but at the least we know that if no
failure occurs, we will will certainty break out of the \textbf{while} loop with an edge $(u^*,v^*)$
on the path.

Now that we have the edge $(u^*,v^*)$, to find the rest of the path, we just need 
to find the rest of the path from $s$ to $u^*$ and from $v^*$ to $t$. But both of 
these problems will have path lengths between $0$ and $L'$, so by inductive 
assumption, the recursive calls in \cref{line:recursive} will be correct, and will
return the edges on the paths.

Turning to our analysis of the expected query complexity, we first bound the contribution to the expected query complexity in the case of a failure. As just discussed, a failure occurs with probability $O(p/n^4)$. Even in case of failure, each of our $O(n\log(n/p))=O(n^2\log(1/p))$ calls to \texttt{EdgeFinder}, \texttt{PathDetection}, and \texttt{WitnessSizeEst} still has expected query complexity at most $\widetilde{O}(n^{1.5}(1/\eps_1+1/\eps_2^{3/2})\log(1/\delta))=O(n^2\log(1/p))$ (for \emph{any} $x$), for a total query cost of $O(n^4\log^2(1/p))$. 
Thus, the error case contributes an additive
$O(p\log^2(1/p))=O(1)$ to the expected query complexity. 

Next, we create a recurrence relation for the expected query complexity, assuming no failure occurs. Let $\mathbb{E}[T_L]$ be the expected query complexity 
of \cref{alg:SinglePathFind} on a graph with $n$ vertices, when
there is a single path, and that path has length $L$.  
For $k\in \{0,\dots,L-1\}$, let $\tilde{q}_{L}(k)$ be the probability that the path edge that we find, $(u^*,v^*)$, is $e_k$. 
Because we assume no subroutine call fails, we can assume that $\tilde{k}$ is an estimate of $k$ with relative error $\eps_2$, so $|\tilde{k}-k|\leq \eps_2 k\leq \eps_2 L$. From the conditional statement in \cref{line:condWitEst}, we also have $|\tilde{k}-L/2|\leq  \eps_3L$. Taken together, these imply:
\begin{equation}
\abs{k-L/2}\leq (\eps_2+\eps_3)L=(\sqrt{\eps_1}+2\sqrt{\eps_1})L=3\sqrt{\eps_1}L.
\end{equation}
Thus with certainty (assuming no failure occurs), we will exit the 
\textbf{while} loop with $(u^*,v^*)=e_k$, for $k\in [(1/2-3\sqrt{\eps_1})L,(1/2+3\sqrt{\eps_1})L]$, so:
\begin{multline}\label{eq:divide_setup_1}
\mathbb{E}[T_L]=\widetilde{O}(\ell n\sqrt{L}/\eps_1)+\widetilde{O}(\ell n\sqrt{L}\log(1/\delta))+\widetilde{O}\left(\ell\frac{n\sqrt{L}}{\eps_2^{3/2}}\log(1/\delta)\right)\\
+\sum_{k=\lceil (1/2-3\sqrt{\eps_1})L \rceil}^{\lfloor (1/2+3\sqrt{\eps_1})L \rfloor}\tilde{q}_{L}(k)\left(\mathbb{E}[T_{k}]+\mathbb{E}[T_{L-k-1}]\right),
\end{multline} 
where the first three terms come from: (1) running \texttt
 {EdgeFinder} (\cref{alg:edge_finder}, \cref{thm:edge_finder}) $\ell$ times; (2) at most $O(\ell)$ parallel \texttt
 {PathDetection} (\cref{cor:st-conn}) algorithms; and (3) running \texttt{WitnessSizeEst} (\cref{thm:witness-est}) $O(\ell)$ times;
 and the final term from the two recursive calls.
 
To get a function that is strictly increasing in $L$, let $T_L'\coloneqq\max_{k\leq L}\mathbb{E}[T_k]$, 
so in particular $\mathbb{E}[T_L]\leq T_L'$, and $T_L'$ also satisfies the recursion in \cref{eq:divide_setup_1} (with $=$ replaced by $\leq$).
Then we have, for any $k\in [(1/2-3\sqrt{\eps_1})L,(1/2+3\sqrt{\eps_1})L]$,
\begin{equation}
\mathbb{E}[T_k]+\mathbb{E}[T_{L-k-1}]\leq 2 T_{(1/2+3\sqrt{\eps_1})L}'.
\end{equation}
Thus, continuing from \cref{eq:divide_setup_1}, and also using $1/\eps_1=\log n$ and $1/\eps_2=1/\sqrt{\eps_1}=\sqrt{\log n}$, $\ell=2\log(n^5/p)/\eps_1=O(\log(1/p)\log^2n)$, and $\log(1/\delta)=O(\log(\ell n/p))=\log(1/p)\text{polylog}(n,\log(1/p))$,
we get
\begin{equation}\label{eq:divide_setup}
\mathbb{E}[T_L] \leq T_L'\leq \widetilde{O}\left(n\sqrt{L}\log^2({1}/{p})\right)+2T_{(1/2+3\sqrt{\eps_1})L}.
\end{equation} 

To analyze this recurrence, we add up the number of queries made in every recursive call. At the $i^{\mathrm{th}}$ level of recursion, there are $2^i$ recursive calls, and each one makes $\widetilde{O}\left(n\sqrt{L/b^i}\log^2(1/p)\right)$ queries itself, where $b=(1/2+3\sqrt{\eps_1})^{-1}$, before recursing further.
Thus
\begin{equation}\label{eq:divide_recurrence}
\begin{split}
\mathbb{E}[T_L] &\leq \widetilde{O}\left(n\sqrt{L}\log^2({1}/{p})\right)+\sum_{i=1}^{\log_bL}2^i\sqrt{\frac{L}{b^i}}\cdot \widetilde{O}\left(n\log^2(1/p)\right)\\
&\leq \widetilde{O}\left(n\sqrt{L}\log^2({1}/{p})\right)+\widetilde{O}\left(n\sqrt{L}\log^2({1}/{p})\right)\left(2/\sqrt{b}\right)^{\log_b L}.
\end{split}
\end{equation} 
Letting $\eta \coloneqq \frac{1}{1+\frac{1}{6\sqrt{\eps_1}}}=O(1/\sqrt{\log n})$ since $\eps_1=1/\log n$, so that $b=2(1-\eta)$, we have:
\begin{equation}
\begin{split}
\log\left(2/\sqrt{b}\right)^{\log_b L} &= \left(1 - \frac{1}{2}\log b\right)\frac{\log L}{\log b}
= \left(\frac{1}{\log b} - \frac{1}{2}\right)\log L\\ 
&= \left(\frac{1}{1-\log\frac{1}{1-\eta}}-\frac{1}{2}\right)\log L
= \left(\frac{1}{2} + \frac{\log\frac{1}{1-\eta}}{1-\log\frac{1}{1-\eta}}\right)\log L\\
\mbox{so }\left(2/\sqrt{b}\right)^{\log_b L} &= L^{\frac{1}{2}+o(1)},
\end{split}
\end{equation}
where we used $\frac{\log\frac{1}{1-\eta}}{1-\log\frac{1}{1-\eta}}=o(1)$, since $\log\frac{1}{1-\eta}=o(1)$, which follows from $\eta=o(1)$. Thus, continuing from \cref{eq:divide_recurrence}, we have:
\begin{equation}
\mathbb{E}[T_L] = \widetilde{O}\left(n\sqrt{L}\log^2({1}/{p})\right)L^{\frac{1}{2}+o(1)}
= \widetilde{O}\left(n{L}^{1+o(1)}\log^2({1}/{p})\right). \qedhere
\end{equation}
\end{proof}

We note that while our approach in 
\cref{thm:singlePath} outperforms the simpler, non-divide-and-conquer algorithm
analyzed in \cref{eq:bad_approach}, it performs worse than the algorithm of
Ref.~\cite{durr2006quantum} for graphs with $L=\Omega (n^{1/2-o(1)})$. Thus, one could run
\cref{alg:SinglePathFind} until $O\left(n^{3/2}\right)$ queries had been made, and if a
path had not yet been found, switch to the algorithm of Ref.~\cite{durr2006quantum}.


\subsubsection{Path Finding in Arbitrary Graphs}

For the more general case of $st\mbox{-}\textsc{path}_G(x)$ when $G(x)$ is not known 
to only have one $st$-path, while it is possible that an algorithm similar
to \cref{alg:SinglePathFind} would work, we have not been able to bound the
running time effectively. This is because in the case of a single path, once you
find an intermediate edge on the path, the longest paths from $s$ and $t$ to
that edge must be shorter than the length of the longest path from $s$ to
$t$. This ensures that subproblems take shorter time than the original
problem. With multiple paths, we no longer have that guarantee.

However, we provide an alternative approach that, while not as fast as \cref
{alg:SinglePathFind}, still provides an improvement over the algorithm
of \cite{durr2006quantum} for graphs in which all (self-avoiding) paths from $s$ to $t$ are short. Our approach
does not make use of our path-edge sampling algorithm as a subroutine, and instead uses the
path detection algorithm of \cref{cor:st-conn} to decide whether there are paths through various
subgraphs, and then uses that information to find each edge in a path in order from $s$ to $t.$
In this way, we
avoid the problem of subproblems being larger than the original problem,
since if the longest path from $s$ to $t$ has length $L$, and the first edge we find on the path
is $(s,u)$, then
longest path from $u$ to $t$ that doesn't go through $s$ must have length at most $L-1.$ However, we lose the advantage
of a divide-and-conquer approach.

To find the first edge on a path, we use a group testing approach. We divide the
neighbors of $s$ in $G$ into two sets, $S_1$ and $S_2$ and run path detection
algorithms in parallel on two subgraphs of $G(x)$, one with edges from $s$ removed, except
those to vertices in $S_1$ (that is, 
$G^-_{\{\{s,u\}\in E:u\in S_1\}}$), and one with edges from $s$ removed, except those to vertices in $S_2$. We will detect which of these subgraphs contains a path, and we will know there
is a path whose first edge goes from $s$ to a vertex in the corresponding
set ($S_1$ or $S_2$). Then we divide that set into half again, and repeat, until we have
narrowed down our set to one vertex $u$, that must be the first vertex on a
path from $s$ to $t.$ 

At this point we have learned the first edge on a path from $s$ to $t$. We then 
consider $G_s^-$, which is $G$ with vertex $s$ removed, and
recursively iterate this procedure to learn the first edge on a path from $u$ to $t$.

\begin{algorithm}[H]
    \DontPrintSemicolon
    \SetKwInOut{Input}{Input}
    \SetKwInOut{Output}{Output}
    \SetKwRepeat{Do}{do}{while}
    \Input{Failure tolerance $p$, oracle $O_x$ for the graph $G(x)=(V,E(x))$, $s,t\in V$ such that there is a path from $s$ to $t$. }
    \Output{With probability $1-O(p)$, a sequence of edges whose vertices form a path from $s$ to $t$ in $G(x)$}
    $\delta\gets p/(n^4\log n)$\;
    \tcp{Base Case}
    \lIf{$s=t$}{
      Return $\emptyset$\label{line:bc}
    }
    \tcp{Finding the first edge in a path from $s$ to $t$}
    $S,E_s\gets\{\{s,v\}:\{s,v\}\in E\}$\;
    \While{$|S|> 1$}{
      Divide $S$ into two approximately equal, disjoint sets, $S_1$ and $S_2$\;
      Run in parallel the following two algorithms such that the queries implemented by each algorithm stays within $1$ of the other at all times, until one outputs 1 \;
      \begin{itemize}
      \item \texttt{PathDetection}$(O_x,G^-_{E_s\setminus S_1},s,t,\delta)$ (\cref{cor:st-conn}) \label{line:PD1}\;
      \item \texttt{PathDetection}$(O_x,G^-_{E_s\setminus S_2},s,t,\delta)$ (\cref{cor:st-conn}) \label{line:PD2}\;
      \end{itemize}
      \eIf{$G^-_{E_s\setminus S_1}$ algorithm output 1}{
        $S\gets S_1$\;
      }
      {
        $S\gets S_2$\;
      }
      \lIf{neither \texttt{PathDetection} call outputs $1$}{return ``Failure''}

    }
    Return $((s,u))\frown $ \texttt{GeneralPathFinder}$(O_x,G^-_s,u,t,p)$\label{line:rec} \tcp{$\frown$ indicates concatenation of sequences}
    \caption{\texttt{GeneralPathFinder}$(O_x,G,s,t,p)$}
    \label{alg:GeneralPathFinder}
\end{algorithm}

\begin{restatable}{theorem}{generalPathFinder}\label{thm:generalPathFinder} 
Let $p\geq 0$, and $G=(V,E)$ with $s,t\in V$ be a family of $n$-vertex
 graphs, and suppose we are promised that there is a path from $s$ to $t$ in $G(x)$. On input $x$, if the longest $st$-path in $G(x)$ has length $L$ ($L$
 need not be known ahead of time), there is a quantum
 algorithm (\cref{alg:GeneralPathFinder}) that returns the edges on a path with probability $1-O(p)$ and
 uses $\widetilde{O}(nL^{3/2}\log(1/p))$ expected queries.
\end{restatable}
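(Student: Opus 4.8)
The plan is to analyze \cref{alg:GeneralPathFinder} along the lines of the proof of \cref{thm:singlePath}: establish termination and the recursive structure, bound the failure probability by a union bound over subroutine calls, and then bound the expected query complexity, treating the no-failure and failure cases separately. Termination is immediate: each recursive call removes one vertex, so there are at most $n$ recursive calls, and within one call the \textbf{while} loop is a binary search over the at most $n-1$ neighbours of the current source, hence at most $\lceil\log n\rceil$ iterations, each invoking two \texttt{PathDetection} instances. Thus there are $O(n\log n)$ calls to \texttt{PathDetection} in total; by \cref{cor:st-conn} each errs with probability $O(\delta)=O(p/(n^4\log n))$, so a union bound gives total failure probability $O(p/n^3)=O(p)$. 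These are the only randomized steps, so outside this failure event the algorithm is deterministic.

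For correctness I would argue by induction on the number of vertices (equivalently, on the longest-path length), conditioned on all \texttt{PathDetection} calls answering correctly. The base case $s=t$ returns $\emptyset$. For the inductive step, the key invariant of the \textbf{while} loop is that the current set of $s$-edges $S$ always contains an edge that is the first edge of \emph{some} $st$-path in the current graph: this holds initially by the promise, and is preserved because we pass to $S_j$ only when \texttt{PathDetection} on $G^-_{E_s\setminus S_j}$ reports an $st$-path, which, being self-avoiding, uses exactly one $s$-edge, necessarily one lying in $S_j$. When the loop ends with $S=\{\{s,u\}\}$, this forces $\{s,u\}\in E(x)$ and the existence of a $ut$-path in $G^-_s(x)$; applying the inductive hypothesis to \texttt{GeneralPathFinder}$(O_x,G^-_s,u,t,p)$ and prepending the edge $(s,u)$ yields an $st$-path in $G(x)$.

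For the expected query complexity, the crucial structural fact is that, in the no-failure case, at recursion depth $i$ the committed vertices $s_0=s,s_1,\dots,s_i$ (with each $\{s_j,s_{j+1}\}\in E(x)$) are distinct, so any $s_it$-path in the current graph $G^-_{\{s_0,\dots,s_{i-1}\}}(x)$ prepends along $s_0\cdots s_i$ to a self-avoiding $st$-path of length $\le L$; hence the longest $s_it$-path in this graph — and therefore in every subgraph $G^-_{E_{s_i}\setminus S_j}(x)$ handed to \texttt{PathDetection} — has length $\le L-i$, so the relevant effective resistance $R_{s_i,t}$ is $\le L-i$. By \cref{cor:st-conn}, the \texttt{PathDetection} instance running on a connected subgraph outputs $1$ after $\widetilde O\!\big(n\sqrt{L-i}\,\log(1/\delta)\big)$ expected queries, and since the two paired instances are advanced in lockstep and we halt at the first ``$1$'', a loop iteration costs only a constant factor more — even though the instance on a disconnected subgraph would, run alone, cost $\Theta(n^{3/2}\log(1/\delta))$. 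Summing $\widetilde O\!\big(n\sqrt{L-i}\,\log(1/\delta)\big)$ over the $O(\log n)$ iterations of each call and over depths $i=0,\dots,L-1$ gives $\sum_{i=0}^{L-1}\widetilde O\!\big(n\sqrt{L-i}\,\log(1/\delta)\big)=\widetilde O\!\big(nL^{3/2}\log(1/\delta)\big)=\widetilde O\!\big(nL^{3/2}\log(1/p)\big)$, using $\log(1/\delta)=O(\log n+\log(1/p))$. In the failure event (probability $O(p/n^3)$) each of the $O(n\log n)$ \texttt{PathDetection} calls still uses at most $O(n^{3/2}\log(1/\delta))$ expected queries, so the failure branch contributes only a lower-order term, dominated by the main term $\widetilde O(nL^{3/2}\log(1/p))$.

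The step I expect to be the main obstacle is the query-complexity accounting around the parallel \texttt{PathDetection} pairs: one must verify that running the two instances in lockstep and stopping at the first ``$1$'' genuinely keeps the per-iteration cost governed by the \emph{small} effective resistance $L-i$ of the connected subgraph rather than by the $n^{3/2}$ worst-case cost of its disconnected partner, and that the depth-$i$ bound $R_{s_i,t}\le L-i$ holds along every branch reached without error (which is where one uses that the edges $\{s_j,s_{j+1}\}$ committed to are actual edges of $G(x)$), so that the recursion telescopes to $nL^{3/2}$ rather than $nL\sqrt n$. The remaining bookkeeping — the union bound over subroutine failures and the negligible cost of the failure branch — is routine.
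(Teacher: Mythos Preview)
Your proposal is correct and follows essentially the same approach as the paper's proof: a union bound over the $O(n\log n)$ \texttt{PathDetection} calls for the error probability, induction for correctness, and the recurrence $\mathbb{E}[T_L]=\widetilde{O}(n\sqrt{L}\log(1/p))+\mathbb{E}[T_{L-1}]$ solved to $\widetilde{O}(nL^{3/2}\log(1/p))$ for the query complexity, with the failure branch contributing negligibly. You have correctly identified the two points that carry the argument---that the lockstep execution bounds each \textbf{while}-loop iteration by the cost of the \emph{connected} instance, and that at depth $i$ the longest $s_it$-path in the pruned graph has length at most $L-i$ because prepending the committed prefix $s_0\cdots s_i$ yields a self-avoiding $st$-path---and your loop invariant for correctness is stated a bit more explicitly than in the paper.
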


We note that \cref{alg:GeneralPathFinder} performs worse than the algorithm of
Ref.~\cite{durr2006quantum} for graphs with $L>n^{1/3}$. Thus, one could run
this algorithm until $O\left(n^{3/2}\right)$ queries had been made, and if a
path had not yet been found, switch to the algorithm of 
Ref.~\cite{durr2006quantum}.

\begin{proof}
We first analyze the probability of error in \cref{alg:GeneralPathFinder}. Over the course
of the algorithm, there will be $O(n)$ recursive calls (since each recursive
call returns an edge). We bound our probability of error to $O(p/n^3)=O(p)$, by showing that the failure probability in each recursive call is $O(p/n^{4})$.

We consider a recursive call to have an error if any of the $O(\log n)$ calls to \texttt{PathDetection} fails. Because of our choice of $\delta=p/(n^4\log n)$, each call fails with probability $O(p/(n^4\log n)$, so the probability that all such calls succeed is 
\begin{equation}
(1-O(p/(n^4\log n))^{O(\log n)}=1-O(p/n^4),
\end{equation}
so the probability that at least one call fails is $O(p/n^4)$ and the probability that 
any call fails is $O(p/n^3)$.  

Even in the case of a failure, the expected query complexity of the algorithm is at most $O(n^3\log(1/p))$, since at most $O(n\log n)$ calls to \texttt{PathDetection} are made over the course of the algorithm, each of which has expected query complexity $O(n^{3/2}\log(n/p))$ (for \emph{any} $x$). Thus, the overall contribution to the expected query complexity of \cref{alg:GeneralPathFinder} in the error case is at most $O((p/n^3)n^{3}\log(1/p))=O(1)$.

Thus, we can analyze the expected query complexity of \cref{alg:GeneralPathFinder} assuming no errors occur. When the longest path length between
$s$ and $t$ is $L$, then at least one of the pair of \texttt
{PathDetection} subroutines that are run in parallel will have expected query complexity $\widetilde{O}(n\sqrt{L}\log(1/\delta))$. This is because, as long as there is not an
error, the first edge in a path with length at most $L$ must be contained in
either $S_1$ or $S_2$, so there will be a path in one of the two parallel
subroutines, it will halt after $\widetilde{O}(n\sqrt{L}\log(1/\delta))$ expected queries, since, for any $G'$, $R_{s,t}(G')$ is upper bounded by the length of any $st$-path in $G'$. 

Let $\mathbb{E}[T_L]$ be the expected query complexity of \cref{alg:GeneralPathFinder} when all $st$-paths in $G(x)$ have length at most $L$. Then a recurrence relation for the expected query complexity is
\begin{equation}
\mathbb{E}[T_L]=\widetilde{O}(n\sqrt{L}\log(1/p))+\mathbb{E}[T_{L-1}], \qquad T_0=O(1).
\end{equation}
The $\widetilde{O}(n\sqrt{L}\log(1/p))$ comes from the $O(\log n)$ iterations of \texttt{PathDetection}, each of which has expected query complexity at most $\widetilde{O}(n\sqrt{L}\log(1/\delta))=\widetilde{O}(n\sqrt{L}\log(1/p))$.
Solving this recurrence, we find that
\begin{equation}
\mathbb{E}[T_L]=\widetilde{O}(nL^{3/2}\log(1/p)).
\end{equation}

Finally, we prove the correctness of \cref{alg:GeneralPathFinder} using induction
on the length of the longest path from $s$ to $t$ assuming that no errors are
made. 
For the base case, if $L=0$ we will correctly return the path in \cref{line:bc}.

For the inductive case, let $k\geq 0$, and assume \texttt
{GeneralPathFinder} works correctly for all graphs whose longest path length
from $s$ to $t$ is $L$, where $0\leq L\leq k$. Now consider a graph with
$L=k+1.$ Then as long as none of the $2\lceil\log n\rceil$ iterations of \texttt{PathDetection} in \cref
{line:PD1,line:PD2} fail, we will find an edge $\{s,u\}$ on a path from $s$
to $t$. This is because at each iteration of \cref{line:PD1,line:PD2}, we
find a set of vertices that we know contains the second vertex (first 
vertex after $s$) in a path from $s$ to $t$. At each iteration, the number of vertices in the set for which we have this knowledge decreases by a factor of 2, until we have a set with just one vertex, which must be the next vertex in our path
after $s$.

Once we have found the first edge $\{s,u\}$ of the path, we have a new
problem of finding a $ut$-path on a graph with $s$ removed. But because the longest path from $s$ to $t$
was at most $L$, the longest path from $u$ to $t$ that does not go through $s$
must be at most $L-1$, so by our inductive assumption, the recursive call
to \texttt{GeneralPathFinder} in \cref{line:rec}, which finds a $ut$-path on
the graph with vertex $s$ removed, will be correct.
\end{proof}


\paragraph{Acknowledgements} We thank Jana Sot\'akov\'a and Mehrdad Tahmasbi
 for insightful discussions about path finding via edge sampling. SK and SJ
 were sponsored by the Army Research Office and this work was accomplished
 under Grant Number W911NF-20-1-0327. The views and conclusions contained in
 this document are those of the authors and should not be interpreted as
 representing the official policies, either expressed or implied, of the Army
 Research Office or the U.S. Government. The U.S. Government is authorized to
 reproduce and distribute reprints for Government purposes notwithstanding
 any copyright notation herein. SJ is supported by NWO Klein project number
 OCENW.Klein.061, and the European Union (ERC, ASC-Q, 101040624). Views and
 opinions expressed are however those of the authors only and do not
 necessarily reflect those of the European Union or the European Research
 Council. Neither the European Union nor the granting authority can be held
 responsible for them. SJ is a CIFAR Fellow in the Quantum Information
 Science Program.

\bibliography{spanPrograms1}
\bibliographystyle{alpha}

\appendix

\section{Proofs of Flow Properties}
\label{app:Flow}

In this appendix, we list and prove several results about flows on graphs. We first restate and prove \cref{lem:flow-paths}, which tells us that the optimal flow state for a graph $G(x)$, which is the positive witness for $x$ in ${\cal P}_{G_{s,t}}$, and thus the state $\ket{\theta^*}/\norm{\ket{\theta^*}}$ approximated in Step 2 of \cref{alg:edge_finder}, is supported only on edges that are on (self-avoiding) $st$-paths. 

Then in \cref{app:flow-gen}, we mention some interpretations of the flow in terms of graph theoretic quantities 
that help provide intuition
for the distribution of edges we get when we measure the flow state. In \cref{app:flow-series-parallel} we give further such interpretations that are specific to the case of series-parallel graphs.

\flowpath*

\begin{proof}
For any $(u,v)\in\overrightarrow{E}=\overrightarrow{E}(G)$, define 
\begin{equation}\label{eq:e-uv}
\ket{e_{u,v}} \coloneqq \frac{1}{\sqrt{2}}\left(\ket{u,v}-\ket{v,u}\right) = -\ket{e_{v,u}},
\end{equation}
so that, in particular, 
$
\ket{\rho_{\vec{u}}} = \sum_{i=0}^{\ell-1}\ket{e_{u_i,u_{i+1}}}
$.
Let 
\begin{equation}
H^-(x)\coloneqq \mathrm{span}\{\ket{e_{u,v}}:(u,v)\in \overrightarrow{E}(G(x))\}
\end{equation}
and note that $\{\ket{e_{u,v}}:(u,v)\in\overrightarrow{E}(G(x)), u<v\}$ is an orthonormal basis for $H^-(x)$. 

We first argue that $\ket{\theta^*}$ is in $H^-(x)$. To see this, note that
\begin{equation}
H^-(x)^\bot\cap H(x) = \mathrm{span}\{\ket{u,v}+\ket{v,u} : (u,v)\in\overrightarrow{E}(G(x))\}\subseteq \ker(A)\cap H(x).
\end{equation}
An optimal witness $\ket{\theta^*}$ must be orthogonal to $\ker(A)\cap H(x)$ (or we could make it smaller while keeping it a witness by subtracting its projection onto $\ker(A)\cap H(x)$), so 
$
\ket{\theta^*}\in H^-(x)
$ 
meaning that for $(u,v)\in \overrightarrow{E}(G(x))$, $\theta^*(u,v)=-\theta^*(v,u)$. From this we can see that:
\begin{equation}
\ket{\theta^*}= \frac{1}{\sqrt{2}}\sum_{(u,v)\in \overrightarrow{E}(G(x)):u<v}\theta^*(u,v)\ket{e_{u,v}}.
\end{equation}

Note that 
\begin{equation}
A\ket{e_{u,v}} = \frac{1}{\sqrt{2}}\left(\ket{u}-\ket{v} - (\ket{v} - \ket{u})\right) = \sqrt{2}(\ket{u}-\ket{v}).
\end{equation}
so we can express $A\Pi_{H^-(x)}$ as 
\begin{equation}
A\Pi_{H^-(x)}=A\sum_{(u,v)\in \overrightarrow{E}(G(x)):u<v}\ket{e_{u,v}}\bra{e_{u,v}}
=\sqrt{2}\sum_{(u,v)\in \overrightarrow{E}(G(x)):u<v}(\ket{u}-\ket{v})\bra{e_{u,v}}.
\end{equation}
Then we can compute, for $u\in V$:
\begin{equation}
\begin{split}
\bra{u}A\Pi_{H^-(x)} &= 
\bra{u}\sum_{(u',v)\in \overrightarrow{E}(G(x)):u'<v}\sqrt{2}(\ket{u'}-\ket{v})\bra{e_{u',v}}\\
&= \sqrt{2}\sum_{v:(u,v)\in\overrightarrow{E}(G(x)),u<v}\bra{e_{u,v}}-\sqrt{2}\sum_{v:(v,u)\in\overrightarrow{E}(G(x)),v<u}\bra{e_{u,v}}
= \sqrt{2}\sum_{v:(u,v)\in \overrightarrow{E}(G(x))}\bra{e_{u,v}},
\end{split}\label{eq:uA}
\end{equation}
since $\ket{e_{u,v}}=-\ket{e_{v,u}}$.
Since $\ket{\theta^*}$ is a positive witness, we have $A\Pi_{H^-(x)}\ket{\theta^*}=A\ket{\theta^*} = \ket{s}-\ket{t}$, from which we derive the following constraints, which equivalently say that $\theta^*$ must be a unit $st$-flow:
\begin{equation}\label{eq:flow-constraints}
\forall u\in V,\; \sum_{v:(u,v)\in \overrightarrow{E}(G(x))}\braket{e_{u,v}}{\theta^*}
=\left\{\begin{array}{ll}
1/\sqrt{2} & \mbox{if }u=s\\
-1/\sqrt{2} & \mbox{if }u=t\\
0 & \mbox{otherwise.}
\end{array}\right.
\end{equation}

For $\ket{\psi}\in H^-(x)$, we say $\ket{\psi}$ is a  
\emph{circulation} if it satsifies the following linear constraints:
\begin{equation}
\forall u\in V,\; \sum_{v:(u,v)\in \overrightarrow{E}(G(x))}\braket{e_{u,v}}{\psi} = 0.\label{eq:circ-constraints}
\end{equation}
The subspace of $H^-(x)$ of such vectors will be denoted ${\cal C}(x)$. Let $\overline{G}(x)$ 
be the graph $G(x)$, but with an edge $\{s,t\}$ added (if it was not already 
present). We implicitly assume that $\{s,t\}\in E(G)$, which is without loss of 
generality, since we can always label this edge in such a way that it is not 
present in any $G(x)$, so $\overline{G}(x)$ is a subgraph of $G$. 
If we replace $G(x)$ with $\overline{G}(x)$ in each constraint in \cref{eq:circ-constraints}, 
we can define a subspace of $H^-(x)+\mathrm{span}\{\ket{e_{s,t}}\}$, which will be denoted 
${\cal C}'(x)$.
Then it follows from \cref{eq:uA} that ${\cal C}(x)\subseteq {\cal C}'(x)\subseteq \ker(A)$.

We now argue that  
\begin{equation}\label{eq:paths-circ}
\ket{\psi}\coloneqq \ket{\theta^*} - \frac{1}{\sqrt{2}}\ket{e_{s,t}} \in {\cal C}'(x).
\end{equation}
For any $u\in V\setminus\{s,t\}$:
\begin{equation}
\begin{split}
\sum_{v:(u,v)\in \overrightarrow{E}(\overline{G}(x))}\braket{e_{u,v}}{\psi} &= \sum_{v:(u,v)\in \overrightarrow{E}(G(x))}\braket{e_{u,v}}{\theta^*}-\frac{1}{\sqrt{2}}\sum_{v:(u,v)\in \overrightarrow{E}(\overline{G}(x))}\braket{e_{u,v}}{e_{s,t}}=0-0,
\end{split}
\end{equation}
by \cref{eq:flow-constraints}. For $u=s$,
\begin{equation}
\begin{split}
\sum_{v:(u,v)\in \overrightarrow{E}(\overline{G}(x))}\braket{e_{u,v}}{\psi} &= \sum_{v:(s,v)\in \overrightarrow{E}(G(x))}\braket{e_{s,v}}{\theta^*}-\frac{1}{\sqrt{2}}\braket{e_{s,t}}{e_{s,t}}=\frac{1}{\sqrt{2}}-\frac{1}{\sqrt{2}}=0,
\end{split}
\end{equation}
again by \cref{eq:flow-constraints}, and very similarly for $u=t$. This establishes \cref{eq:paths-circ}.

Let ${\cal P}_{s,t}'(x)$ be the span of all $st$-path states in $\overline{G}(x)$. Then by \cref{eq:paths-circ}, we have:
\begin{equation}
\ket{\theta^*}\in {\cal P}_{s,t}'(x)+{\cal C}'(x).
\end{equation}
Suppose $\{s,t\}\not\in E(G(x))$ (the case where it is is simpler), in which case ${\cal P}_{s,t}'(x) = {\cal P}_{s,t}(x)\oplus\mathrm{span}\{\ket{e_{s,t}}\}$, where ${\cal P}_{s,t}(x)$ is the span of all $st$-path states in $G(x)$. Then
\begin{equation}
\ket{\theta^*}\in {\cal P}_{s,t}(x)\oplus \mathrm{span}\{\ket{e_{s,t}}\}+{\cal C}'(x).
\end{equation}
Since $\{s,t\}\not\in E(G(x))$, $\ket{\theta^*}$ is orthogonal to $\ket{e_{s,t}}$. 
Further, by the optimality of $\ket{\theta^*}$, $\ket{\theta^*}$ is orthogonal 
to $H(x)\cap \ker (A)$, so in particular, it is orthogonal to 
${\cal C}'(x)\subseteq H'(x)\cap \ker(A)\subseteq \mathrm{span}\{\ket{e_{s,t}}\}\oplus(H(x)\cap \ker(A))$. 
It follows that $\ket{\theta^*}\in {\cal P}_{s,t}(x)$. 
\end{proof}

\subsection{Flow on General Graphs}\label{app:flow-gen}

\begin{lemma}[\cite{doyleElectric1984}] 
Let $\theta$ be the optimal unit $st$-flow in $G$. 
For any $u,v\in V$, let $Z_{u,v}$ be the number of times a random walker who starts at $s$ and continues until she reaches $t$ moves from vertex $u$ to vertex $v$. Then $\theta(u,v) = \mathbb{E}[Z_{u,v}]-\mathbb{E}[Z_{v,u}]$. 
\end{lemma}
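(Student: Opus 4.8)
The plan is to identify the net-traversal field $\phi(u,v)\coloneqq\mathbb{E}[Z_{u,v}]-\mathbb{E}[Z_{v,u}]$ with the electric current, by showing that $\phi$ is a unit $st$-flow that arises as a difference of potentials, and then invoking the standard fact that the energy-minimizing unit flow is the unique potential-induced one; since $\theta$ is \emph{defined} as the minimizer (\cref{def:effRes}), this forces $\theta=\phi$, which is exactly the claim.

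First I would fix the walk: it is the simple random walk on $G$ (from a vertex $u$, move to a uniformly random neighbour, so with probability $1/\deg(u)$ to each), absorbed at $t$ and started at $s$. Since $s$ and $t$ are connected and $G$ is finite, $t$ is reached in finite expected time, so all quantities below are finite. For $u\in V$ let $N_u$ be the expected number of visits to $u$ strictly before absorption, so $N_t=0$ and $N_s\ge 1$. A strong-Markov / first-step argument shows that, for an edge $\{u,v\}\in E$, each step from $u$ to $v$ is "a visit to $u$ followed by choosing $v$," which occurs with probability $1/\deg(u)$ independently across visits; hence $\mathbb{E}[Z_{u,v}]=N_u/\deg(u)$. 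Setting $h(u)\coloneqq N_u/\deg(u)$ we then get $\phi(u,v)=h(u)-h(v)$ for all $(u,v)\in\overrightarrow{E}$.

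Next I would show $h$ is harmonic away from $s,t$ with the correct sources: decomposing each visit by where the previous step came from gives $N_u=\sum_{v:(u,v)\in\overrightarrow{E}}N_v/\deg(v)$ for $u\notin\{s,t\}$, while $N_s=1+\sum_{v:(s,v)\in\overrightarrow{E}}N_v/\deg(v)$ (the extra $1$ from time $0$) and $N_t=0$; equivalently $\deg(u)h(u)=\sum_{v:(u,v)\in\overrightarrow{E}}h(v)$ for internal $u$. From this, $\phi$ satisfies all three conditions of \cref{def:st-flow}: antisymmetry is immediate; for internal $u$, $\sum_{v:(u,v)\in\overrightarrow{E}}\phi(u,v)=\deg(u)h(u)-\sum_{v}h(v)=0$; and $\sum_{v:(s,v)\in\overrightarrow{E}}\phi(s,v)=\deg(s)h(s)-\sum_v h(v)=N_s-(N_s-1)=1$, with the value $1$ at $t$ then forced by the global identity $\sum_{u}\sum_{v:(u,v)\in\overrightarrow{E}}\phi(u,v)=0$ (antisymmetry). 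So $\phi$ is a unit $st$-flow. To see it is the energy-minimizing one, take any unit $st$-flow $\theta'$; then $\psi\coloneqq\theta'-\phi$ is antisymmetric and divergence-free everywhere, and summing by vertices,
\[
\sum_{\{u,v\}\in E}\phi(u,v)\,\psi(u,v)=\sum_{(u,v)\in\overrightarrow{E}}h(u)\,\psi(u,v)=\sum_{u\in V}h(u)\sum_{v:(u,v)\in\overrightarrow{E}}\psi(u,v)=0,
\]
so $J(\theta')=J(\phi)+J(\psi)\ge J(\phi)$ with equality iff $\psi\equiv 0$. Hence $\phi=\theta$. (This last step is also just Thomson's principle / the characterization of the electric current alluded to in the "Interpretation of the optimal flow" remark.)

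The only real subtlety I anticipate is the bookkeeping in the middle: making $\mathbb{E}[Z_{u,v}]=N_u/\deg(u)$ and the recursions for $N_u$ rigorous through the strong Markov property, and in particular correctly handling the extra $+1$ at the source $s$ and the convention $N_t=0$ at the sink. Everything after that is linear algebra.
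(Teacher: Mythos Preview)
The paper does not actually prove this lemma; it is stated with a citation to \cite{doyleElectric1984} and no proof is given. Your proposal is correct and is essentially the classical argument from that reference: identify $h(u)=N_u/\deg(u)$ as a potential via the visit-count recursion, verify the resulting $\phi$ is a unit $st$-flow, and apply Thomson's principle (orthogonality of potential flows to divergence-free perturbations) to conclude optimality. The bookkeeping you flag---the $+1$ at $s$, the convention $N_t=0$, and the strong-Markov justification of $\mathbb{E}[Z_{u,v}]=N_u/\deg(u)$---is handled correctly.
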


\begin{lemma}[\cite{lyonsProbabilityOnTrees2017}]\label{lem:flow1}
Let $\theta$ be the optimal unit $st$-flow in $G$. Let ${\cal T}_G$ be the set of 
spanning trees of $G$, and for any $\{u,v\}\in E$, let ${\cal N}_G(u,v)$ be the set 
of spanning trees of $G$ whose unique $st$-path contains the directed edge $(u,v)$. 
Then
$$\theta(u,v)=\frac{|{\cal N}_G(u,v)|-|{\cal N}_G(v,u)|}{|{\cal T}_G|}.$$
\end{lemma}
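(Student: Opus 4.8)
The plan is to recognise the right-hand side as the expected net flow carried by a uniformly random spanning tree, and then to show that this quantity is a unit $st$-flow obeying Kirchhoff's cycle law; since the minimum-energy unit $st$-flow is the unique unit $st$-flow with that property, it must coincide with $\theta$. Concretely, write $\bar\theta(u,v)\coloneqq\frac{|{\cal N}_G(u,v)|-|{\cal N}_G(v,u)|}{|{\cal T}_G|}$ for $\{u,v\}\in E$ and $\bar\theta(u,v)=0$ otherwise. If $T$ is a uniform spanning tree of $G$ and $Y_T(u,v)\in\{+1,-1,0\}$ records whether the unique $st$-path $P_T$ in $T$ traverses $\{u,v\}$ from $u$ to $v$, from $v$ to $u$, or not at all, then $\bar\theta(u,v)=\mathbb{E}[Y_T(u,v)]$, the expectation being a finite uniform average over spanning trees.

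First I would verify that $\bar\theta$ is a unit $st$-flow in the sense of \cref{def:st-flow}. Antisymmetry is immediate. For the remaining conditions, observe that for each fixed $T$ the map $(u,v)\mapsto Y_T(u,v)$ is itself a unit $st$-flow, because the simple path $P_T$ routes exactly one unit from $s$ to $t$: it is antisymmetric, it satisfies the source/sink conditions at $s$ and $t$, and it is conserved at every internal vertex since $P_T$ enters and leaves such a vertex exactly once (or not at all). The set of unit $st$-flows is a nonempty affine subspace ($G$ is connected), hence closed under averaging, so $\bar\theta=\mathbb{E}[Y_T]$ is again a unit $st$-flow.

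The crux is Kirchhoff's cycle law: for every cycle $C=(c_0\to c_1\to\cdots\to c_{k-1}\to c_0)$ of $G$ one needs $\sum_{j=0}^{k-1}\bar\theta(c_j,c_{j+1})=0$. I would prove this after re-expressing $|{\cal N}_G(u,v)|$ as a count of two-component spanning forests: for $\{u,v\}\in E$, the map $T\mapsto T\setminus\{u,v\}$ is a bijection between ${\cal N}_G(u,v)$ and the set ${\cal F}_{su;vt}$ of spanning forests of $G$ with exactly two trees, one containing both $s$ and $u$ and the other containing both $v$ and $t$ (the inverse adds $\{u,v\}$ back, reconnecting the two trees into a spanning tree whose $st$-path must cross from the $s$-side to the $t$-side along $\{u,v\}$, i.e.\ traverses $u\to v$). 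Hence $|{\cal N}_G(u,v)|-|{\cal N}_G(v,u)|=|{\cal F}_{su;vt}|-|{\cal F}_{sv;ut}|$. Now fix the cycle $C$ and swap the order of summation: a two-component spanning forest $F$ with parts $F_A\ni s$ and $F_B\ni t$ contributes $+1$ to the term for $(c_j,c_{j+1})$ precisely when $c_j\in F_A$ and $c_{j+1}\in F_B$, and $-1$ precisely when $c_{j+1}\in F_A$ and $c_j\in F_B$, while forests not separating $s$ from $t$ contribute $0$ to every term. Walking once around the closed cycle $C$, the number of steps crossing from $F_A$ to $F_B$ equals the number crossing from $F_B$ to $F_A$, so the total contribution of each $F$ vanishes; summing over all $F$ yields $\sum_j\bar\theta(c_j,c_{j+1})=0$.

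Finally, I would invoke the fact — essentially the orthogonality of $\ket{\theta^*}$ to $\ker(A)\cap H(x)$ used in the proof of \cref{lem:flow-paths} — that the minimum-energy unit $st$-flow $\theta$ is the unique unit $st$-flow orthogonal to the cycle space, i.e.\ the unique one satisfying Kirchhoff's cycle law; uniqueness holds because the difference of two such flows is a circulation, which lies in the cycle space yet is orthogonal to it, hence is $0$. Since $\bar\theta$ is a unit $st$-flow satisfying the cycle law, $\bar\theta=\theta$, which is exactly the claim. I expect the main obstacle to be the cycle-law step: pinning down the forest bijection and the ``each forest contributes zero'' bookkeeping precisely, including the degenerate cases where $C$ passes through $s$ or $t$ (so that some ${\cal F}_{su;vt}$ is empty); the remaining steps are routine. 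An alternative route establishes Kirchhoff's current formula $\theta(u,v)=(|{\cal F}_{su;vt}|-|{\cal F}_{sv;ut}|)/|{\cal T}_G|$ directly from the all-minors Matrix--Tree theorem and then applies the same bijection, but the spanning-tree/cycle-law argument above is more self-contained.
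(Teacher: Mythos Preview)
Your argument is the classical proof of Kirchhoff's spanning-tree formula for the electrical current, and it is correct. Note, however, that the paper does not give its own proof of this lemma: it is simply quoted from \cite{lyonsProbabilityOnTrees2017} without argument, so there is no paper-side proof to compare against. What you have written is essentially the standard derivation found in that reference (interpret the right-hand side as the expectation over uniform spanning trees, verify the flow constraints by linearity, establish the cycle law via the two-component spanning forest bijection, and conclude by uniqueness of the cycle-law-satisfying unit flow), and it goes through as stated; the forest bookkeeping and the degenerate cases you flag are handled by exactly the argument you sketch.
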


We have the following interpretation of the quantity $q_{u,v}:=\frac{\theta(u,v)^2}{{\cal R}_{s,t}(G)}$, which is the probability with which our edge finding algorithm, \cref{alg:edge_finder}, samples $\{u,v\}$ (by measuring either $(u,v)$ or $(v,u)$):
\begin{restatable}{lemma}{qinterp}\label{lem:q-interp1}
Suppose $\{s,t\}\not\in E(G)$, and let $\overline{G}$ be the graph $G$ with an 
additional $\{s,t\}$ edge, and $\overline{G}/\{s,t\}$ be $\overline{G}$ with this 
edge contracted -- so it is $G$ with the vertices $s$ and $t$ identified.  Then
$$q_{u,v} = \frac{(|{\cal N}_G(u,v)|-|{\cal N}_G(v,u)|)^2}{|{\cal T}_G|\cdot |{\cal T}_{\overline{G}/\{s,t\}}|}.$$
\end{restatable}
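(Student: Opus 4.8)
The plan is to obtain the formula by a direct substitution built from two ingredients: the spanning-tree expression for the optimal flow (\cref{lem:flow1}) and the classical identity that writes the effective resistance as a ratio of spanning-tree counts. The substitution takes care of the numerator and one factor $|{\cal T}_G|$ of the denominator immediately; all of the real content sits in the effective-resistance identity, which is what produces the second factor $|{\cal T}_{\overline{G}/\{s,t\}}|$.

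First I would unfold the definition $q_{u,v}=\theta(u,v)^2/R_{s,t}(G)$ and plug in $\theta(u,v)=(|{\cal N}_G(u,v)|-|{\cal N}_G(v,u)|)/|{\cal T}_G|$ from \cref{lem:flow1}, obtaining
\begin{equation}
q_{u,v}=\frac{(|{\cal N}_G(u,v)|-|{\cal N}_G(v,u)|)^2}{|{\cal T}_G|^2\,R_{s,t}(G)}.
\end{equation}
Comparing with the target, it remains exactly to establish
\begin{equation}\label{eq:res-tree-plan}
R_{s,t}(G)\,|{\cal T}_G|=|{\cal T}_{\overline{G}/\{s,t\}}|,
\end{equation}
since then $|{\cal T}_G|^2R_{s,t}(G)=|{\cal T}_G|\cdot(|{\cal T}_G|R_{s,t}(G))=|{\cal T}_G|\cdot|{\cal T}_{\overline{G}/\{s,t\}}|$ and the claim follows.

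To prove \cref{eq:res-tree-plan} I would cite the standard spanning-tree formula for effective resistance (e.g.\ \cite{lyonsProbabilityOnTrees2017}), or argue directly as follows. Write $e=\{s,t\}$, so that $\overline{G}\setminus e=G$ and $\overline{G}/e=\overline{G}/\{s,t\}$; by deletion--contraction, $|{\cal T}_{\overline{G}}|=|{\cal T}_G|+|{\cal T}_{\overline{G}/\{s,t\}}|$, and since the spanning trees of $\overline{G}$ containing $e$ are in bijection with those of $\overline{G}/e$, a uniformly random spanning tree of $\overline{G}$ contains $e$ with probability $|{\cal T}_{\overline{G}/\{s,t\}}|/(|{\cal T}_G|+|{\cal T}_{\overline{G}/\{s,t\}}|)$. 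By Kirchhoff's theorem this same probability equals the effective resistance between the endpoints of $e$ in $\overline{G}$, and since in $\overline{G}$ the unit-resistance edge $e$ is in parallel with the rest of $G$ (which has effective resistance $R_{s,t}(G)$ between $s$ and $t$), the parallel law gives that this effective resistance equals $R_{s,t}(G)/(1+R_{s,t}(G))$. Equating the two expressions and clearing denominators yields \cref{eq:res-tree-plan}, and substituting back completes the argument.

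The only real obstacle is \cref{eq:res-tree-plan}; everything else is routine bookkeeping with \cref{lem:flow1}. In the final write-up it is cleanest simply to cite the spanning-tree formula for effective resistance, but as sketched above it costs only a couple of lines given deletion--contraction, the tree/contraction bijection, Kirchhoff's theorem, and the parallel law.
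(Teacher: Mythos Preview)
Your proposal is correct and follows essentially the same approach as the paper: both reduce via \cref{lem:flow1} to the identity $R_{s,t}(G)=|{\cal T}_{\overline{G}/\{s,t\}}|/|{\cal T}_G|$, and both derive that identity by combining deletion--contraction, the spanning-tree formula for effective resistance (the paper cites \cite{kookCombinatorialGreensFunction2011} where you invoke Kirchhoff), and the parallel law for $R_{s,t}(\overline{G})$ in terms of $R_{s,t}(G)$.
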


\begin{proof}
For an
edge $e\in E(G)$, we denote by $G/e$ the graph that results from $G$ when the
edge $e$ is contracted.
Note that for any $e\in E(G)$, ${\cal T}_G = {\cal T}_{G/e}\cup {\cal T}_{G\setminus e}$ -- 
that is, ${\cal T}_G$ is the (disjoint) union of: (1) the set of spanning trees 
that contain $e$, which are isomorphic to the spanning trees of $G/e$; and (2) the 
set of spanning trees that do not contain $e$, which are isomorphic to the spanning 
trees of $G\setminus e$. In particular, this implies 
that $|{\cal T}_{\overline{G}/\{s,t\}}| = |{\cal T}_{\overline{G}}| - |{\cal T}_{G}|$, 
since $G = \overline{G}\setminus\{s,t\}$. 
Since $\{s,t\}\in E(\overline{G})$, we have, by \cite[Theorem 6]{kookCombinatorialGreensFunction2011},
\begin{equation}
R_{s,t}(\overline{G}) = \frac{|{\cal T}_{\overline{G}/\setminus\{s,t\}}|}{|{\cal T}_{\overline{G}}|}
= \frac{|{\cal T}_{\overline{G}}|-|{\cal T}_G|}{|{\cal T}_{\overline{G}}|}
= 1 - \frac{|{\cal T}_G|}{|{\cal T}_{\overline{G}}|}.
\end{equation}
Since $\overline{G}$ is a parallel combination of $G$ and an edge $\{s,t\}$, and since conductances in parallel add, we get
\begin{equation}
R_{s,t}(\overline{G})=\frac{1}{\frac{1}{R_{s,t}(G)}+\frac{1}{1}} = 1 - \frac{1}{R_{s,t}(G)+1}.
\end{equation}
This gives us:
\begin{equation}
\begin{split}
R_{s,t}(G)+1 &= \frac{1}{1-R_{s,t}(\overline{G})} = \frac{|{\cal T}_{\overline{G}}|}{|{\cal T}_G|}\\
R_{s,t}(G) &= \frac{|{\cal T}_{\overline{G}}| - |{\cal T}_G|}{|{\cal T}_G|} = \frac{|{\cal T}_{\overline{G}/\{s,t\}}|}{|{\cal T}_G|}.
\end{split}
\end{equation}
Thus, by \cref{lem:flow1}:
\begin{equation}
\begin{split}
q_{u,v} &= \frac{\left(|{\cal N}_G(u,v)|-|{\cal N}_G(v,u)|\right)^2}{|{\cal T}_G|^2}\frac{|{\cal T}_G|}{|{\cal T}_{\overline{G}/\{s,t\}}|},
\end{split}
\end{equation}
from which the lemma statement follows.
\end{proof}

\subsection{Flow on Series-Parallel Graphs}\label{app:flow-series-parallel}

In a series-parallel graph, for every edge $\{u,v\}$, there is a unique direction, $(u,v)$ or $(v,u)$, such that every (self-avoiding) $st$-path that traverses the edge $\{u,v\}$ does so in that direction. We will call this the \emph{$st$-direction} of $\{u,v\}$.
\begin{lemma}\label{lem:flow-series-parallel}
Suppose $G$ is a series-parallel graph.
For any edge $\{u,v\}\in E(G)$ with $st$-direction $(u,v)$, 
$$q_{u,v} = \frac{|{\cal N}_G(u,v)|^2}{|{\cal T}_G|\cdot |{\cal T}_{\overline{G}/\{s,t\}}|} = p_{u,v}\cdot p_{u,v}',$$
where $p_{u,v}$ is the probability that a uniformly sampled spanning tree has $(u,v)$ on its unique $st$-path, and $p_{u,v}'$ is the probability that a randomly sampled two-component spanning forest of $G$ that separates $s$ and $t$ has $(u,v)$ in its unique $st$-cut. 
\end{lemma}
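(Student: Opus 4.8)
The plan is to specialize \cref{lem:q-interp1} to the series-parallel setting, where the $st$-direction of each edge gives a sign-definite flow. First I would observe that in a series-parallel graph, by the definition of the $st$-direction, every self-avoiding $st$-path that uses the edge $\{u,v\}$ traverses it as $(u,v)$. Consequently, for every spanning tree $T$ whose unique $st$-path uses this edge, it uses it in the direction $(u,v)$, so ${\cal N}_G(v,u) = \emptyset$. Plugging $|{\cal N}_G(v,u)| = 0$ into \cref{lem:q-interp1} immediately gives
\begin{equation}
q_{u,v} = \frac{|{\cal N}_G(u,v)|^2}{|{\cal T}_G|\cdot|{\cal T}_{\overline{G}/\{s,t\}}|}.
\end{equation}

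Next I would identify the two factors combinatorially. The quantity $p_{u,v} = |{\cal N}_G(u,v)|/|{\cal T}_G|$ is by definition the probability that a uniformly random spanning tree of $G$ has $(u,v)$ on its unique $st$-path; this is just the definition of ${\cal N}_G$ from \cref{lem:flow1}. For the second factor, I would use the contraction-deletion identity already invoked in the proof of \cref{lem:q-interp1}: $|{\cal T}_{\overline{G}/\{s,t\}}| = |{\cal T}_{\overline G}| - |{\cal T}_G|$, and recognize $|{\cal T}_{\overline G}| - |{\cal T}_G|$ as the number of two-component spanning forests of $G$ that separate $s$ and $t$. Indeed, a spanning tree of $\overline G$ either avoids the $\{s,t\}$ edge (these are exactly the spanning trees of $G$) or uses it; the latter, with the $\{s,t\}$ edge removed, are exactly the two-component spanning forests of $G$ in which $s$ and $t$ lie in different components. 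Call this set ${\cal F}_{s,t}(G)$, so $|{\cal F}_{s,t}(G)| = |{\cal T}_{\overline{G}/\{s,t\}}|$. Then I need: the number of such forests whose unique $st$-cut (the set of edges of $G$ crossing between the two components) contains $\{u,v\}$ equals $|{\cal N}_G(u,v)|$, which would give $p_{u,v}' = |{\cal N}_G(u,v)|/|{\cal F}_{s,t}(G)|$ and hence $q_{u,v} = p_{u,v}\cdot p_{u,v}'$.

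The main obstacle is this last bijection: showing that in a series-parallel graph, forests in ${\cal F}_{s,t}(G)$ whose $st$-cut contains $\{u,v\}$ are in bijection with spanning trees of $G$ whose $st$-path contains $(u,v)$. The idea is duality: in a (planar) series-parallel graph, $st$-paths and $st$-cuts are planar-dual notions, and contracting versus deleting the edge $\{u,v\}$ swaps the two. Concretely, I would argue that a spanning tree $T$ with $(u,v)$ on its $st$-path corresponds, via $T \mapsto E(G)\setminus E(T)$ composed with the planar duality of series-parallel graphs, to a spanning tree of the dual in which the dual edge of $\{u,v\}$ lies on the dual $st$-path; and the dual $st$-path structure is precisely the $st$-cut structure of the two-component forest obtained by deleting $\{u,v\}$ from $T$ and keeping the component containing the $s$-side. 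Rather than develop full planar duality machinery, a cleaner route for a series-parallel graph is to induct on the series-parallel decomposition: for a single edge the claim is trivial; for a series composition $G = G_1 \cdot G_2$ the edge $\{u,v\}$ lies in one factor, say $G_1$, and both an $st$-path and an $st$-cut of $G$ decompose as (something in $G_1$) combined with (a spanning tree / a spanning tree of $G_2$), and the counts multiply compatibly; for a parallel composition the analysis is dual, with the roles of trees and separating forests exchanged between the factors. Tracking these counts through the recursion, using $q_{u,v} = \theta(u,v)^2/R_{s,t}(G)$ and the series/parallel rules for effective resistance and for the flow $\theta$, should yield the identity $q_{u,v} = p_{u,v} p_{u,v}'$ directly, giving an alternative to the duality argument. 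I would present whichever of these is shortest, and I expect the bookkeeping in the parallel case — where $G$'s separating forests restrict to a spanning tree in the factor not containing $\{u,v\}$ — to be the fiddly part.
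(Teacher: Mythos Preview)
Your first half is exactly the paper's proof: use the $st$-direction to get ${\cal N}_G(v,u)=\emptyset$, plug into \cref{lem:q-interp1}, and split the fraction as $\frac{|{\cal N}_G(u,v)|}{|{\cal T}_G|}\cdot\frac{|{\cal N}_G(u,v)|}{|{\cal T}_{\overline{G}/\{s,t\}}|}$, identifying the denominator of the second factor with $|{\cal F}_{s,t}(G)|$ via contraction--deletion.

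Where you diverge is in treating the remaining bijection as the ``main obstacle'' requiring planar duality or a series-parallel induction. The paper dispatches it in one line: the map $T\mapsto T\setminus\{u,v\}$ is a bijection from ${\cal N}_G(u,v)$ to the set of two-component spanning forests of $G$ separating $s$ and $t$ with $(u,v)$ crossing the cut (i.e.\ $u$ on the $s$-side, $v$ on the $t$-side). Indeed, if $(u,v)$ lies on the unique $st$-path of the tree $T$, deleting it disconnects $s$ from $t$ with $u$ in the $s$-component; conversely, adding $\{u,v\}$ to such a forest produces a spanning tree whose $st$-path passes through $(u,v)$. This bijection needs no hypothesis on $G$ at all; the series-parallel assumption is used \emph{only} in the first step to kill ${\cal N}_G(v,u)$. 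Your duality machinery would work, but it is precisely what the paper deploys for the \emph{next} lemma (relating $p_{u,v}'$ to the dual graph), so invoking it here is overkill and circular in spirit.
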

\begin{proof}
Since (self-avoiding) $st$-paths can only use $(u,v)$ and not $(v,u)$, we have ${\cal N}_G(v,u)=\emptyset$. Then it follows from \cref{lem:q-interp1} that 
\begin{equation}
q_{u,v} = \frac{|{\cal N}_G(u,v)|^2}{|{\cal T}_G|\cdot |{\cal T}_{\overline{G}/\{s,t\}}|} = \frac{|{\cal N}_G(u,v)|}{|{\cal T}_G|}\frac{|{\cal N}_G(u,v)|}{ |{\cal T}_{\overline{G}/\{s,t\}}|}.
\end{equation}
Then by definition, $|{\cal N}_G(u,v)|/|{\cal T}_G|$ is the probability that a 
spanning tree randomly sampled from ${\cal T}_G$ has $(u,v)$ on its unique 
$st$-path, $p_{u,v}$. Note that ${\cal T}_{\overline{G}/\{s,t\}}$ is isomorphic to 
the set of spanning trees of $\overline{G}$ that contain $\{s,t\}$, which is 
isomorphic to the set of two-component spanning forests of $G$ in which $s$ and $t$ 
are in separate components. Similarly, ${\cal N}_G(u,v)$ is isomorphic (by removal 
of edge $\{u,v\}$) to the set of two-component spanning forests of $G$ that have $s$ 
and $t$ in separate components, and $(u,v)$ in the unique $st$-cut in its 
complement. Thus $|{\cal N}_G(u,v)|/|{\cal T}_{\overline{G}/\{s,t\}}|$ is $p_{u,v}'$.
\end{proof}

By a simple inductive argument, any series-parallel graph is planar. Given an implicit planar embedding of $G$, we let $G^\dagger$ denote its planar dual, defined as follows.
\begin{definition}[Dual Graph]
For a planar embedded graph $G$, its dual is the graph $G^\dagger$ with vertex set, $V(G^{\dagger})$, defined as precisely the \emph{faces} of $G$ (with respect to its implicit fixed embedding), and an edge $\{f,f'\}$ between a pair of faces $f,f'\in V(G^\dagger)$ if and only if the faces $f$ and $f'$ are separated by an edge $e\in E(G)$, in which case, we write $e=\{f,f'\}^\dagger$. If $e=\{u,v\}$, and $f'$ is the face on the clockwise side of $(u,v)$ (that is, to the right of $(u,v)$ when $(u,v)$ points up), then we write $(u,v)^\dagger = (f,f')$. 
\end{definition}

We will consider a graph $(\overline{G}^\dagger)\setminus\{s',t'\}$ obtained by adding an edge $\{s,t\}$ to $G$ (to obtain $\overline{G}$), taking the dual, and then removing the edge $\{s',t'\}=\{s,t\}^\dagger$. The following figure shows an example of such a graph. We have also given each edge in $G$ an orientation, in order to show the corresponding orientations in the dual. Note that this example graph is planar, but not series-parallel. 
\begin{center}
\begin{tikzpicture}

\filldraw (-1.5,0) circle (.1);		\node at (-1.5,.25) {$s$};
\draw[->] (-1.5,0) -- (-.1,.5);
\draw[->] (-1.5,0) -- (-.1,-.5);
\filldraw (0,.5) circle (.1);
\draw[->] (0,-.5) -- (0,.4);
\filldraw (0,-.5) circle (.1);
\draw[->] (.1,.5) -- (1.4,0);
\draw[->] (.1,-.5) -- (1.4,0);
\filldraw (1.5,0) circle (.1); 		\node at (1.5,.25) {$t$};

\filldraw[gray] (0,-1.5) circle (.1); \node at (.25,-1.5) {$t'$};
\draw[<-,gray] (0,-1.4) -- (.5,-.1);
\draw[<-,gray] (0,-1.4) -- (-.5,-.1);
\filldraw[gray] (.5,0) circle (.1);
\draw[->,gray] (-.5,0) -- (.4,0);
\filldraw[gray] (-.5,0) circle (.1);
\draw[<-,gray] (.5,.1) -- (0,1.4);
\draw[<-,gray] (-.5,.1) -- (0,1.4);
\filldraw[gray] (0,1.5) circle (.1); \node at (.25,1.5) {$s'$};

\draw (2,.5)--(3,.5); \node at (3.5,.5) {$G$};
\draw[gray] (2,-.5)--(3,-.5); \node at (4.45,-.5) {$(\overline{G}^\dagger)\setminus\{s',t'\}$};

\end{tikzpicture}
\end{center}

\begin{lemma}
Suppose $G$ is a series-parallel graph. 
Then $q_{u,v}=p_{u,v} p_{u,v}^{\dagger}$, where $p_{u,v}$ is the probability that a 
uniformly sampled spanning tree of $G$ has $(u,v)$ on its unique $st$-path, and 
$p_{u,v}^\dagger$ is the probability that a uniformly sampled spanning tree 
of $(\overline{G}^\dagger)\setminus\{s',t'\}$ has $(u,v)^\dagger$ on its 
unique $s't'$-path, where $(s',t')=(s,t)^\dagger$. 
\end{lemma}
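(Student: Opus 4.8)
The plan is to reduce the claim to the single identity $p_{u,v}' = p_{u,v}^\dagger$, where $p_{u,v}'$ is exactly the quantity already appearing in \cref{lem:flow-series-parallel}: that lemma gives $q_{u,v}=p_{u,v}\cdot p_{u,v}'$, with $p_{u,v}'$ the probability that a uniformly random two-component spanning forest of $G$ separating $s$ and $t$ has $(u,v)$ in its (oriented) $st$-cut. So it suffices to identify this probability with the probability that a uniformly random spanning tree of $H\coloneqq(\overline{G}^\dagger)\setminus\{s',t'\}$ has $(u,v)^\dagger$ on its unique $s't'$-path. As in \cref{lem:q-interp1} I would assume $\{s,t\}\notin E(G)$ (the other case handled identically after relabeling) and fix the natural $st$-planar embedding of $G$ with $s$ and $t$ on the outer face, so that $\overline{G}=G+\{s,t\}$, its dual $\overline{G}^\dagger$, and $\{s',t'\}=\{s,t\}^\dagger$ are all well defined.

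The first step is a counting bijection via planar duality. Complementation $T\mapsto T^\ast\coloneqq(E(\overline{G})\setminus T)^\dagger$ is a bijection from spanning trees of $\overline{G}$ to spanning trees of $\overline{G}^\dagger$, and $\{s,t\}\in T$ if and only if $\{s',t'\}\notin T^\ast$. On one side, the spanning trees $T$ of $\overline{G}$ containing $\{s,t\}$ are exactly the sets $F+\{s,t\}$ for $F$ a two-component spanning forest of $G$ separating $s$ and $t$; on the other side, spanning trees of $\overline{G}^\dagger$ avoiding $\{s',t'\}$ are exactly the spanning trees of $H$ (removing a non-tree edge leaves a spanning tree of the deletion, and $\{s',t'\}$ is not a bridge of $\overline{G}^\dagger$ since $\{s,t\}$ is not a loop of $\overline{G}$). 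Hence $F\mapsto(F+\{s,t\})^\ast$ is a bijection between the two sample spaces, and in particular $|\mathcal{T}_H|=|\mathcal{T}_{\overline{G}/\{s,t\}}|$, consistent with \cref{lem:q-interp1}.

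The second step is to match the two events under this bijection. For $T\ni\{s,t\}$ with $F=T\setminus\{s,t\}$, the fundamental cut of the tree edge $\{s,t\}$ relative to $T$ (in $\overline{G}$) is precisely $\{s,t\}$ together with the edges of $G$ crossing the partition of $V$ induced by $F$, i.e. $\{s,t\}$ together with the $st$-cut of $F$. By the standard duality between fundamental cuts of tree edges and fundamental cycles of non-tree edges, this cut maps under $\dagger$ to the fundamental cycle of $\{s',t'\}$ relative to $T^\ast$, which is $\{s',t'\}$ together with the unique $s't'$-path of $T^\ast$ inside $H$. Therefore $\{u,v\}$ lies in the $st$-cut of $F$ if and only if $\{u,v\}^\dagger$ lies on the $s't'$-path of $T^\ast$; it then remains only to check that the orientations match, after which \cref{lem:flow-series-parallel} yields $q_{u,v}=p_{u,v}\cdot p_{u,v}'=p_{u,v}\cdot p_{u,v}^\dagger$.

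\textbf{The main obstacle} is precisely this orientation bookkeeping: I must verify that $(u,v)$ lies in the \emph{oriented} $st$-cut of $F$ (i.e. $u$ on the $s$-side) exactly when $(u,v)^\dagger$ lies on the $s't'$-path of $T^\ast$ oriented from $s'$ to $t'$, where by definition $(u,v)^\dagger=(f,f')$ with $f'$ the face clockwise of $(u,v)$. My plan is to orient the fundamental cycle of $\{s',t'\}$ so that it traverses $\{s',t'\}$ in a fixed direction, and then to show, by unwinding the $\dagger$ convention on the local picture of a single crossing edge in the fixed $st$-planar embedding, that the edge-by-edge correspondence between the oriented fundamental cut $\{s,t\}\cup(st\text{-cut of }F)$ (oriented from the $s$-side to the $t$-side) and the oriented dual cycle is sign-preserving. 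This is purely combinatorial/topological and contains no analysis, but it is the step that needs genuine care. Granting it, $p_{u,v}'=|\mathcal{N}_H((u,v)^\dagger)|/|\mathcal{T}_H|=p_{u,v}^\dagger$, and the lemma follows.
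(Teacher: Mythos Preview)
Your proposal is correct and follows the same overall strategy as the paper---reduce to $p_{u,v}'=p_{u,v}^\dagger$ via \cref{lem:flow-series-parallel}, then establish this equality using the tree/co-tree planar duality bijection---but the bijection you use for the numerator is organized differently. The paper works directly with $\mathcal{N}_G(u,v)$ (spanning trees of $G$ with $(u,v)$ on the $st$-path): it first maps $T\mapsto T^\dagger$ into spanning trees of $\overline{G}^\dagger$ containing $\{s',t'\}$ but not $\{u',v'\}$, and then performs an edge swap (delete $\{s',t'\}$, add $\{u',v'\}$) in the dual to land in $\mathcal{N}_{(\overline{G}^\dagger)\setminus\{s',t'\}}((u,v)^\dagger)$. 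You instead start from the forest description of $p_{u,v}'$ already given in \cref{lem:flow-series-parallel}, add the edge $\{s,t\}$, and apply duality once, reading off the event correspondence via the standard fundamental-cut/fundamental-cycle duality. Your route is a single bijection using a textbook fact, whereas the paper composes two bijections with an ad hoc swap; on the other hand, the paper's argument sidesteps some of the orientation bookkeeping you correctly flag as the delicate point (though the paper's treatment of orientations is itself fairly terse). Both arguments are equally valid and of comparable length.
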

\begin{proof}
We will make use of the following observation: Fix any planar graph $G$ in which $s$
and $t$ are on the same face. For every $T\in {\cal T}_G$, let $T^\dagger$ be the 
subgraph of $G^\dagger$ such that 
$e\in E(G)\setminus E(T)$ if and only if $e^\dagger\in E(T^\dagger)$.  
Then $T\mapsto T^\dagger$ is a bijection between ${\cal T}_G$ and
${\cal T}_{G^\dagger}$. To see this, note that there is a cycle in $T$ if and only 
if $T^\dagger$ has multiple components -- the subgraph inside the cycle is 
disconnected from the subgraph outside the cycle. And similarly, $T$ has multiple 
connected components if and only if there is a cycle in $T^\dagger$. 
Thus $T^\dagger\in {\cal T}_{G^\dagger}$, and the map is clearly a bijection.

Since $\overline{G}^\dagger\setminus\{s',t'\} = (\overline{G}/\{s,t\})^\dagger$, 
the above observation implies a bijection from ${\cal T}_{\overline{G}/\{s,t\}}$ to ${\cal T}_{(\overline{G}^\dagger)\setminus\{s',t'\}}$, establishing 
\begin{equation}\label{eq:tree-bij}
|{\cal T}_{\overline{G}/\{s,t\}}|=|{\cal T}_{(\overline{G}^\dagger)\setminus\{s',t'\}}|.
\end{equation} 

Next, we will establish 
\begin{equation}
|{\cal N}_G(u,v)|=|{\cal N}_{(\overline{G}^\dagger)\setminus\{s',t'\}}((u,v)^\dagger)|
\end{equation}
by exhibiting a bijection, and, combined with \cref{eq:tree-bij}, the lemma 
statement easily follows from \cref{lem:flow-series-parallel}. First, note that ${\cal N}_{\overline{G}}(u,v)={\cal N}_G(u,v)$, 
since any tree in ${\cal N}_{\overline{G}}(u,v)$ must not contain $\{s,t\}$, or the 
only edge on the $st$-path would be $(s,t)$. The map $\phi(T)=T^\dagger$ 
bijectively maps ${\cal N}_{\overline{G}}(u,v)$ to the set of spanning trees of $\overline{G}^\dagger$ that contain $\{s',t'\}$ but not $\{u,v\}^\dagger=\{u',v'\}$, 
and moreover, $(s',t')$ is on the unique $u'v'$-path in the tree. This is a 
bijection by our earlier observation. Finally, define $\phi'(T^\dagger)$ to 
be $T^\dagger$ but with $\{s',t'\}$ removed, and $\{u',v'\}$ added. Then we claim 
that $\phi'\circ\phi$ is a bijection from ${\cal N}_{\overline{G}}(u,v)$ 
to ${\cal N}_{\overline{G}^\dagger}(u',v')={\cal N}_{\overline{G}^\dagger\setminus\{s',t'\}}(u',v')$. 
\end{proof}

\end{document}